\newcommand{\<}{\langle}
\renewcommand{\>}{\rangle}
\newcommand\be{\begin{equation}}
\newcommand\ee{\end{equation}}
\newcommand\ot{\otimes}
\newcommand\C{\mathbb{C}}
\newcommand\cH{\mathcal{H}}
\newcommand\sfS{{\mathsf S}}
\newcommand\sfA{{\mathsf A}}
\newcommand\sfX{{\mathsf X}}
\newcommand\hcal{\cH}
\newcommand\bea{\begin{array}}
\newcommand\eea{\end{array}}
\newcommand\ben{\begin{eqnarray}}
\newcommand\een{\end{eqnarray}}
\newcommand\bei{\begin{itemize}}
\newcommand\eei{\end{itemize}}
\newcommand\bee{\begin{enumerate}}
\newcommand\eee{\end{enumerate}}
\def\ot{\otimes}
\def\bei{\begin{itemize}}
\def\eei{\end{itemize}}
\newtheorem{lemma}{Lemma}
\newtheorem{corollary}{Corollary}
\newtheorem{theorem}{Theorem}
\newtheorem{example}{Example}
\newtheorem{proposition}{Proposition}
\newtheorem{definition}{Definition}
\par\addvspace{\medskipamount}\noindent\textbf{Examples.}\hspace{1ex}}%
\newenvironment{remark}%
{\par\addvspace{\medskipamount}\noindent\textbf{Remark.}\hspace{1ex}}%
{\par\medskip}
\begin{document}
\reversemarginpar

\title{Distillation of entanglement by projection on permutationally invariant subspaces}

\author{Miko{\l}aj Czechlewski$^1$, Andrzej Grudka$^{1,3}$, Micha{\l} Horodecki$^{2,3}$, Marek Mozrzymas$^4$ and Micha{\l} Studzi{\'n}ski$^{2,3}$}

\affiliation{$^1$Faculty of Physics, Adam Mickiewicz University, 61-614 Pozna\'{n}, Poland\\
$^2$Institute for Theoretical Physics and Astrophysics,
University of Gda{\'n}sk, 80-952 Gda{\'n}sk, Poland\\
$^3$National Quantum Information Centre of Gda\'{n}sk, 81-824 Sopot, Poland\\
$^4$Institute for Theoretical Physics,
University of Wroc{\l}aw, 50-204 Wroc{\l}aw, Poland}

\date{\today}

\pacs{03.67.Lx, 42.50.Dv}

\begin{abstract}
We consider distillation of entanglement from two qubit states which are mixtures of three mutually orthogonal states: two pure entangled states and one pure product state. We distill entanglement from such states by projecting $n$ copies of the state on permutationally invariant subspace and then applying one-way hashing protocol. We find analytical expressions for the rate of the protocol. We also generalize this method to higher dimensional systems. To get analytical expression for two qubit case, we faced a mathematical problem of diagonalizing a family of matrices enjoying some symmetries w.r.t. to symmetric group. We have solved this problem in two ways: (i) directly, by use of Schur-Weyl decomposition and Young symmetrizers (ii) showing  that the problem is equivalent to a problem of diagonalizing adjacency matrices in 
a particular instance of a so called algebraic association scheme.
\end{abstract}

\maketitle 

\section{Introduction}
Pure entanglement is fundamental resource in quantum information \cite{Bennett5,Bennett6,Ekert1}. However, usually the parties who want to perform some communication task have access to mixed entanglement. In such a case in order to obtain useful entanglement they should be able to distill pure entanglement -- usually in the form of maximally entangled pairs. Procedures which allow to distill entanglement are called distillation protocols and are realized by means of local operations and classical communication \cite{Bennett1,Bennett3,PhysRevLett.77.2818,Dur}. Let us suppose that two parties -- Alice and Bob -- share $n$ copies of a state $\rho$, they process them by a protocol $P$ and obtain $m$ copies of maximally entangled pairs. The ratio $\lim_{n\rightarrow\infty}\frac{m}{n}$ is called the rate of the protocol $P$ with respect to state $\rho$. The maximum of the rate over all distillation protocols is called distillable entanglement of a state $\rho$. Distillable entanglement is difficult to calculate and is only known for certain states, i.e., bound entangled states or maximally correlated state \cite{Horodecki7, Rains1, Eisert1, Chen1, Hamieh1, Hiroshima1}. However, one can always find a lower bound on distillable entanglement by calculating rate of a particular protocol.

In \cite{PhysRevA.80.014303} authors introduced an efficient protocol for two qubit states which are mixtures of one pure entangled state and one pure product state which are orthogonal to each other. In the present manuscript we apply this protocol to two qubit states which are mixtures of three mutually orthogonal states: two pure entangled states and one pure product state. We find analytical expressions for the rate of the protocol. Moreover, we generalize the protocol to entangled state of qudits, i.e., $d$-dimensional quantum system.

To obtain the analytical expression for two-qubit case, we face a problem of diagonalizing a 
family of matrices, which arise from projecting $n$ copies of a state
diagonal in a $\frac{1}{\sqrt2}(|0\>\pm |1\>)$ basis onto a subspaces spanned by vectors 
of fixed number of $1$'s in computational basis. We present two solutions to the problem.
The first solution exploits group theoretical methods such as Schur-Weyl decomposition and Young symmetrizers.
The second method refers to so called {\it algebraic association schemes} \cite{Bannai}. It turns out 
that our problem is directly related to diagonalization of so called {\it adjacency} matrices in a 
particular algebraic association acheme called {\it Johnson scheme}, whose solution is known.

The paper is organized as follows. In Section II we describe the basic protocol for entangled states of two qubits. The protocol consists of two parts: measurement of $n$ copies of the state and application of one way hashing protocol to the post-measurement state. In Section III we generalize this protocol to higher dimensional systems. In Section IV we calculate coherent information of the post-measurement state of Section II, i.e. the rate of one-way hashing protocol. The main effort is here to find analytically the eigenvalues 
of a family of matrices.  In Sec. \ref{subsec:proof} we obtain the form of eigenvalues via two different methods:  in Sec. \ref{subsec:eig-A-Young} via group theoretic methods, and in Sec. \ref{subsubsec:eig-A-MM} via algebraic association schemes. In Section V we present rates of the protocol for various states of Section II.

\section{Basic protocol for entangled states of two qubits}

Let Alice and Bob share $N=2^k$ copies of a state
\begin{eqnarray}
& \rho_{AB}=x(q|\Phi^{+}(\alpha)\rangle\langle\Phi^{+}(\alpha)|_{AB}+(1-q)|\Phi^{-}(\alpha)\rangle\langle\Phi^{-}(\alpha)|_{AB})+\nonumber\\
& +(1-x)|01\rangle\langle01|_{AB},
\label{MA:1}
\end{eqnarray}
where
\begin{eqnarray}
|\Phi^{\pm}(\alpha)\rangle_{AB}=\sqrt{\alpha}|00\rangle_{AB}\pm \sqrt{1-\alpha}|11\rangle_{AB}.
\end{eqnarray}
First Alice and Bob project their parts of the state on subspace spanned by vectors with definite number of $0$'s and $1$'s. If Alice finds the same numbers of $0$'s and $1$'s as Bob then they perform one-way hashing protocol. If Alice finds different numbers of $0$'s and $1$'s than Bob then they divide $N$ pairs of qubits into two equal groups and perform analogous measurements on each group independently. 
The probability that Alice and Bob succeed in the first step, i.e., Alice finds the same numbers of $0$'s and $1$'s as Bob, is equal to the probability of having $2^k$ states $q|\Phi^{+}(\alpha)\rangle\langle\Phi^{+}(\alpha)|_{AB}+(1-q)|\Phi^{-}(\alpha)\rangle\langle\Phi^{-}(\alpha)|_{AB}$ because terms containing states $|01\rangle\langle01|$ are not in the subspace on which Alice and Bob project the state, i.e, it is equal to $p(S_1)=x^{2^k}$. We assume that if they succeed in the first step they can distill entanglement from the post-measurement state at partial rate $R_1$. If in the first step Alice and Bob do not succeed, then in the second step Alice and Bob can succeed at most for one group of pairs of qubits. The probability that Alice and Bob succeed for one group of pairs of qubits in the second step and they do not succeed in the first step is equal to $p(S_2,F_1)=2p(s_2)p(f_2)$, where $p(s_2)=x^{2^{k-1}}$ is probability of having $2^{k-1}$ states  $q|\Phi^{+}(\alpha)\rangle\langle\Phi^{+}(\alpha)|_{AB}+(1-q)|\Phi^{-}(\alpha)\rangle\langle\Phi^{-}(\alpha)|_{AB}$ in a group of $2^{k-1}$ pairs of qubits, $p(f_2)=1-x^{2^{k-1}}$ is probablity of not having $2^{k-1}$ states  $q|\Phi^{+}(\alpha)\rangle\langle\Phi^{+}(\alpha)|_{AB}+(1-q)|\Phi^{-}(\alpha)\rangle\langle\Phi^{-}(\alpha)|_{AB}$ in a group of $2^{k-1}$ pairs of qubits. The factor $2$ stands because Alice and Bob can succeed for the first or the second group of pairs of qubits. We assume that if they succeed in the second step they can distill entanglement from the post-measurement state at partial rate $R_2$. Moreover, Alice and Bob divide a group of $\frac{N}{2}$ pairs of qubits for which they did not succeed into two equal groups and perform analogous measurements on each group independently. They repeat the procedure until $k-1$-th step (there is no sense to perform the measurement on one pair of qubits).  In general the probability that Alice and Bob succeed in the $i$-th step for one of two groups of $2^{k-i+1}$ pairs of qubits and they did not succeed in the $i-1$-th step for a group of $2^{k-i+2}$ pairs of qubits (they also did not succeed in all previous steps for any group of qubits containing the latter group) is equal to $p(S_i,F_{i-1})=2p(s_i)p(f_i)$, where $p(s_i)=x^{2^{k-i+1}}$ is probability of having $2^{k-i+1}$ states  $q|\Phi^{+}(\alpha)\rangle\langle\Phi^{+}(\alpha)|_{AB}+(1-q)|\Phi^{-}(\alpha)\rangle\langle\Phi^{-}(\alpha)|_{AB}$ in a group of $2^{k-i+1}$ pairs of qubits, $p(f_i)=1-x^{2^{k-i+1}}$ is probablity of not having $2^{k-i+1}$ states  $q|\Phi^{+}(\alpha)\rangle\langle\Phi^{+}(\alpha)|_{AB}+(1-q)|\Phi^{-}(\alpha)\rangle\langle\Phi^{-}(\alpha)|_{AB}$ in a group of $2^{k-i+1}$ pairs of qubits. 
Hence the total rate of the protocol is 
\begin{eqnarray}
& R =\frac{1}{2^k}(p(S_1)R_1+p(S_2,F_{1})R_2+...+\nonumber\\
& +2^{i-2}p(S_i,F_{i-1})R_i+...)=\nonumber\\
& =\frac{1}{2^k}(p(s_1)R_1+2p(s_2)p(f_2)R_2+...+\nonumber\\
& +2^{i-2}2p(s_i)p(f_i)R_i+...)=\nonumber\\
& =\frac{1}{2^k}(x^{2^k}R_1+2(1-x^{2^{k-1}})x^{2^{k-1}}R_2+...+\nonumber\\
& +2^{i-2}2(1-x^{2^{k-i+1}})x^{2^{k-i+1}}R_i+...)=\nonumber\\
& =\frac{1}{2^k}(x^{2^k}R_1+\sum_{i=2}^{k-1}2^{i-1}(1-x^{2^{k-i+1}})x^{2^{k-i+1}}R_i).
\end{eqnarray}
The factor $\frac{1}{2^k}$ stands because Alice and Bob start with $N=2^k$ copies of a state $\rho_{AB}$ and the factor $2^{i-2}$ stands because in the $i-1$-th step Alice and Bob could have $2^{i-2}$ groups of pairs of qubits for which they did not succeed.
It is convenient to write the total rate of the protocol in the following form
\begin{eqnarray}
& R=\frac{1}{2^k}\sum_{i=1}^{k-1}x^{2^{k-i+1}}(2^{i-1}R_{i}-2^{i}R_{i+1}),
\label{MA:4}
\end{eqnarray}
with $R_k=0$.

Let us now calculate partial rates $R_i$. The probability that Alice finds $l$ $0$'s and $2^{k-i+1}-l$ $1$'s in a group of $n=2^{k-i+1}$ pairs of qubits under the condition that Alice finds the same numbers of $0$'s and $1$'s as Bob is equal to
\begin{equation}
p(l|S_i)=\alpha^l(1-\alpha)^{(2^{k-i+1}-l)}{2^{k-i+1} \choose l}
\end{equation}
and the post-measurement state is 

\be
\rho_{lAB}^{(n)}=\frac{P_{lA}\otimes P_{lB}\rho_{AB}^{\ot n} P_{lA}\otimes P_{lB}}{
\text{Tr}(P_{lA}\otimes P_{lB}\rho_{AB}^{\ot n} P_{lA}\otimes P_{lB})}
\ee
where $P_l$ are projectors which project onto a subspace of
$(\C^2)^{\ot n}$ spanned by all standard basis vectors, having $l$ $1$'s and $n-l$ $0$'s,
such as $|\underbrace{0\ldots 0}_{n-l}\underbrace{1\ldots 1}_l\>$. Note that the post-measurement state does not depend on results of previous measurements \cite{PhysRevA.80.014303}.
Having large number of copies of a state $\rho^{(n)}_{lAB}$ they can apply one-way hashing protocol and distill entanglement at rate equal to coherent information $I_c$ of state $\rho^{(n)}_{lAB}$, where $I_c=S(\rho^{(n)}_{lB})-S(\rho^{(n)}_{lAB})$ \cite{Devetak1,Devetak2}. Hence the partial rates are 
\begin{equation}
R_i=\sum_{l=0}^{2^{k-i+1}}\alpha^l(1-\alpha)^{(2^{k-i+1}-l)}{2^{k-i+1} \choose l}(S(\rho^{(2^{k-i+1})}_{lB})-S(\rho^{(2^{k-i+1})}_{lAB})).
\end{equation}

\section{Generalized protocols for entangled states of qudits}

Let us consider the following state
\begin{eqnarray}
\rho_{AB}=x|\Phi_d^+\rangle\langle\Phi_d^+|_{AB}+(1-x)|01\rangle\langle01|_{AB},
\label{MA:8}
\end{eqnarray}
where
\begin{eqnarray}
|\Phi_d^+\rangle_{AB}=\frac{1}{\sqrt{d}}\sum_{i=0}^{d-1}|ii\rangle_{AB}.
\label{MA:9}
\end{eqnarray}
Let Alice and Bob apply a similar protocol as before, i.e., in successive steps they project their parts of $n=2^{k-i+1}$ copies of pairs of qudits on a subspace spanned by vectors with definite numbers of $0$'s, $1$'s, $2$'s and so on. If Alice and Bob measure the same numbers of $0$'s, $1$'s, $2$'s and so on, then the post-measurement state is maximally entangled state of Schmidt rank
\begin{eqnarray}
r_l=\frac{2^{k-i+1}!}{l_0!l_1!...l_{d-1}!}
\end{eqnarray}
where $l_0, l_1,... l_{d-1}$ are numbers of $0$'s, $1$'s, ...$d-1$'s found by each party. 
Hence the partial rates are given by
\begin{eqnarray}
& R_i=\frac{1}{d^{2^{k-i+1}}} \nonumber\\
& \sum_{l_0,l_1,...l_{d-1}}\frac{2^{k-i+1}!}{l_0!l_1!...l_{d-1}!}\log(\frac{2^{k-i+1}!}{l_0!l_1!...l_{d-1}!}).
\end{eqnarray}
where the sum runs over $0\leq l_0,l_1,...l_{d-1}\leq2^{k-i+1}$ satisfying the constraint $l_0+l_1+...l_{d-1}=2^{k-i+1}$.
However there exists a protocol which achieves higher rates. It happens that projective measurements performed by Alice and Bob are too invasive, i.e., they destroy too much entanglement.
Let us instead define projectors
\begin{eqnarray}
P_{lA(B)}=(P^{0}_{A(B)})^{\otimes l}(P^{\bar{0}}_{A(B)})^{\otimes n-l}+\text{permutations},
\end{eqnarray}
where
\begin{eqnarray}
& P^{0}_{A(B)}=|0\rangle\langle0|_{A(B)}\nonumber\\
& P^{\bar{0}}_{A(B)}=I_{A(B)}-|0\rangle\langle0|_{A(B)},
\end{eqnarray}
and let in successive steps both Alice and Bob perform mesurements given by these projectors on $n=2^{k-i+1}$ copies of pairs of qudits.
These projectors discriminate the number of $|\Phi_d^+\rangle_{AB}$ states versus the number of $|01\rangle_{AB}$ states as well as projectors which project on a subspace spanned by vectors with definite numbers of $0$'s, $1$'s, $2$'s and so on and they are less invasive. If Alice and Bob measure the same numbers of $0$'s, i.e., they both obtain $P_{lA(B)}$ as a result of the measurement, then the post-measurement state is maximally entangled state of Schmidt rank
\begin{eqnarray}
r_l={2^{k-i+1} \choose l} (d-1)^{(2^{k-i+1}-l)}.
\end{eqnarray}
Hence the partial rates are 
\begin{eqnarray}
& R_i=\frac{1}{d^{2^{k-i+1}}} \nonumber\\
& \sum_{l}{2^{k-i+1} \choose l} (d-1)^{(2^{k-i+1}-l)} \nonumber\\
& \log({2^{k-i+1} \choose l} (d-1)^{(2^{k-i+1}-l)}).
\label{MA:14}
\end{eqnarray}
Moreover the total rate of the protocol is given by Eq. \ref{MA:4} with the sum extended from $1$ to $k$ and $R_{k+1}=0$, because now Alice and Bob can distill entanglement by performing measurement even on a single copy.

As a further example let us consider the following state
\begin{eqnarray}
\rho_{AB}=x|\Phi_d^+\rangle\langle\Phi_d^+|_{AB}+(1-x)\sum_{i, \text{even}}\frac{2}{d}|ii+1\rangle\langle ii+1|_{AB}.
\label{MA:15}
\end{eqnarray}
where $|\Phi_d^+\rangle_{AB}$ is given by Eq. \ref{MA:9}.
Let us define projectors
\begin{eqnarray}
P_{lA(B)}=(P^{\text{even}}_{A(B)})^{\otimes l}(P^{\text{odd}}_{A(B)})^{\otimes (n-l)}+\text{permutations},
\end{eqnarray}
where
\begin{eqnarray}
& P^{\text{even}}_{A(B)}=\sum_{i,\text{even}}|i\rangle\langle i|_{A(B)}\nonumber\\
& P^{\text{odd}}_{A(B)}=\sum_{i,\text{odd}}|i\rangle\langle i|_{A(B)},
\end{eqnarray}
and let  both Alice and Bob perform measurements on $n=2^{k-i+1}$ copies of pairs of qudits.
If they measure the same numbers of even $i$'s, i.e., both Alice and Bob obtain $P_{lA(B)}$ as a result of the measurement, then the post-measurement state is maximally entangled state of Schmidt rank
\begin{eqnarray}
r_l={2^{k-i+1} \choose l} (d/2)^{l}(d/2)^{(2^{k-i+1}-l)}. 
\end{eqnarray}
Hence the partial rates are 
\begin{eqnarray}
& R_i=\frac{1}{d^{2^{k-i+1}}} \nonumber\\
& \sum_{l}{2^{k-i+1} \choose l} (d/2)^{l}(d/2)^{(2^{k-i+1}-l)} \nonumber\\
& \log({2^{k-i+1} \choose l} (d/2)^{l}(d/2)^{(2^{k-i+1}-l)}).
\label{MA:19}
\end{eqnarray}
As in the previous example the total rate of the protocol is given by Eq. \ref{MA:4} with the sum extended from $1$ to $k$ and  $R_{k+1}=0$.

Both protocols also work for states given by Eqs. \ref{MA:8} and \ref{MA:15} with a pure state $|\Phi_d^+\rangle\langle\Phi_d^+|_{AB}$ replaced by a mixed state $\sum_{k=0}^{d-1}q_kU^{k}_{A}|\Phi_d^+\rangle\langle\Phi_d^+|_{AB}U^{k\dagger}_{A}$, where $U^{k}_{A}=\sum_{l=0}^{d-1}\exp(\frac{i2\pi k l}{d})|l\rangle\langle l|_{A}$. The partial rates are given by Eqs. \ref{MA:14} and \ref{MA:19} with logarithms replaced by coherent information of the post-measurement state.

\section{Calculation of coherent information}

\subsection{Formulation of the problem}
\label{subsec:form}

We want to calculate coherent information of a state
\be
\rho_{lAB}^{(n)}=\frac{P_{lA}\otimes P_{lB}\rho_{AB}^{\ot n} P_{lA}\otimes P_{lB}}{
\text{Tr}(P_{lA}\otimes P_{lB}\rho_{AB}^{\ot n} P_{lA}\otimes P_{lB})}
\ee
Let us write $\rho^{\ot n}$ in the following form
\begin{eqnarray}
& \rho^{\otimes n}=x^{n}\rho_{AB}'^{\otimes n}+\nonumber \\
& x^{n-1}(1-x)[\rho_{AB}'^{\otimes(n-1)} |01\rangle\langle 01|_{AB}+\text{permutations}]+\nonumber \\
& +x^{n-2}(1-x)^{2}[\rho_{AB}'^{\otimes(n-2)}|01\rangle\langle 01|_{AB}^{\otimes 2}+\text{permutations}]\nonumber\\
& \dots+(1-x)^{n}|01\rangle\langle 01|_{AB}^{\otimes n},
\end{eqnarray}
where
\be
\rho'_{AB}=q|\Phi^{+}(\alpha)\rangle\langle\Phi^{+}(\alpha)|_{AB}+(1-q)|\Phi^{-}(\alpha)\rangle\langle\Phi^{-}(\alpha)|_{AB}. 
\ee
As noted before terms containing $|01\rangle\langle01|$ are not in the subspace on which Alice and Bob project the state. Hence, we have
\be
\rho_{lAB}^{(n)}=\frac{P_{lA}\otimes P_{lB}\rho_{AB}'^{\ot n} P_{lA}\otimes P_{lB}}{
\text{Tr}(P_{lA}\otimes P_{lB}\rho_{AB}'^{\ot n} P_{lA}\otimes P_{lB})}
\ee
Because the state of Bob's subsystem is an equal mixture of all standard basis vectors having $l$ $1$'s and $n-l$ $0$'s its entropy is equal to
\be
S(\rho^{n,l}_B)=\log{n \choose l}.
\ee
In order to calculate entropy of the whole system we note that it is equal to entropy of a simpler state (we denote it by $\rho_l^{(n)}$ without subscript $AB$)
\be
\rho_l^{(n)}=P_l\rho^{\ot n} P_l. 
\ee
where
\be
\rho=p|+\>\<+|+(1-p) \frac{I}{2}
\ee
with $|\pm\>=\frac{1}{\sqrt{2}}(|0\>\pm|1\>)$ and $p=2q-1$
Now our task is to find eigenvalues (together with their multiplicities) of the following
(subnormalized) state
\be
\rho_l^{(n)}=P_l\rho^{\ot n} P_l. 
\ee

\subsection{Statement of the main result}
\label{subsec:main}

Before we formulate the main result, we will prove the following lemma:
\begin{lemma}
\label{lem:1}
The matrix $\rho_l^{(n)}$ can be written as follows
\be
\rho_l^{(n)}=\frac{1}{2^n}\sum_k p^{2k} A_k^{(l)}
\label{eq:rho-A}
\ee
where $A_k^{(l)}$ is operator acting on the Hilbert space $\cH_l^{(n)}$ given by
\be
A_k^{(l)} = \sum_{x,y:d(x,y)=2k} |x\>\<y|
\ee
Here $|x\>,|y\>$ are vectors from $\cH_l^{(n)}$,
and $d(x,y)$ is Hamming distance between the binary sequences $x$ and $y$.
\end{lemma}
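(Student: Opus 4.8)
The plan is to compute the matrix elements of $\rho_l^{(n)}$ directly in the computational basis of $\cH_l^{(n)}$ and to recognize them as the asserted combination of the operators $A_k^{(l)}$.

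The first step is to write the single-qubit $\rho$ in the computational basis. From $|+\>\<+|=\tfrac12(|0\>+|1\>)(\<0|+\<1|)$ and $I/2=\tfrac12(|0\>\<0|+|1\>\<1|)$ one gets the diagonal entries $\tfrac12$ and off-diagonal entries $\tfrac p2$, i.e.
\be
\<i|\rho|j\>=\tfrac12\,p^{\,d(i,j)},\qquad i,j\in\{0,1\},
\ee
where $d(i,j)\in\{0,1\}$ is the Hamming distance between the single bits $i,j$. Taking the $n$-th tensor power makes the matrix elements factorize, so for length-$n$ binary strings $x,y$
\be
\<x|\rho^{\ot n}|y\>=\prod_{i=1}^n\<x_i|\rho|y_i\>=\frac{1}{2^n}\,p^{\,\sum_i d(x_i,y_i)}=\frac{1}{2^n}\,p^{\,d(x,y)} .
\ee
Conjugating by $P_l$ simply restricts $x$ and $y$ to strings with exactly $l$ ones, hence $\<x|\rho_l^{(n)}|y\>=2^{-n}p^{\,d(x,y)}$ for all $|x\>,|y\>\in\cH_l^{(n)}$, and $\rho_l^{(n)}=\sum_{x,y\in\cH_l^{(n)}}2^{-n}p^{\,d(x,y)}|x\>\<y|$.

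The second step is the elementary parity observation that $d(x,y)$ is even whenever $x$ and $y$ have the same number of ones: if $a$ denotes the number of coordinates where $x$ has a $1$ and $y$ a $0$, and $b$ the number where $x$ has a $0$ and $y$ a $1$, then equality of the weights of $x$ and $y$ forces $a=b$, so $d(x,y)=a+b=2a$. Consequently only even powers of $p$ occur in the expansion of $\rho_l^{(n)}$, and grouping the terms according to the value $k=\tfrac12 d(x,y)$ gives
\be
\rho_l^{(n)}=\frac{1}{2^n}\sum_k p^{2k}\sum_{x,y:\,d(x,y)=2k}|x\>\<y|=\frac{1}{2^n}\sum_k p^{2k}A_k^{(l)},
\ee
which is the claim.

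I do not expect a real obstacle here: the only point that needs an argument rather than pure bookkeeping is the parity claim restricting the sum to even Hamming distances (equivalently, to even powers of $p$); everything else is a routine computation with tensor powers of a $2\times2$ matrix followed by the restriction to $\cH_l^{(n)}$.
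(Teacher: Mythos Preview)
Your proof is correct and is in fact cleaner than the paper's. The paper expands $\rho^{\otimes n}$ as a multinomial in the four single-qubit operators $P_{00},P_{01},P_{10},P_{11}$, symmetrizes over positions, and then argues term by term that after projecting onto $\cH_l^{(n)}$ only even powers of $p$ survive. You bypass this by noticing the closed form $\<i|\rho|j\>=\tfrac12\,p^{d(i,j)}$ at the single-qubit level, which tensors to $\<x|\rho^{\otimes n}|y\>=2^{-n}p^{d(x,y)}$ and makes the grouping into $A_k^{(l)}$'s immediate. Both approaches need the same parity observation (equal weight $\Rightarrow$ even Hamming distance), and you give exactly the standard argument for it. Your route is more elementary and arguably more transparent; the paper's route has the minor advantage of making the combinatorial structure of the symmetrized terms explicit, but that structure is not used anywhere else in the argument.
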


\begin{proof}
The initial state of $n$ particles is
\begin{eqnarray}
\left[\frac{p}{2}(|0\rangle+|1\rangle)(\langle0|+\langle1|)+\frac{1-p}{2}(|1\rangle\langle1|+|0\rangle\langle0|)\right]^{\otimes n}
\end{eqnarray}
and we project it on a subspace with definite number of $1's$. Hence both the state and the measurement operators are permutationally invariant. We can substitute
\be
\begin{split}
 P_{00}&=|0\rangle \langle 0|,\quad
 P_{01}=|0\rangle \langle 1| \\
 P_{10}&=|1\rangle \langle 0| ,\quad
 P_{11}=|1\rangle \langle 1|
\end{split}
\ee
and obtain the following expression which corresponds to the initial state
\begin{eqnarray}
& \left[\frac{p}{2}(P_{01}+P_{10})+\frac{1}{2}(P_{11}+P_{00})\right]^{\otimes n}=\nonumber\\
& =(\frac{1}{2})^{n}\sum_{k=0}^{n}p^{k}\hat{\sfS}[(P_{01}+P_{10})^{\otimes k}(P_{11}+P_{00})^{\otimes (n-k)}],\nonumber\\
\end{eqnarray}
where $\hat{\sfS}[...]$ denotes symmetrization. Here by symmetrization we mean the sum of all different permutations, e.g. $\hat{S}[a^{\otimes 2}\otimes b]=a\otimes a\otimes b+a\otimes b\otimes a+b\otimes a\otimes a$.
We are interested in a coefficient of $p^{k}$ which we can write as
\begin{eqnarray}
& \hat{\sfS}[(P_{01}+P_{10})^{\otimes k}(P_{01}+P_{10})^{\otimes (n-k)}]=\nonumber\\
&=\sum_{i=0}^{k}\sum_{j=0}^{n-k}\hat{\sfS}[P_{01}^{\otimes i}\otimes P_{10}^{\otimes (k-i)}\otimes P_{11}^{\otimes j}\otimes P_{00}^{\otimes (n-k-j)}].
\end{eqnarray}
The Hamming weight of this coefficient is determined by $P_{01}^{i}\otimes P_{10}^{k-i}$ and is equal to $k$. Because we project on subspace with $l$ $1$'s and $n-l$ $0$'s we have after the projection $i+j=l$ and $k-i+j=l$. Hence, all terms with odd $k$ vanish and moreover we can write $k=2i$.
\end{proof}

Here is the theorem which provides formula for eigenvalues of the
matrix $\rho_l^{(n)}$.
\begin{theorem}
\label{thm:waw}
The eigenvalues of $\rho_l^{(n)}$ are given by
\be
\lambda^{n}_l(j)=\frac{1}{2^n} \sum_{k=0}^lp^{2k} \alpha_k(j),
\ee
where $j=0,\ldots,\min\{l,n-l\}$, and $\alpha_k(j)$ are
eigenvalues of operators $A_k^{(l)}$.
Two alternative forms of those eigenvalues are   given by theorems
\ref{thm:eig-A-Young} and  \ref{thm:eig-A-MM} below.
The multiplicities of the eigenvalues are given by
\be
f_j=\binom{n}{j}\frac{n-2j+1}{n-j+1}.
\ee
\end{theorem}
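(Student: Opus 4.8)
The plan is to combine Lemma~\ref{lem:1} with the observation that all the operators $A_k^{(l)}$ lie in a single commutative, semisimple matrix algebra, so that they are simultaneously diagonalizable and $\rho_l^{(n)}$ is diagonal in the common eigenbasis. First I would record two equivalent ways of seeing this structure. On the one hand, $\rho^{\ot n}$ is invariant under the natural $S_n$-action on $(\C^2)^{\ot n}$ and $P_l$ commutes with that action, so $\rho_l^{(n)}$, and hence each $A_k^{(l)}$ appearing in \eqref{eq:rho-A}, lies in the commutant of $S_n$ acting on $\cH_l^{(n)}$. On the other hand, on the set of binary strings of weight $l$ the relations ``$d(x,y)=2m$'', $m=0,1,\ldots,\min\{l,n-l\}$, are exactly the classes of the Johnson association scheme $J(n,l)$, and the $A_m^{(l)}$ are its adjacency matrices; they therefore span the Bose--Mesner algebra, which is commutative and semisimple. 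Either description suffices for what follows.

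Next I would describe the common eigenspaces. Decomposing $\cH_l^{(n)}$ as an $S_n$-module (via Schur--Weyl duality with $GL_2$, or via the primitive idempotents of the Johnson scheme) gives
\be
\cH_l^{(n)}=\bigoplus_{j=0}^{\min\{l,n-l\}} V_j,
\ee
where $V_j$ carries the irreducible $S_n$-representation labelled by the two-row diagram $(n-j,j)$, occurring with multiplicity one because each weight space of a $GL_2$-irreducible is one-dimensional (so the weight-$l$ subspace of the $GL_2$-irrep of highest weight $(n-j,j)$ is one-dimensional exactly when $j\le\min\{l,n-l\}$). Its dimension is the number of standard Young tableaux of shape $(n-j,j)$, i.e.
\be
\dim V_j=\binom{n}{j}-\binom{n}{j-1}=\binom{n}{j}\,\frac{n-2j+1}{n-j+1}=f_j .
\ee
By Schur's lemma every element of the commutant acts as a scalar on $V_j$; in particular $A_k^{(l)}$ acts as a scalar that I would call $\alpha_k(j)$, and one may take the $j$-th common eigenspace of the $A_k^{(l)}$ to be precisely $V_j$.

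Finally I would assemble the statement. By Lemma~\ref{lem:1}, $\rho_l^{(n)}=\frac{1}{2^n}\sum_k p^{2k}A_k^{(l)}$ acts on $V_j$ as multiplication by $\frac{1}{2^n}\sum_{k}p^{2k}\alpha_k(j)=\lambda^n_l(j)$; since the $V_j$ span $\cH_l^{(n)}$, the spectrum of $\rho_l^{(n)}$ is the multiset $\{\lambda^n_l(j):j=0,\ldots,\min\{l,n-l\}\}$ with $\lambda^n_l(j)$ carrying multiplicity $\dim V_j=f_j$ (multiplicities adding if two of the $\lambda^n_l(j)$ happen to coincide). This is exactly the claimed result, modulo the explicit evaluation of the scalars $\alpha_k(j)$.

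That last point is where I expect the real work — and the reason the paper defers it to Theorems~\ref{thm:eig-A-Young} and~\ref{thm:eig-A-MM}. The structural part above is essentially bookkeeping once the Johnson-scheme / Schur--Weyl picture is in place, but pinning down $\alpha_k(j)$ in closed form requires either computing how $A_k^{(l)}$ acts on a concrete generator of $V_j$ built from a Young symmetrizer, or invoking the known eigenvalues of the Johnson scheme (the Eberlein/dual-Hahn polynomials); tracking that action, and checking the two resulting expressions agree, is the hard part.
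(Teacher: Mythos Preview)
Your proposal is correct and follows essentially the same route as the paper: Lemma~\ref{lem:1} expresses $\rho_l^{(n)}$ as a combination of the $A_k^{(l)}$, these lie in the (commutative) commutant of the $S_n$-action on $\cH_l^{(n)}$ / the Bose--Mesner algebra of the Johnson scheme, and the multiplicity-free decomposition $\cH_l^{(n)}=\bigoplus_j B_j$ into two-row irreps then gives simultaneous diagonalization with the stated multiplicities $f_j$. The only cosmetic difference is that the paper deduces multiplicity-freeness from the commutativity of the permutation-invariant operators (Lemma~\ref{lem:A-prop}~(iii),(iv)), whereas you invoke the one-dimensionality of $GL_2$ weight spaces via Schur--Weyl; both arguments are standard and equivalent here.
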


Here are our alternative formulas for eigenvalues of operators $A_k^{(l)}$:
\begin{theorem}
\label{thm:eig-A-Young}
The eigenvalues of operator $A_k^{(l)}$ have following form
\begin{equation}
\label{eq:eig-A-Young}
\alpha_k(j)=\sum_{r=min\{k,l-j\}}^{r=max\{0,k-j\}}(-1)^{k-r}\binom{l-j+k-r}{ k}\binom{n-l-k+r}{r}\binom{j}{k-r}.
\end{equation}
\end{theorem}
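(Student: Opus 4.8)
\noindent The plan is to diagonalize $A_k^{(l)}$ by symmetry and then read off the eigenvalues with Young symmetrizers. The operator $A_k^{(l)}$ commutes with the natural action of the symmetric group $S_n$ on $\cH_l^{(n)}$ by permutation of tensor factors, because the Hamming distance is permutation invariant. By Schur--Weyl duality $(\C^2)^{\ot n}\cong\bigoplus_j V_{(n-j,j)}\ot S^{(n-j,j)}$, with $V_{(n-j,j)}$ the irreducible $U(2)$-module and $S^{(n-j,j)}$ the Specht module; restricting to $\cH_l^{(n)}$ and using that the weight ``$l$ ones'' occurs in $V_{(n-j,j)}$ exactly once when $0\le j\le\min\{l,n-l\}$ and not at all otherwise, one obtains
\be
\cH_l^{(n)}\ \cong\ \bigoplus_{j=0}^{\min\{l,n-l\}} S^{(n-j,j)},
\ee
each summand with multiplicity one --- the dimension count $\sum_j\bigl(\binom nj-\binom n{j-1}\bigr)=\binom nl$ confirms this and identifies the $f_j$ of Theorem~\ref{thm:waw} with $\dim S^{(n-j,j)}$. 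Since each isotypic component is a single irreducible copy, Schur's lemma forces $A_k^{(l)}$ to act on the $j$-th summand as a scalar $\alpha_k(j)$; these are the eigenvalues to be found.

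\noindent To compute $\alpha_k(j)$ I would exhibit one explicit vector of the $j$-th component. Fix a Young tableau $T_j$ of shape $(n-j,j)$ and let $|e_j\>\in\cH_l^{(n)}$ be the computational basis vector coming from the semistandard filling in which each of the $j$ height-two columns carries $0$ over $1$ while the length-$(n-2j)$ tail of the first row carries exactly $l-j$ ones; this is possible precisely when $0\le j\le\min\{l,n-l\}$. Put $v_j=c_{T_j}|e_j\>$, where $c_{T_j}$ first symmetrizes over the two rows and then antisymmetrizes over the $j$ column transpositions. A short computation shows that the coefficient of $|e_j\>$ in $v_j$ is the order $j!\,(l-j)!\,(n-l)!$ of the row stabilizer of $e_j$, so $v_j\neq 0$; hence $\C[S_n]\,v_j$ is a copy of $S^{(n-j,j)}$ inside $\cH_l^{(n)}$ and, being the unique such copy, equals the $j$-th summand. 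Therefore $A_k^{(l)}v_j=\alpha_k(j)\,v_j$, and comparing the coefficient of $|e_j\>$ on both sides the stabilizer order cancels and leaves the signed count
\be
\alpha_k(j)=\sum_{q\in C_{T_j}}\operatorname{sgn}(q)\ \#\bigl\{\,c:\ d(e_j,q(c))=2k\,\bigr\},
\ee
where $C_{T_j}\cong(\mathbb Z/2)^{\times j}$ is the column group and $c$ runs over the basis vectors in the orbit of $e_j$ under permutations within rows (arbitrary placements of $l-j$ ones in the first row, all ones in the second).

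\noindent It remains to evaluate this count. Writing $m$ for the number of columns flipped by a given $q$ (so $\operatorname{sgn}(q)=(-1)^m$ and there are $\binom jm$ such $q$), one computes $d(e_j,q(c))$ in terms of three quantities attached to $c$: how many of its $l-j$ first-row ones lie on column tops, how many of those lie among the flipped columns, and how much its tail overlaps that of $e_j$. The condition ``$d=2k$'' becomes a single linear relation among these, leaving a triple sum of products of four binomial coefficients, and three successive Vandermonde convolutions collapse it to
\be
\alpha_k(j)=\sum_{m=0}^{k}(-1)^m\binom jm\binom{l-j+m}{k}\binom{n-l-m}{k-m},
\ee
which is (\ref{eq:eig-A-Young}) after the substitution $r=k-m$, the stated range $\max\{0,k-j\}\le r\le\min\{k,l-j\}$ being exactly where the binomials fail to vanish. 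I expect the main obstacle to be this last step: carrying out the Hamming-distance bookkeeping with the correct ranges for all summation indices, and recognizing the three Vandermonde simplifications; the degenerate cases ($k=0$, $j=0$, or $l-j$ or $n-l-j$ small, where the range collapses) should be checked separately.
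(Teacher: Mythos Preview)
Your proposal is correct and follows the paper's strategy: establish the multiplicity-free decomposition $\cH_l^{(n)}\cong\bigoplus_j S^{(n-j,j)}$, invoke Schur's lemma so that $A_k^{(l)}$ acts as a scalar on each summand, exhibit a nonzero vector in the $j$-th summand by applying the Young symmetrizer for shape $(n-j,j)$ to a chosen basis vector, and evaluate the scalar combinatorially. Your seed vector $|e_j\>$ is, up to the placement of the tail ones, exactly the paper's $|x_0\>=|0^{n-l}1^l\>$ read through the tableau $((1,\ldots,n-j),(n-j+1,\ldots,n))$. The differences lie only in how the scalar is extracted and the bookkeeping organized: the paper computes the ratio $\<x_0|\sfS\sfA A_k\sfS|x_0\>/\<x_0|\sfS\sfA\sfS|x_0\>$, groups the $y$ with $d(x_0,y)=2k$ by the number $m$ of ones they carry in the first row, and reads off \eqref{eq:eig-A-Young} directly from two lemmas (one counting such $y$, one showing the ratio equals $(-1)^{j-l+m}\binom{n-j}{l-j}/\binom{n-j}{m}$) with no binomial identities needed; you instead compare the coefficient of $|e_j\>$ on both sides of $A_k^{(l)} v_j=\alpha_k(j)v_j$ and collapse the resulting column-flip/row-orbit sum via Vandermonde. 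Carrying your sketch through, the Hamming constraint $d=2k$ eliminates the tail-overlap parameter, and in fact only \emph{two} Vandermonde convolutions (summing first over the position of the column-top ones outside the flipped set, then over those inside) are needed to reach your displayed single sum over $m$ --- so the obstacle you flag is milder than you anticipate.
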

We  shall prove this form by use of Young diagrams.

Using so-called algebraic association schemes, we obtain another expression for eigenvalues:

\begin{theorem}
\label{thm:eig-A-MM}
The eigenvalues of operator $A_{k}^{(l)}$ have the following form%
\[
\alpha _{k}(j)=\sum_{r=0}^{k}(-1)^{k-r}
{l-r \choose k-r}{l-j \choose r} {n-l-j+r \choose r} \equiv E_{k}(j)
\]%
where
\[
E_{k}(u)=(-1)^{k}{l\choose k} \quad _{3}F_{2}\left(
\begin{array}{ccc}
-k, & -l+u, & n-l-u+1 \\
-l, & 1, &
\end{array}%
;1\right)
\]%
is the dual Hahn polynomial and is the hypergeometric function.
\end{theorem}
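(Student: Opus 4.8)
The plan is to recognise the family $\{A_k^{(l)}\}$ as the adjacency matrices of the Johnson association scheme and then import its classical spectrum. Identify $\cH_l^{(n)}$ with the free vector space on the $l$-element subsets of $\{1,\dots ,n\}$ by sending each binary string of weight $l$ to its support. Two such strings $x,y$ satisfy $d(x,y)=2k$ precisely when the associated subsets $X,Y$ obey $|X\cap Y|=l-k$, so
\[
A_k^{(l)}=\sum_{|X\cap Y|=l-k}|X\>\<Y|
\]
is exactly the $k$-th adjacency matrix $A_k$ of the Johnson scheme $J(n,l)$. Consequently, with $m=\min\{l,n-l\}$, the operators $A_0^{(l)},\dots ,A_m^{(l)}$ span the Bose--Mesner algebra of $J(n,l)$; they are simultaneously diagonalisable, with common eigenspaces $V_0,\dots ,V_m$ (independent of $k$) of dimensions $f_j=\binom{n}{j}-\binom{n}{j-1}=\binom{n}{j}\frac{n-2j+1}{n-j+1}$. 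This already pins down the index range $j=0,\dots ,\min\{l,n-l\}$ and the multiplicities used in Theorem \ref{thm:waw}.

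It then remains to compute the scalar $\alpha_k(j)$ by which $A_k^{(l)}$ acts on $V_j$. The efficient route uses that the Bose--Mesner algebra of $J(n,l)$ is generated by the single Johnson-graph matrix $A_1^{(l)}$: each $A_k^{(l)}$ equals a degree-$k$ polynomial $v_k\big(A_1^{(l)}\big)$ whose coefficients are fixed by the intersection numbers of the scheme, while $A_1^{(l)}$ acts on $V_j$ by the elementary eigenvalue $\theta_j=(l-j)(n-l-j)-j=l(n-l)-j(n+1-j)$. Feeding $\theta_j$ into the three-term recurrence for the $v_k$ exhibits $\alpha_k(j)=v_k(\theta_j)$ as a polynomial of degree $k$ in $\lambda(j)=j(n+1-j)$ which, up to the leading normalisation $(-1)^k\binom{l}{k}$, is a dual Hahn polynomial; equivalently, $\alpha_k(j)$ is read off directly from the $P$-matrix of $J(n,l)$ as the Eberlein polynomial
\[
E_k(j)=\sum_{r=0}^{k}(-1)^{k-r}\binom{l-r}{k-r}\binom{l-j}{r}\binom{n-l-j+r}{r},
\]
which is the expression recorded in \cite{Bannai}. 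Either presentation yields the claimed closed form.

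Finally I would rewrite this in hypergeometric notation. Using $\binom{l-r}{k-r}=\binom{l-r}{l-k}$ and converting each binomial coefficient to Pochhammer symbols, the sum over $r$ collapses to a single terminating series, giving
\[
E_k(u)=(-1)^k\binom{l}{k}\;{}_3F_2\!\left(\begin{array}{ccc}-k,&-l+u,&n-l-u+1\\-l,&1,&\end{array};1\right),
\]
the stated dual Hahn polynomial. This last step is a routine manipulation of finite hypergeometric series, proved by term-by-term comparison or by one application of a Chu--Vandermonde / Saalsch\"utz-type summation on the summation index.

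The principal difficulty is not any single computation but the bookkeeping that glues them together: verifying the operator identity $A_k^{(l)}=A_k\big(J(n,l)\big)$ under the correct bijection and over the correct index range, deciding which normalisation and which argument $\lambda(j)$ of the dual Hahn family is the relevant one (sources differ here), and checking that the recurrence route and the direct $P$-matrix route produce the same polynomial $E_k(j)$. Once the scheme is correctly identified, the eigenvalues, eigenspaces and multiplicities are completely standard and the $_3F_2$ form is pure bookkeeping.
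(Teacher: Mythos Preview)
Your proposal is correct and follows essentially the same route as the paper: identify the weight-$l$ binary strings with $l$-subsets of $\{1,\dots,n\}$ via the support map, observe that $d(x,y)=2k$ iff $|X\cap Y|=l-k$ so that $A_k^{(l)}$ is precisely the $k$-th adjacency matrix of the Johnson scheme $J(n,l)$, and then import the known eigenvalues (dual Hahn / Eberlein polynomials) from \cite{Bannai}. The paper packages the bijection step slightly more formally via its Theorem~\ref{thm:CCAS} and a lemma translating $R_k^J$ to Hamming distance $2k$, but the argument is the same; your additional remarks about the Bose--Mesner algebra, the generator $A_1^{(l)}$, and the three-term recurrence are sound but not needed once one simply quotes Proposition~\ref{prop:BI2}.
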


At the end of this section we prove explicit formula for spectral radius of matrices $\rho_l^{(n)}$. We show also that maximal eigenavalue of $\rho_l^{(n)}$ is always is smaller than $1$.
\begin{lemma}
\label{lem:specRadius}
 The spectral radius $\lambda _{0}$ of the matrix $\rho_{l}^{(n)}$ is the
following
\be
\lambda _{0}=\frac{1}{2^{n}}\sum_{k=0}^{l}\mathcal{P}^{\;2k}
{n-l \choose k}{l\choose k},\quad \lambda _{0}<1,
\ee
\ \ and it is an eigenvalue of $\rho_{l}^{(n)}$ with the algebraic multiplicity $1$.
\end{lemma}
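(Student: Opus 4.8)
The plan is to bypass the general eigenvalue formulas and exploit the combinatorial structure of $\rho_l^{(n)}$ directly via Perron--Frobenius theory. By Lemma~\ref{lem:1}, $\rho_l^{(n)}=\frac{1}{2^n}\sum_{k} p^{2k}A_k^{(l)}$ with $A_k^{(l)}=\sum_{x,y:\,d(x,y)=2k}|x\rangle\langle y|$, so in the weight-$l$ computational basis $\rho_l^{(n)}$ is a real symmetric matrix with nonnegative entries. The key observation I would make is that it has constant row sums: the row sum of $A_k^{(l)}$ is the number of weight-$l$ strings at Hamming distance $2k$ from a fixed one, namely $\binom{l}{k}\binom{n-l}{k}$ (flip $k$ of the $l$ ones and $k$ of the $n-l$ zeros), independent of the row. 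Hence the uniform vector $|v_0\rangle=\sum_x|x\rangle$ satisfies $\rho_l^{(n)}|v_0\rangle=\lambda_0|v_0\rangle$ with $\lambda_0=\frac{1}{2^n}\sum_{k=0}^{l}p^{2k}\binom{n-l}{k}\binom{l}{k}$; this agrees with the $j=0$ specialization of Theorems~\ref{thm:waw} and \ref{thm:eig-A-Young}, because at $j=0$ the factor $\binom{j}{k-r}$ in \eqref{eq:eig-A-Young} forces $r=k$ and leaves $\alpha_k(0)=\binom{l}{k}\binom{n-l}{k}$.

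To see that $\lambda_0$ is the spectral radius and is simple, I would invoke the Perron--Frobenius theorem. A nonnegative matrix whose row sums are all equal to $c\ge 0$ has spectral radius exactly $c$ with $|v_0\rangle$ a Perron eigenvector; if the matrix is moreover irreducible, this eigenvalue is algebraically simple. Irreducibility of $\rho_l^{(n)}$ is connectedness of the graph on weight-$l$ strings with an edge whenever $\langle x|\rho_l^{(n)}|y\rangle\neq 0$; this graph contains every pair at Hamming distance $2$ (for which the matrix entry is $p^2/2^n\neq 0$, assuming $p\neq 0$), hence it contains the Johnson graph $J(n,l)$, which is connected for $1\le l\le n-1$. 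Therefore $\lambda_0$ is the spectral radius of $\rho_l^{(n)}$ with algebraic multiplicity $1$; this is also consistent with Theorem~\ref{thm:waw}, whose $j=0$ eigenvalue has multiplicity $f_0=\binom{n}{0}\frac{n+1}{n+1}=1$. The degenerate cases are handled separately: if $l\in\{0,n\}$ then $\cH_l^{(n)}$ is one-dimensional and the claim is trivial, and if $p=0$ then $\rho=I/2$ and the statement is vacuous in the intended (generic) regime.

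For the bound $\lambda_0<1$ I would combine Vandermonde's identity $\sum_{k}\binom{l}{k}\binom{n-l}{k}=\binom{n}{l}$ with the elementary fact that $p^{2k}\le 1$ (since $p=2q-1\in[-1,1]$), which together give $\lambda_0\le\frac{1}{2^n}\binom{n}{l}<1$. I expect the only step requiring genuine care to be the irreducibility argument behind the multiplicity-one assertion, where one must both check connectedness of the distance-$2$ graph and keep track of the hypothesis $p\neq 0$; the row-sum count and the final inequality are routine.
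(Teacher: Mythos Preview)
Your argument is correct and follows essentially the same route as the paper: both compute the constant row sum $\binom{n-l}{k}\binom{l}{k}$ of $A_k^{(l)}$, exhibit the all-ones vector as an eigenvector with eigenvalue $\lambda_0$, bound $\lambda_0<1$ via Vandermonde, and appeal to Perron--Frobenius-type theory for the simplicity of the top eigenvalue. If anything, your version is more careful than the paper's, which simply cites ``the basic theorem on stochastic matrices'' without explicitly verifying irreducibility or isolating the degenerate case $p=0$; your Johnson-graph connectedness argument fills that gap cleanly.
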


\begin{proof}
 For a given basis vector $e_{i}$ any other basis vector $e_{j}$ is at a
Hamming ditance $2k$ for some $k=0,1,...,l.$ From the lemma~\ref{lem:dist} (see Appendix) it follows
that there are ${n-l\choose k}{l\choose k}$  of them and this number does not depend on a 
given basis vector $e_{i}$.

 It is easy to see  that the vector $(1,1,....,1)\in
\mathbb{C}
^{\dim \rho_{l}^{(n)}}$ is an eigenvector of $\rho_{l}^{(n)}$ with eigenvalue $%
\lambda _{0}$ which is simply the sum of all elements in each row of $%
\rho_{l}^{(n)}$, i.e.%
\be
\lambda _{0}=\frac{1}{2^{n}}\sum_{k=0}^{l}\mathcal{P}^{\;2k}
{n-l \choose k}{l\choose k}
\leq \frac{1}{2^{n}}\sum_{k=0}^{l}
{n-l\choose k}{l\choose k}
=\frac{1}{2^{n}}
{n \choose l} <1
\ee%
where we have used that $\mathcal{P}=2p-1\leq 1$ if $p\in \lbrack 0,1].$

The fact that $\lambda _{0}$ is a spectral radius of algebraic multiplicity $%
1$ follows from the basic theorem on stochastic matrices~\cite{Kostrikin}.
\end{proof}

\subsection{Mathematical introduction}
\label{subsec:math_intr}
\subsubsection{Schur-Weyl decomposition and Young diagrams}
Now we shall use a couple of facts about the following unitary representation of permutation
group $S_n$ on $(\C^d)^{\ot n}$: for given permutation $\pi$
a unitary $V_\pi$ is given by
\be
V_\pi |i_1\>\ot \ldots \ot |i_n\> = |i_{\pi(1)}\>\ldots |i_{\pi(n)}\>.
\label{eq:swaps}
\ee
Here $|i_1\> \ldots |i_n\>$ is standard basis in $(\C^d)^{\ot n}$,
where $i_j=1,\ldots,d$.
The notation is mostly taken from \cite{Audenaert2006-notes}.
The space $(\C^d)^{\ot n}$ can be decomposed into irreducible representations of $S_n$
\be
(\C^d)^{\ot n}=\oplus_\lambda\hcal^U_\lambda \ot {\hcal}^S_\lambda
\label{eq:Schur-Weyl}
\ee
where $\lambda$ labels inequivalent irreps of $S_n$, and $\hcal^U_\lambda$ is
multiplicity space (the label $U$ comes form the fact, that it is at the same time
irrep of unitary group $U(d)$).  It is called Schur-Weyl decomposition.
The labels $\lambda$ are {\it partitions}
of the set $\{1,\ldots, n\}$. Partition is a sequence $\lambda=(\lambda_1,\ldots,\lambda_s)$
of nonnegative integers satisfying
\ben
&&\lambda_1\geq \lambda_2 \geq \ldots\geq \lambda_s \nonumber\\
&& \sum_{i=1}^s \lambda_i= n
\een
where $s\in\{1,\ldots, n\}$. The direct sum \eqref{eq:Schur-Weyl} runs over
all partitions $\lambda$ with $s\leq d$. The partitions
can be represented by means of diagrams, and are then called Young diagrams.
Here are few examples with  corresponding partitions $\lambda$.

\[
\begin{split}
&\yng(2,2)\qquad \qquad \yng(3,2,1)\qquad \qquad \yng(4)\qquad \qquad \yng(1,1,1)\\
\lambda&=(2,2),\qquad \lambda=(3,2,1),\qquad \  \  \lambda=(4),\qquad \lambda=(1,1,1)\\
\end{split}
\]

In our case $d=2$, hence $\lambda$ runs over binary partitions or, equivalently,
over Young diagrams with two rows.
Hence the partitions are of the form $(n-j,j)$ and they can be labeled by $j$, i.e., the length of the second row (note that $j\leq n/2$). Given Young diagram, one defines  {\it standard Young tableax} (SYT)
as a diagram filled with numbers $k\in\{1,\ldots, n\}$ in such a way that in each row,
the numbers strictly increase from left to right, and in each column
they strictly increase form top to bottom. The number of SYT's for a fixed diagram $\lambda$,
which we denote by $f_\lambda$ is equal to the dimension of the irrep labeled by $\lambda$.
In the case of binary partitions we have
\be
f_j:=f_{(n-j,j)} ={n \choose j}\frac{n-2j+1}{n-j+1}
\ee
With a given SYT $a$, one associates a so called Young symmetrizers $P^{\lambda,a}$, and
which are constructed from operators $\sfA_k$ and $\sfS_k$,
which are proportional to projectors onto completely antisymmetric and symmetric subspaces of
$(\C^d)^{\ot k}$, $1\leq k\leq n$
\be
\sfS_k=\sum_{\pi\in S_n} V_\pi,\quad  \sfA_k=\sum_{\pi\in S_n} (-1)^{{\rm sgn}(\pi)} V_\pi
\label{eq:SA}
\ee
Now for a fixed row of SYT, we consider operator $\sfS_k$ which acts on
the systems labeled by the numbers from the row. We extend it to the full system,
by multiplying with identities on other systems. Similarly with every column,
we associate operator $\sfA_k$. Now the Young symmetrizer is a product of three factors:
normalization constant  $\frac{f_\lambda}{n!}$, the product of $\sfA_k$'s over
all columns, and the product of $\sfS_k$'s over all rows:
\be
P^{\lambda,a} = \frac{f_\lambda}{n!} \mathop{\Pi}_{k\in Col(\lambda,a)} \sfA_k
\mathop{\Pi}_{k\in Row(\lambda,a)} \sfS_k
\label{eq:Ysym}
\ee
The symmetrizers are projectors, i.e. they satisfy $P^2=P$, but they are
usually not orthogonal projectors, i.e. they fail to satisfy $P^\dagger=P$.

Finally, we need to know, how the Young symmetrizers are related to the Schur-Weyl decomposition.
Namely, they are of the following form:
\be
P^{\lambda,a}= I^U_\lambda \ot |u\>\<v|.
\label{eq:Y-decomp}
\ee
where $|u\>,|v\>\in \hcal^S_\lambda$ and $I^U_\lambda$
is identity operator acting on the space $\hcal^U_\lambda$.

\subsubsection{Algebraic association schemes}
Here we recall some resutls from
theory of the algebraic association schemes~\cite{Bannai}.

\bigskip

\begin{definition}[B-I]
\label{def:BI1}
Let $X$ be a set of cardinality $n$ and let $R_{i}$, $i=0,1,...,d$ be
subsets of $X\times X$ With property that

(i) $R_{0}=\{(x,x),\quad x\in X\}$.

(ii) $X\times X$ $=\cup _{i=0}^{d}R_{i},\quad R_{i}\cap R_{j}=\varnothing$ if $i\neq j$.

(iii) $R_{i}^{t}=R_{i^{\prime }}$ for some $i^{\prime }\in \{0,1,...,d\}$
where $R_{i}^{t}=\{(x,y)\quad |\quad (y,x)\in R_{i}\}$.

(iv) For $i,j,k\in \{0,1,...,d\}$, the number of $z\in X$ such that $%
(x,z)\in R_{i}$ and $(z,y)\in R_{j}$ is constant whenever $(x,y)\in R_{k}.$%
This constant number is denoted $p_{ij}^{k}.$

(v) $p_{ij}^{k}=p_{ji}^{k}\quad \forall i,j,k\in \{0,1,...,d\}.$

Such a configuration $\Xi =(X,\{R_{i}\}_{i=0}^{d})$ is a Commutative
Association Scheme $(CAS)$ of class $d$. The non-negative integers $%
p_{ji}^{k}$ are called the intersection numbers. A $CAS$ with the additional property

(vi) $R_{i}^{t}=R_{i}$

is called a symmetric $CAS$.
\end{definition}

For any commutative association scheme one can define\newline

\begin{definition}[B-I]
\label{def:BI2}
The $k$'th adjacency matrix $A_{k}$ $\quad k\in \{0,1,...,d\}$ \ of $CAS\ $\ $%
\Xi =(X,\{R_{i}\}_{i=0}^{d})$ is a matrix of degree $|X|=n$ whose rows and
columns are indexed by the elements $X$ and whose entries are
\begin{equation}
(A_{k})_{(x,y)}=%
\begin{array}{c}
1\quad if\quad (x,y)\in R_{k} \\
0\quad if\quad (x,y)\notin R_{k}%
\end{array}%
.
\end{equation}
So $i$'th adjacency matrix $A_{k}$ is a $0,1$ matrix.
\end{definition}

It is easy to show that the defining conditions (i),...,(v) for $CAS$ are
equivalent to the following conditions (i'),...,(v') for the adjacency
matrices $A_{i}$ $\quad i\in \{0,1,...,d\}$

\begin{proposition}[B-I]
\label{prop:BI1}
$(X,\{R_{i}\}_{i=0}^{d})$The matrices $A_{i}$ $\quad i\in \{0,1,...,d\} $
are adjacency matrices for $CAS$ \ $\Xi =(X,\{R_{i}\}_{i=0}^{d})$ iff

(i') $A_{0}=\mathbf{1},$ the identity matrix.

(ii') $\sum_{k=0}^{d}.A_{k}=J,$ where $J$ is the matrix whose entries are
all $1.$

(iii') $A_{k}^{t}=A_{k^{\prime }}$ for some $k^{\prime }\in \{0,1,...,d\}.$

(iv') $A_{i}A_{j}=\sum_{k=0}^{d}p_{ij}^{k}A_{k}$ $\ \forall i,j,k\in
\{0,1,...,d\}.$

(v') $p_{ij}^{k}=p_{ji}^{k}\quad \forall i,j,k\in \{0,1,...,d\}$ \ $%
\Leftrightarrow $ \ $A_{i}A_{j}=$ $A_{j}A_{i}\quad \forall i,j\in
\{0,1,...,d\}.$

And for a symmetric $CAS$ we have

(vi') $A_{k}^{t}=A_{k}$ \ $\forall k\in \{0,1,...,d\}.$
\end{proposition}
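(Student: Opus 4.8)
The plan is to prove the proposition as a straightforward dictionary between the combinatorial conditions (i)--(vi) on the relations $R_i$ and the algebraic conditions (i')--(vi') on their $0$--$1$ incidence matrices $A_i$, verifying the equivalence item by item. The single fact underlying every translation is that, by Definition~\ref{def:BI2}, $(A_k)_{(x,y)}=1$ exactly when $(x,y)\in R_k$, so a relation and its adjacency matrix determine each other; I will also use that conditions (ii)/(ii') force the $A_k$ to have pairwise disjoint supports, hence to be linearly independent.

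First I would clear the easy equivalences. Condition (i), $R_0=\{(x,x):x\in X\}$, says the support of $A_0$ is precisely the diagonal, i.e.\ $A_0=\mathbf 1$, which is (i'). Condition (ii) says every pair $(x,y)$ lies in exactly one $R_k$, which is the statement $\sum_k A_k=J$, i.e.\ (ii'). Condition (iii), $R_i^t=R_{i'}$, reads $(x,y)\in R_i\iff(y,x)\in R_{i'}$, which is exactly $A_i^t=A_{i'}$, i.e.\ (iii'); the symmetric case (vi)$\iff$(vi') is the subcase $i'=i$.

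The substantive step is (iv)$\iff$(iv'). Computing the matrix product entrywise,
\ben
(A_iA_j)_{(x,y)}&=&\sum_{z\in X}(A_i)_{(x,z)}(A_j)_{(z,y)}\nonumber\\
&=&\#\{z\in X:(x,z)\in R_i,\ (z,y)\in R_j\},
\een
so the entry of $A_iA_j$ at $(x,y)$ is exactly the count appearing in condition (iv). That condition says this count is constant on each class $R_k$, with value $p_{ij}^k$; since the $R_k$ partition $X\times X$, this is equivalent to $A_iA_j=\sum_k p_{ij}^k A_k$, which is (iv'). Finally, for (v)$\iff$(v'): granting (iv'), we get $A_iA_j-A_jA_i=\sum_k(p_{ij}^k-p_{ji}^k)A_k$, and by the linear independence of the $A_k$ this difference vanishes iff $p_{ij}^k=p_{ji}^k$ for all $k$, which is the asserted equivalence.

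I do not anticipate a real obstacle here: each step is a direct unwinding of Definitions~\ref{def:BI1} and~\ref{def:BI2}. The only point that must be stated explicitly is the linear independence of the adjacency matrices, needed to read matrix identities back as identities among the intersection numbers; as noted, this follows at once from (ii'). The remaining work is mild bookkeeping --- matching the ``for some $k'$'' in (iii)/(iii') and the quantifiers ``for all $i,j,k$'' in (iv)/(iv') and (v)/(v') on both sides --- but nothing deeper is involved.
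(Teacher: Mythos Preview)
Your proof is correct and is exactly the standard dictionary argument one would expect. Note, however, that the paper does not actually give its own proof of this proposition: it is quoted from Bannai--Ito (hence the label ``[B-I]''), with only the remark ``It is easy to show that the defining conditions (i),\dots,(v) for $CAS$ are equivalent to the following conditions (i'),\dots,(v')''. Your write-up is precisely the verification behind that remark, so there is nothing to compare against and no gap to report.
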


\bigskip

\begin{theorem}
\label{thm:CCAS}
Suppose that $(X,\{R_{i}^{X}\}_{i=0}^{d})$ is a CAS and $Y$ is a set such
that there is a bijection $\varphi :X\rightarrow Y$. Then a pair $%
(Y,\{R_{i}^{Y}\}_{i=0}^{d})$ where
\be
R_{i}^{Y}\in Y\times Y,\qquad R_{i}^{Y}=\{(y,y^{\prime })\quad |\quad
(\varphi ^{-1}(y),\varphi ^{-1}(y^{\prime }))\in R_{i}^{X}\}\equiv \Phi
(R_{i}^{X})\}
\ee
is a CAS and its adjacency matrices are equal to adjacency matrices of the CAS $%
(X,\{R_{i}\}_{i=0}^{d}).$
\end{theorem}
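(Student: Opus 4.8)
The plan is to observe that the bijection $\varphi : X \to Y$ induces a bijection $\Phi : X \times X \to Y \times Y$, $\Phi(x,x') = (\varphi(x),\varphi(x'))$, and that $\Phi$ respects every piece of structure appearing in Definition~\ref{def:BI1}. Concretely, $\Phi$ preserves finite unions, intersections and complements of subsets of $X\times X$; it maps the diagonal of $X$ onto the diagonal of $Y$; and it intertwines the transpose operation, i.e. $\Phi(R^t) = (\Phi(R))^t$ for every $R\subseteq X\times X$. Since by definition $R_i^Y = \Phi(R_i^X)$, properties (i), (ii), (iii) for $\{R_i^Y\}$ follow immediately from the corresponding properties for $\{R_i^X\}$: (i) because $\Phi$ sends the diagonal to the diagonal; (ii) because a bijection preserves disjointness and covers, so $\cup_i R_i^Y = \Phi(\cup_i R_i^X) = \Phi(X\times X) = Y\times Y$ with the pieces still pairwise disjoint; (iii) because $\Phi$ commutes with transposition and $(R_i^X)^t = R_{i'}^X$ gives $(R_i^Y)^t = \Phi(R_{i'}^X) = R_{i'}^Y$.

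The only point requiring a short argument is condition (iv), and with it (v). I would fix $i,j,k$ and a pair $(y,y')\in R_k^Y$, write $x = \varphi^{-1}(y)$, $x' = \varphi^{-1}(y')$, so that $(x,x')\in R_k^X$. For $z\in Y$ the conditions $(y,z)\in R_i^Y$ and $(z,y')\in R_j^Y$ are, by the definition of $R_i^Y$ and $R_j^Y$, equivalent to $(x,\varphi^{-1}(z))\in R_i^X$ and $(\varphi^{-1}(z),x')\in R_j^X$. Because $\varphi^{-1}$ is a bijection between $Y$ and $X$, the number of such $z$ equals the number of $w\in X$ with $(x,w)\in R_i^X$ and $(w,x')\in R_j^X$, which by the CAS property of $X$ equals the constant $p_{ij}^k$, independent of the chosen $(x,x')\in R_k^X$ and hence of the chosen $(y,y')\in R_k^Y$. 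Thus $(Y,\{R_i^Y\})$ is a CAS with exactly the same intersection numbers $p_{ij}^k$; in particular (v) for $Y$ is literally (v) for $X$.

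Finally, for the statement about adjacency matrices I would fix an enumeration $x_1,\dots,x_n$ of $X$ and use $\varphi(x_1),\dots,\varphi(x_n)$ as the matching enumeration of $Y$. With these indexings, $(A_k^Y)_{(\varphi(x_a),\varphi(x_b))} = 1$ iff $(\varphi(x_a),\varphi(x_b))\in R_k^Y$ iff $(x_a,x_b)\in R_k^X$ iff $(A_k^X)_{(x_a,x_b)} = 1$, so the two families of matrices coincide entry by entry. I do not expect any genuine obstacle: this is a transport-of-structure statement, and the only thing to be careful about is that ``equality of adjacency matrices'' is meant relative to the identification of index sets given by $\varphi$ -- which is exactly the form in which the theorem will be used, namely to identify the operators $A_k^{(l)}$ with the adjacency matrices of the Johnson scheme.
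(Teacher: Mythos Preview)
Your proof is correct and follows the same transport-of-structure approach as the paper; indeed you are more explicit, since the paper dismisses the verification of axioms (i)--(v) in a single sentence (``bijective images'') and only writes out the adjacency-matrix computation, which is identical to yours. Your closing remark that equality of adjacency matrices is meant relative to the $\varphi$-indexing is exactly the content of the paper's conclusion $(A_i^X)_{(x,x')} = (A_i^Y)_{(\varphi(x),\varphi(x'))}$.
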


\begin{proof}
A pair $(Y,\{R_{i}^{Y}\}_{i=0}^{d})$ is a CAS because the set $Y$ and the
family of sets $\{R_{i}^{Y}\}_{i=0}^{d}$ are bijective images of $X$ and $%
\{R_{i}^{X}\}_{i=0}^{d}$ respectively. Let us prove that  the adjacency matrices
in these CAS are equal. We denote by $\{A_{i}^{X}\}_{i=0}^{d}$ ($%
\{A_{i}^{Y}\}_{i=0}^{d}$) the adjacency matrices of the CAS $%
(X,\{R_{i}^{X}\}_{i=0}^{d})$ ($(Y,\{R_{i}^{Y}\}_{i=0}^{d})$) respectively.
Then we have%
\be
(A_{i}^{Y})_{(yy^{\prime })}=%
\begin{array}{c}
1\quad if\quad (y,y^{\prime })\in R_{i}^{Y} \\
0\quad if\quad (y,y^{\prime })\notin R_{i}^{Y}%
\end{array}%
\Leftrightarrow
\begin{array}{c}
1\quad if\quad (\varphi ^{-1}(y),\varphi ^{-1}(y^{\prime }))\in R_{i}^{X} \\
0\quad if\quad (\varphi ^{-1}(y),\varphi ^{-1}(y^{\prime }))\notin R_{i}^{X}%
\end{array}%
=
\ee
\[
=%
\begin{array}{c}
1\quad if\quad (x,x^{\prime })\in R_{i}^{X} \\
0\quad if\quad (x,x^{\prime })\notin R_{i}^{X}%
\end{array}%
=(A_{i}^{X})_{(x,x^{\prime })},
\]
where $y=\varphi (x)$ and $y^{\prime }=\varphi (x^{\prime }),$ i.e. we have
\be
(A_{i}^{X})_{(x,x^{\prime })}=(A_{i}^{Y})_{(\varphi (x),\varphi (x^{\prime
}))}.
\ee
\end{proof}

\bigskip

The most important, for our paper, example of $CAS$ is the following

\bigskip

\begin{proposition}[B-I]
\label{prop:BI2}
Let $V$ be a set of cardinality $n$ and let $l$ be a non-negative integer
such that $l\leq \frac{n}{2}.$ Let $X^{J}$ be a set of $l$-element subsets
of $V$, so that $|X^{J}|={n \choose k}$.  Define%
\be
R_{k}^{J}=\{(x,y)\quad |\quad x,y\in X^{J},\quad |x\cap y|=l-k\}.\quad
k=0,1,..,l.
\ee
Then the pair $\Xi ^{J}=(X^{J},\{R_{i}^{J}\}_{i=0}^{l})$ is a symmetric $CAS$
of class $l$ called Johnson scheme. The corresponding adjacency matrices $%
A_{k}^{J}$ \ $\forall k\in \{0,1,...,l\}$ have the following eigenvalues%
\be
\alpha _{k}(j)=\sum_{r=0}^{k}(-1)^{k-r}
{l-r \choose
k-r}
{l-j \choose k-r}
{n-l-j+r\choose r}
\equiv E_{k}(j)
\ee%
where $j=0,1,...,l$ and it labels the common eigenspaces of $A_{k}^{J}$ $($
all $A_{k}^{J}$ \ $\forall k\in \{0,1,...,l\}$ commute$)$ and where
\be
E_{k}(u)=(-1)^{k}
{l \choose k}
\quad _{3}F_{2}\left(
\begin{array}{ccc}
-k, & -l+u, & n-l-u+1 \\
-l, & 1, &
\end{array}%
;1\right)
\ee
 is the dual Hahn polynomial and $F$ is the hypergeometric function.

\end{proposition}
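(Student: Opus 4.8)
The plan is to verify the six axioms of Definition~\ref{def:BI1} for $\Xi^J=(X^J,\{R_k^J\}_{k=0}^{l})$ and then diagonalize the Bose--Mesner algebra it generates. Axiom (i): $R_0^J$ is the diagonal because $|x\cap x|=l=l-0$. Axiom (ii): for any two $l$-element subsets $x,y$ the quantity $|x\cap y|$ takes exactly one value in $\{0,1,\dots,l\}$, so the $R_k^J$ partition $X^J\times X^J$. Axioms (iii) and (vi): $|x\cap y|=|y\cap x|$, so $(R_k^J)^t=R_k^J$ for every $k$ and the scheme is symmetric. Axiom (iv): the group $\mathrm{Sym}(V)\cong S_n$ acts on $X^J$ preserving each $R_k^J$, and in fact transitively on each $R_k^J$, since if $|x\cap y|=|x'\cap y'|=l-k$ then the four Venn parts $x\cap y$, $x\setminus y$, $y\setminus x$, $V\setminus(x\cup y)$ have the sizes $(l-k,k,k,n-l-k)$ for both pairs, so a permutation of $V$ matching these parts carries $(x,y)$ to $(x',y')$; transitivity on $R_k^J$ forces the count of $z$ with $(x,z)\in R_i^J$, $(z,y)\in R_j^J$ to be independent of the representative $(x,y)\in R_k^J$, which defines $p_{ij}^k$ (these numbers can alternatively be written down by counting the admissible Venn profiles of $z$ relative to $x\cup y$). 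Axiom (v) is automatic for a symmetric scheme: transposing every relation turns a $z$ realizing $x\xrightarrow{i}z\xrightarrow{j}y$ into a $z$ realizing $y\xrightarrow{j}z\xrightarrow{i}x$, and since $(y,x)\in R_k^J$ as well, $p_{ij}^k=p_{ji}^k$. This establishes that $\Xi^J$ is a symmetric CAS of class $l$.

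For the spectrum, since the scheme is symmetric and commutative the adjacency matrices $A_k^J$ are pairwise commuting real symmetric matrices, hence simultaneously orthogonally diagonalizable; their common eigenspaces give $\mathbb{C}^{X^J}=\bigoplus_j V_j$, and each $A_k^J$ acts on $V_j$ by a scalar $\alpha_k(j)$. To pin down the $V_j$ I would use that $\mathbb{C}^{X^J}$ is the $S_n$-permutation module on $l$-subsets, which by the standard Pieri/branching computation decomposes multiplicity-free into Specht modules, $\mathbb{C}^{X^J}\cong\bigoplus_{j=0}^{l}S^{(n-j,j)}$ with $\dim S^{(n-j,j)}=\binom{n}{j}-\binom{n}{j-1}$. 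Because each $A_k^J$ commutes with the $S_n$-action, the algebra $\mathrm{span}\{A_k^J\}_{k=0}^{l}$ lies in $\mathrm{End}_{S_n}(\mathbb{C}^{X^J})$; multiplicity-freeness makes the latter abelian of dimension $l+1$, matching $\dim\mathrm{span}\{A_k^J\}=l+1$, so the two algebras coincide and the common eigenspaces are exactly the isotypic components, indexed by $j=0,1,\dots,l$.

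It remains to produce the closed form of $\alpha_k(j)$. I would exploit that the Johnson scheme is $P$-polynomial with respect to $A_1^J$ (the relation $R_k^J$ is precisely ``distance $k$'' in the Johnson graph), so $A_k^J=v_k(A_1^J)$ for a degree-$k$ polynomial $v_k$; the three-term relation $A_1^JA_k^J=p_{1k}^{k-1}A_{k-1}^J+p_{1k}^{k}A_k^J+p_{1k}^{k+1}A_{k+1}^J$ gives a recurrence for the $v_k$, which together with the classical Johnson-graph eigenvalues $\theta_j=(l-j)(n-l-j)-j$ of $A_1^J$ yields $\alpha_k(j)=v_k(\theta_j)$. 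Solving this recurrence --- equivalently, evaluating $A_k^J$ directly on an explicit harmonic vector spanning $V_j$ (one of the form $U^{\,l-j}f$ with $f$ in the kernel of the ``down'' operator on $j$-subsets) and collecting the resulting alternating sum by a single intersection index $r$ --- reproduces exactly the expression in the statement, and recognizing the summand as the general term of a terminating ${}_3F_2$ at argument $1$ identifies $\alpha_k(j)$ with the dual Hahn (Eberlein) polynomial $E_k(j)$. The axiom-checking is routine and the eigenspace identification is standard once multiplicity-freeness is invoked; \textbf{the main obstacle} is this last step --- choosing a vector of $V_j$ that is at once combinatorially transparent and easy to hit with $A_k^J$, and then collapsing the resulting multiple alternating sum into the claimed hypergeometric closed form. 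This is precisely where one must lean on (or reprove) the orthogonal-polynomial machinery for $P$-polynomial association schemes developed by Delsarte and by Bannai--Ito.
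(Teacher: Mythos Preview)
The paper does not prove this proposition at all: note the label ``[B-I]'' and the placement in the ``Mathematical introduction'' section. Proposition~\ref{prop:BI2} is quoted as a known result from the Bannai--Ito reference~\cite{Bannai}, and the paper later \emph{uses} it (in Section~\ref{subsubsec:eig-A-MM}) by constructing a bijection between the basis of $H_l^{(n)}$ and the Johnson-scheme ground set $X^J$ so that the matrices $A_k^{(l)}$ coincide with the Johnson adjacency matrices $A_k^J$, and then simply reading off the eigenvalues from this proposition.

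Your sketch is therefore not to be compared with a proof in the paper but with the standard literature proof. What you outline is essentially that proof: verifying the CAS axioms via the transitive $S_n$-action on pairs with prescribed intersection size, identifying the common eigenspaces with the Specht components of the multiplicity-free permutation module $\mathbb{C}^{X^J}\cong\bigoplus_{j=0}^{l}S^{(n-j,j)}$, and then extracting the eigenvalues through the $P$-polynomial (three-term) structure to land on the dual Hahn polynomials. All of this is correct and is precisely the Delsarte/Bannai--Ito route. Your honest flag that collapsing the alternating sum into the ${}_3F_2$ form is the real work is accurate; the paper sidesteps this entirely by citing the result, so if you want a self-contained argument you will indeed have to carry out that recurrence (or the harmonic-vector evaluation) explicitly.
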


\begin{remark}
If we describe the set $V$ as $V=\{1,2,...,n\}$, then any $l$-element subset
of $V$, i.e. the element of $X^{J},$ may be denoted in a natural way by $%
x\{i_{1},i_{2},...,i_{l}\}\equiv x\{i\}$ where $i_{1},i_{2},...,i_{l}$
denotes the elements of $V=\{1,2,...,n\}$ which are contained in the subset $%
x\{i_{1},i_{2},...,i_{l}\}\in X^{J}$\bigskip .
\end{remark}

\subsection{Proofs of main results}
\label{subsec:proof}
\subsubsection{Some facts about space $\cH_l^{(n)}$}

We have the following lemma:

\begin{lemma}
\bei
\item[(i)] The Hamming distance $d(x,y)$ between
two bit-strings $x$ and $y$ is even and satisfies $d(x,y)\leq \min\{l,n-l\}$.
\label{it:even}
\item[(ii)]
For any two pairs of vectors
$(x,y)$ and $(x',y')$ such that $d(x,y)=d(x',y')$  there exists permutation $\sigma$ such that
$(\sigma(x),\sigma(y))=(x',y')$.
\label{it:trans}
\item[(iii)] The operators $A_k$ mutually commute.\\
\label{it:comm}
\item[(iv)] Any operator acting on $\cH_l^{(n)}$ which is invariant under permutations of
qubits is a linear combination of those operators.
\label{it:inv}
\eei
\label{lem:A-prop}
\end{lemma}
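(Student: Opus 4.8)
The plan is to prove the four items in order, since (ii) is the workhorse behind both (iii) and (iv). Throughout I sort the $n$ coordinates of a pair $(x,y)$ of strings from $\cH_l^{(n)}$ into four classes according to the bit pair $(x_i,y_i)$, and write $a,b,c,e$ for the sizes of the classes labelled $(1,1),(1,0),(0,1),(0,0)$ respectively. For (i), the conditions that $x$ and $y$ each carry $l$ ones read $a+b=l$ and $a+c=l$, so $b=c$ and $d(x,y)=b+c=2b$ is even; writing $d(x,y)=2k$, the inequalities $a=l-k\ge0$ and $e=n-l-k\ge0$ give $0\le k\le\min\{l,n-l\}$. Thus $d(x,y)$ is even and of the form $2k$ with $0\le k\le\min\{l,n-l\}$, which is what is needed (and is the only range on which $A_k$ is nonzero).

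For (ii), the key point is that the quadruple of class sizes just computed is $(a,b,c,e)=(l-k,k,k,n-l-k)$, depending only on $k$ and hence only on $d(x,y)$. So if $d(x,y)=d(x',y')=2k$, the two induced partitions of $\{1,\dots,n\}$ into four labelled classes have matching class sizes; taking any permutation $\sigma\in S_n$ that sends each class of $(x,y)$ onto the class with the same label of $(x',y')$, one gets $\sigma(x)=x'$ and $\sigma(y)=y'$ simultaneously, since $x$ (resp.\ $y$) is supported exactly on the classes $(1,1)$ and $(1,0)$ (resp.\ $(1,1)$ and $(0,1)$). Equivalently, $\cH_l^{(n)}$ with the relations ``$d(x,y)=2k$'' is precisely the Johnson scheme of Proposition~\ref{prop:BI2}, whose relations are by construction the $S_n$-orbits on ordered pairs. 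This transitivity is the only step that calls for real care; the thing to get right is that the class-size vector is forced by $k$ alone, which is what makes the $S_n$-action transitive on each distance set.

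For (iii), each $A_k$ is a real symmetric $0/1$ matrix because the relation $d(x,y)=2k$ is symmetric in $x,y$, so $A_k^t=A_k$ and therefore $(A_iA_j)^t=A_jA_i$; it thus suffices to show $A_iA_j$ is itself symmetric. Its $(x,z)$ entry equals $\#\{y:d(x,y)=2i,\ d(z,y)=2j\}$, and choosing by (ii) a permutation $\sigma$ with $\sigma(x)=z$, $\sigma(z)=x$ yields a bijection of that set onto $\{y':d(z,y')=2i,\ d(x,y')=2j\}$, whose cardinality is the $(z,x)$ entry. Hence $A_iA_j=(A_iA_j)^t=A_jA_i$. (This is also just conditions (iv')--(v') of Proposition~\ref{prop:BI1} applied to the Johnson scheme.)

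For (iv), an operator $M$ on $\cH_l^{(n)}$ is invariant under permutations of qubits precisely when $\langle\sigma(x)|M|\sigma(y)\rangle=\langle x|M|y\rangle$ for all $\sigma\in S_n$ and all basis strings $x,y$, i.e.\ when the matrix entry $M_{x,y}$ is constant along the $S_n$-orbits of ordered pairs. By (ii) those orbits are precisely the sets $\{(x,y):d(x,y)=2k\}$ for $k=0,\dots,\min\{l,n-l\}$, so there are constants $c_k$ with $M_{x,y}=c_k$ whenever $d(x,y)=2k$, and this says exactly that $M=\sum_k c_kA_k$.
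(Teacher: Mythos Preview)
Your proof is correct and follows essentially the same strategy as the paper: the four-class bookkeeping in (i)--(ii) is equivalent to the paper's ``canonical form'' argument, and (iv) is identical. The only notable variation is in (iii): the paper builds the bijection $y\mapsto y'$ by flipping all bits where $x$ and $z$ differ (a map $g$ with $g(x)=z$, $g(z)=x$), whereas you obtain the swap via a permutation $\sigma$ coming from (ii) applied to the pairs $(x,z)$ and $(z,x)$; both yield the same symmetry $f_{ij}(x,z)=f_{ji}(x,z)$. Your statement of the bound in (i), namely $k\le\min\{l,n-l\}$ for $d(x,y)=2k$, is the correct and usable one (the inequality $d(x,y)\le\min\{l,n-l\}$ as literally written in the lemma is too strong; the paper's own argument also only yields the bound on $k$).
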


{\slshape Proof of (i)}\\
For a given pair $(x,y)$,
let us divide $x$ into two parts: the first one consists of positions, where $x$ and $y$
agree, and the second  one consists of positions, where they disagree.
Then $d(x,y)$ is the length of the latter part.
Since the number of $1$'s in $x$ and $y$ is equal,
and in the first part, by definition, it is also equal, then also
in the second part the number of $1$'s (and therefore also $0$'s)  is equal.
It follows that the length of the second part is even, and also it cannot be greater
than the total number $l$ of $1$'s in $x$  and than the total number of $0$'s in $x$(which is equal to $n-l$)
Thus, in particular, $2k=d(x,y)$ is an even number.
$\qed$

{\slshape Proof of (ii)}\\
Let us consider the partition  of $x$ into two parts as in the proof of item (i).
Let us further apply to $x$ and $y$ permutation, which moves all bits of the second part to the right, and then in each part of $x$ 
moves $1$'s to the right. Here is an example:
\be
\bea{l}
x=(010011011)\\
y=(001010111)
\eea \to
\bea{l}
(00111|1010)\\
(00111|0101)
\eea \to
\bea{l}
(00111|0011)\\
(00111|1100)
\eea
\ee
In this way we have transformed $(x,y)$ into $(x^0,y^0)$
where $x^0$ and $y^0$ are a kind of canonical vectors, which depend only
on $n,l$ and $k=d(x,y)/2$:
\ben
&&x^0=\underbrace{0\ldots0}_{n-l-k} \underbrace{1\ldots1}_{l-k}
\underbrace{0\ldots0}_{k} \underbrace{1\ldots1}_{k}\nonumber\\
&&y^0=\underbrace{0\ldots0}_{n-l-k} \underbrace{1\ldots1}_{l-k}
\underbrace{1\ldots1}_{k} \underbrace{0\ldots0}_{k}
\een
Let us call the permutation $\sigma_{xy}$.
It follows that if $d(x,y)=d(x',y')$, then $(\sigma(x),\sigma(y))=(x',y')$
with $\sigma=\sigma_{xy} \sigma_{x'y'}^{-1}$.
$\qed$

{\slshape Proof of (iii)}\\
Let us prove that $A_k$ commute. We directly check that
\be
A_k A_{k'} =\sum_{x,z} f_{kk'}(x,z) |x\>\<z|, A_{k'} A_k=\sum_{x,z} f_{kk'}(x,z) |x\>\<z|
\ee
where $f_{kl}(x,z)=|y: d(x,y)=2k, d(y,z)=2l|$.
It is enough to show that for any pair $(x,z)$ (recall that $wt(x)=wt(z)$
where $wt$ stands for weight, i.e. the number of $1's$)
we have
\be
f_{kk'}(x,z)=f_{k'k}(x,z).
\label{eq:ff}
\ee
We shall now establish a reversible mapping, which for fixed $(x,z)$
will map any $y$ satisfying $d(x,y)=2k,d(y,z)=2k'$ into  $y'$
satisfying $d(x,y')=2k',d(y',z)=2k$.
This would prove, that the number of $y$'s is the same as
the number of $y'$'s, hence \eqref{eq:ff} holds.
Let us now describe the mapping - call it g. Its action is to
flip all bits, where $x$ and $z$ differ.  Note that $g$ is its own inverse.
We now notice that $g(x)=z$ and $g(z)=x$.
We set $y'=g(y)$. Since $d(a,b)=d(g(a),g(b))$,
we have $d(x,y')=d(g(x),y)= d(z,y)=d(y,z)$ and similarly $d(y',z)=d(x,y)$.
$\qed$

{\slshape Proof of (iv)}\\
Consider arbitrary operator
\be
C=\sum_{x,y} c_{xy} |x\>\<y|
\ee
which is invariant under permutation of qubits, i.e. for any permutation $\pi$
we have
\be
V_\pi C V_\pi^\dagger = C
\label{eq:c-inv}
\ee
Let us first argue, that if two pairs $(x,y)$ and $(x',y')$
can be joined by some permutation $\sigma$ (i.e. $\sigma(x)=x'$ and $\sigma(y)=y'$,
then $c_{xy}=c_{x'y'}$. Let us note that
\be
V_{\sigma^{-1}} C V_{\sigma^{-1}}^\dagger =\sum_{x,y} c_{x,y}|\sigma^{-1}(x)\>\<\sigma^{-1}(y)|= \sum_{x,y} c_{\sigma(x),\sigma(y)}
|x\>\<y|
\ee
Thus from \eqref{eq:c-inv} we get
\be
c_{x,y}=c_{\sigma(x),\sigma(y)}
\ee
Now, from (ii) we know, that  if two pairs have the same Hamming
distance, they can be joined by a permutation in the above sense.
Thus $c_{xy}$ is constant on pairs that have a fixed Hamming distance,
which proves that $C$ is a linear combination of operators $A_k$. $\qed$

Since our subspace  $\cH_l^{(n)}$ is invariant under permutations of systems,
it is a subrepresentation of the representation \eqref{eq:swaps} of $S_n$ in $(\C^d)^{\ot n}$.
It turns out, that if we decompose $\cH_l^{(n)}$ into irreps of $S_n$, each irrep will appear
at most once. Namely we have
\begin{lemma}
The space $\cH_l^{(n)}$ has the following decomposition into irreps of
$S_n$:
\be
\cH_l^{(n)}=\oplus_{j=0}^{\min\{l,n-l\}} B_j
\ee
where $B_j$ is irrep labeled by partitions $\lambda=(n-j,j)$.
Moreover the operators $A_k^{(l)}$ have the following form
\be
A_k^{(l)}=\sum_{j=0}^{\min\{l,n-l\}} \alpha_k(j) P_j^{(l)}
\label{eq:A-spectral}
\ee
where $P_j^{(l)}$ are projectors which project onto irreps $B_j$.
\label{lem:decomp}
\end{lemma}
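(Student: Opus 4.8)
The plan is to realise $\cH_l^{(n)}$ as a weight subspace inside $(\C^2)^{\ot n}$, read off its $S_n$-content from Schur--Weyl, and then deduce the spectral form \eqref{eq:A-spectral} of $A_k^{(l)}$ from multiplicity-freeness.

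First I would use the observation made just before the lemma: $\cH_l^{(n)}$ is a subrepresentation of the representation \eqref{eq:swaps} of $S_n$ on $(\C^2)^{\ot n}$, for which Schur--Weyl \eqref{eq:Schur-Weyl} (with $d=2$) reads $(\C^2)^{\ot n}=\bigoplus_{j=0}^{\lfloor n/2\rfloor}\hcal^U_{(n-j,j)}\ot\hcal^S_{(n-j,j)}$. The operator $\hat N$ that counts the number of $1$'s is symmetric, hence commutes with every $V_\pi$, so it preserves each isotypic block and acts there as $N_j\ot\id$, where $N_j$ is the torus weight operator on the $U(2)$-module $\hcal^U_{(n-j,j)}$. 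Since $\cH_l^{(n)}$ is precisely the eigenspace $\{\hat N=l\}$, it decomposes as $\bigoplus_j W_{l,j}\ot\hcal^S_{(n-j,j)}$, where $W_{l,j}\subseteq\hcal^U_{(n-j,j)}$ is the weight-$l$ subspace.

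Next I would compute $\dim W_{l,j}$ by counting semistandard Young tableaux of shape $(n-j,j)$ with entries in $\{1,2\}$ and exactly $l$ twos. Column-strictness forces each of the first $j$ columns to be a $1$ over a $2$, leaving $n-2j$ cells in the top row which, being weakly increasing, must read $1\cdots1\,2\cdots2$; hence there is exactly one such tableau when $j\le l\le n-j$ and none otherwise. Therefore $\dim W_{l,j}=1$ for $0\le j\le\min\{l,n-l\}$ and $0$ for larger $j$, which yields $\cH_l^{(n)}=\bigoplus_{j=0}^{\min\{l,n-l\}}B_j$ with $B_j\cong\hcal^S_{(n-j,j)}$ occurring exactly once; consistency may be checked against $\sum_{j=0}^{\min\{l,n-l\}}f_j=\binom nl$, a telescoping of $f_j=\binom nj-\binom n{j-1}$.

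Finally, for \eqref{eq:A-spectral}: by Lemma~\ref{lem:A-prop}(iv) each $A_k^{(l)}$ is permutation-invariant, i.e. lies in $\mathrm{End}_{S_n}(\cH_l^{(n)})$; since the decomposition above is multiplicity-free, Schur's lemma identifies this commutant with $\bigoplus_j\C\,P_j^{(l)}$, so $A_k^{(l)}=\sum_j\alpha_k(j)P_j^{(l)}$ for suitable scalars $\alpha_k(j)$. One can in fact bypass Schur--Weyl for the multiplicity-freeness itself: by Lemma~\ref{lem:A-prop}(iii)--(iv) the commutant is spanned by the mutually commuting operators $A_k^{(l)}$, hence is abelian, hence the module is multiplicity-free; Schur--Weyl is then needed only to name the labels $(n-j,j)$. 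I expect the \emph{weight-multiplicity step} to be the crux: a bare dimension count does not isolate the diagrams $j=0,\dots,\min\{l,n-l\}$ — e.g. for $n=6$, $l=2$ the shapes $j\in\{0,2,3\}$ also have dimensions summing to $\binom 62$ — so one genuinely needs the internal structure of the $U(2)$-irrep (equivalently, Young's rule for the permutation module $M^{(n-l,l)}$).
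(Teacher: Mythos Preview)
Your proof is correct and takes a somewhat different route from the paper's own argument. The paper proceeds in two steps: first it invokes Lemma~\ref{lem:A-prop}(iii)--(iv) to conclude that the commutant of $S_n$ on $\cH_l^{(n)}$ is abelian, hence the module is multiplicity-free; second it bounds the label $j$ by observing that the Young symmetrizer for $(n-j,j)$ annihilates any vector with fewer than $j$ ones or fewer than $j$ zeros (the paper mentions in passing that this is ``related to \ldots\ semi-standard Young tableaux'', and defers a full alternative proof to the character computation in the Appendix). You instead identify $\cH_l^{(n)}$ as a weight space under the diagonal $\mathfrak{gl}_2$-action and count SSYT of shape $(n-j,j)$ with content $(n-l,l)$, which yields both multiplicity-freeness and the exact range $0\le j\le\min\{l,n-l\}$ in one stroke; this is essentially Young's rule for the permutation module $M^{(n-l,l)}$, as you note. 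Your approach has the advantage of being self-contained on the point that every $j$ in the stated range actually occurs---the paper's inline argument only gives the upper bound $j\le\min\{l,n-l\}$ and tacitly relies on later material (the nonvanishing of $\tilde P_j|x_0\rangle$) or the Appendix for the converse. You also correctly observe that the paper's abelian-commutant shortcut is available for multiplicity-freeness alone, which matches the paper's own reasoning.
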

\begin{remark}
 This means, in particular, that
\be
\hcal^U_\lambda \ot \hcal^S_\lambda \cap \hcal_l^{(n)} = \C \ot \hcal^S_\lambda
\ee
for $\lambda$ corresponding to irreps which appear in the decomposition of $\hcal_l^{(n)}$.
\end{remark}

{\bf Proof.}
The items (iii) and (iv) of Lemma \ref{lem:A-prop} imply, that the set of all the operators
acting on $\hcal_l^{n}$ which are invariant under permutations is a commuting set. Therefore in the decomposition $\hcal_l^{n}$ into irreps, the multiplicity spaces have to be trivial. That $j$ must be no greater than $l$ and than $n-l$
follows from the fact  that Young symmetrizes corresponding to irrep $(n-j,j)$ kill vectors with less  number of $1$'s than $j$ and also those with less $0$'s than $j$. This fact is easy to directly verify
basing on properties of operators \eqref{eq:SA}.  (It is actually related to construction of the basis in multiplicity space of irreps of $S_n$ via so-called semi-standard Young tableaux.)

In Appendix we present and alternative proof which follows from explicit formula for characters  of irreps of $S_n$ labeled by two-row Young diagram.

\subsubsection{Proof of Theorem \ref{thm:eig-A-Young}}
\label{subsec:eig-A-Young}
Due to \eqref{eq:A-spectral}, to compute eigenvalues, it is enough to take arbitrary (not necessarily
normalized) vector $|\psi\>$ from $B_j$.
Then we have
\be
\alpha_k(j) = \frac{\<\psi|A_k^{(l)}|\psi\>}{\<\psi|\psi\>}
\label{eq:alpha_psi}
\ee

Consider now the following tableau
\be
((1,\ldots, n-j),(n-j+1,\ldots,n)),
\label{eq:tab}
\ee
and
denote the related symmetrizer by $\tilde P_j$ (explicitly $\tilde P_j$
is equal to $P^{\lambda,a}$ of \eqref{eq:Ysym} with $\lambda=(j,n-j)$ and $a$ being the
above tableau).

If we now take a vector
\be
|x_0\> = |\underbrace{0\ldots 0}_{n-l}\underbrace{1\ldots 1}_{l}\>
\label{eq:x_zero}
\ee
then
\bei
\item[(i)] The vector $\tilde P_j |x_0\>$ is nonzero (provided $j\leq \min\{l,n-l\})$
\item[(ii)] The vector $\tilde P_j |x_0\>$ belongs to $B_j$.
\eei
The first fact is to easy verify directly by using explicit form for Young symmetrizers \eqref{eq:Ysym}. (It is related to the fact, that the number of semi-standard
Young tableaux is equal to dimension of multiplicity space). To prove the second item,
note that the projection onto $B_j$ which we shall denote by $P_j$ is of the following form
with respect to the decomposition \eqref{eq:Schur-Weyl}
\be
P_j=|w\>\<w|\ot P_j^S\, ,\quad |w\>\in\hcal_\lambda^U,
\label{eq:proj-lambda}
\ee
where $P_j^S$ projects onto the space $\hcal^S_\lambda$ of the decomposition,
with $\lambda=(j,n-j)$.
Thus, the projector $P_j$ projects onto a subspace being intersection of $\hcal^U_\lambda \ot\hcal^S_\lambda$ and $\hcal_l^{(n)}$, with this $\lambda$.
Let us denote $|\phi\>=\tilde P_j |x_0\>$. Let us aruge, that $\phi$ indeed belongs to 
those two spaces. 
On one hand, $\phi$ belongs to $\hcal_l^{(n)}$, because $x_0$ does, 
and the symetrizers are build out of permutation operators \eqref{eq:swaps}, which leave
the subspace $\hcal_l^{(n)}$ invariant. On the other hand, it belongs to 
$\hcal^U_\lambda \ot\hcal^S_\lambda$ as follows from the form of Young symmetrizers 
given by \eqref{eq:Y-decomp}.


Thus $\tilde P_j x_0$ can be taken as a vector  to be inserted  into \eqref{eq:alpha_psi}.
Doing this, and using the fact that operators $A_k$ commute with permutations
we obtain the following lemma:
\begin{proposition}
For fixed $k$ we have
\begin{equation}
\label{lambda1}
\alpha_k(j)=\frac{\langle x_0|\operatorname{\sfS}\operatorname{\sfA}A_k\operatorname{\sfS}|x_0\rangle }{\langle x_0|\operatorname{\sfS}\operatorname{\sfA}\operatorname{\sfS}|x_0\rangle},
\end{equation}
where $j\in\{0,1,...\min\{l,n-l\}\}$ labels all allowed partitions, $\sfS$ and $\sfA$ are symmetrizers for  partition labeled by $j$, and $|x_0\>=|\underbrace{0\ldots 0}_{n-l}\underbrace{1\ldots 1}_{l}\>$.
\end{proposition}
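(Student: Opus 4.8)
The plan is to read off $\alpha_k(j)$ as the eigenvalue of $A_k^{(l)}$ on the irrep $B_j$ via the spectral form \eqref{eq:A-spectral}, and then to evaluate the Rayleigh quotient \eqref{eq:alpha_psi} on the explicit test vector $|\psi\rangle=\tilde P_j|x_0\rangle$. Its two required properties -- that it is nonzero and that it lies in $B_j$ -- are exactly items (i) and (ii) established just above, so what is left is a purely algebraic rewriting of $\langle x_0|\tilde P_j^\dagger A_k^{(l)}\tilde P_j|x_0\rangle$ and $\langle x_0|\tilde P_j^\dagger\tilde P_j|x_0\rangle$ into the claimed quotient.

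First I would use \eqref{eq:Ysym} to write $\tilde P_j=\frac{f_j}{n!}\,\sfA\,\sfS$, where $\sfS$ is the product of the two row-symmetrizers of the tableau \eqref{eq:tab} and $\sfA$ the product of its column-antisymmetrizers. Since the diagram $(n-j,j)$ has columns of length one or two, only the $j$ columns of length two contribute, and $\sfA=\prod_{i=1}^{j}\bigl(\I-V_{(i,\,n-j+i)}\bigr)$. Both $\sfS$ and $\sfA$ are self-adjoint, being sums of permutation operators $V_\pi$ with real coefficients over sets of permutations closed under inversion, so $\tilde P_j^\dagger=\frac{f_j}{n!}\,\sfS\,\sfA$. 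Substituting into \eqref{eq:alpha_psi} and cancelling the factor $(f_j/n!)^2$ yields
\[
\alpha_k(j)=\frac{\langle x_0|\,\sfS\sfA\,A_k^{(l)}\,\sfA\sfS\,|x_0\rangle}{\langle x_0|\,\sfS\sfA\,\sfA\sfS\,|x_0\rangle}.
\]

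The only nontrivial point is to collapse the inner $\sfA\sfA$ to a single $\sfA$. Because $j\le n/2$, the transpositions $(i,n-j+i)$, $i=1,\dots,j$, move pairwise disjoint index-pairs; hence the factors of $\sfA$ commute and, since $(\I-V)^2=2(\I-V)$ for any transposition $V$, one gets $\sfA^2=2^{j}\sfA$. Furthermore $A_k^{(l)}$ is permutation invariant, $V_\pi A_k^{(l)}=A_k^{(l)}V_\pi$ for all $\pi$, so it commutes with $\sfA$; therefore $\sfA A_k^{(l)}\sfA=\sfA^2A_k^{(l)}=2^{j}\sfA A_k^{(l)}$ and likewise $\sfA\sfA=2^{j}\sfA$. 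The common factor $2^{j}$ then cancels between numerator and denominator, leaving precisely the formula of the proposition. I do not anticipate a genuine obstacle: the substantive facts (that each $S_n$-irrep occurs at most once in $\cH_l^{(n)}$, and that $\tilde P_j|x_0\rangle$ is a nonzero vector of $B_j$) have been proved before the statement, so the remaining work is careful bookkeeping of adjoints, of which permutations enter $\sfS$ and $\sfA$, and of the elementary identity $\sfA^2=2^{j}\sfA$.
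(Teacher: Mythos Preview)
Your argument is correct and follows the same route the paper sketches. The paper's own justification is the single sentence ``Thus $\tilde P_j x_0$ can be taken as a vector to be inserted into \eqref{eq:alpha_psi}. Doing this, and using the fact that operators $A_k$ commute with permutations we obtain the following lemma''; you have simply made explicit the step the paper leaves implicit, namely that the inner pair $\sfA\,\sfA$ collapses via $\sfA^2=2^{j}\sfA$ and the factor $2^{j}$ cancels between numerator and denominator.
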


\begin{proof}[of theorem \ref{thm:eig-A-Young}]
Now we want to calculate explicit combinatorial formula for $\alpha_k(j)$ which depends only on given partition $j$, number of zeros $n-l$ and number of ones $l$. Before we do it, notice  that all operators $A_k$ commute with all operators $\operatorname{\sfA}$ and $\operatorname{\sfS}$, then
\begin{equation}
\langle x_0| \operatorname{\sfS\sfA}A_{k}\operatorname{\sfS}|x_0\rangle=\langle x_0|\operatorname{\sfS\sfA\sfS}A_k|x_0\rangle=\langle x_0|\operatorname{\sfS\sfA\sfS}|A_kx_0\rangle=\sum_{y\in\mathcal{Y}_{|x_0\rangle}}\langle x_0|\operatorname{\sfS\sfA\sfS}|y\rangle,
\end{equation}
where $\mathcal{Y}_{|x_0\rangle}:=\{|y\rangle \in H_l^{(n)} \ | \ d(|x_0\rangle,|y\rangle)=2k\}$. So we can rewrite equation~\eqref{lambda1} in a form
\begin{equation}
\alpha_k(j)=\sum_{y\in\mathcal{Y}_{|x_0\rangle}}\frac{\langle x_0|\operatorname{\sfS}\operatorname{\sfA}\operatorname{\sfS}|y\rangle }{\langle x_0|\operatorname{\sfS}\operatorname{\sfA}\operatorname{\sfS}|x_0\rangle}.
\label{eq:alfa1}
\end{equation}
Notice that operator $\operatorname{A}_k$ acts in fixed subspace $H_l^n$, so the number of $1$'s in vectors $|x_0\>$ and $|y\>=\operatorname{A}_k|x_0\>$ is the same.

Since the operators $\sfS$ and $\sfA$ are constructed with respect to
our chosen tableau \eqref{eq:tab}, it is convenient to decompose any given vector $|y\>$
into three parts related to the tableaux.

\begin{equation}
\begin{tabular}{l|c|c|}\cline{2-3}
{\small the first row}$ \  \  \quad \rightarrow$&$ \qquad y_1 \qquad $ & $ \qquad y_3 \qquad $ \\
\cline{2-3}
{\small the second row}$ \ \rightarrow$&$ \qquad y_2 \qquad $\\
\cline{2-2}
\end{tabular}
\label{eq:3-parts}
\end{equation}
i.e. $|y\>=|y_1\>_1|y_2\>_2|y_3\>_3$.

As we will prove in Lemma \ref{lem:ASA} the elements of sum \eqref{eq:alfa1}
depend on $y$ only through the number of $1$'s in the first (or, equivalently, the second) row.
Thus it is convenient to  partition the set $\mathcal{Y}_{|x_0\rangle}$ into smaller sets defined
as
\be
\mathcal{Y}_{|x_0\rangle}^m=
\{|y\rangle \in H_l^{(n)}: d(x_0,y)=2k, wt(y_1y_3)=m\}
\ee
where $wt(x)$ denotes the number of $1$'s in $x$.
We can then rewrite \eqref{eq:alfa1} as follows
\be
\alpha_k(j)=\sum_{m=\operatorname{max}\{k,l-j\}}^{\operatorname{min}\{l,l-j\}}\sum_{y\in\mathcal{Y}_{|x_0\rangle}^m}\frac{\langle x_0|\operatorname{\sfS}\operatorname{\sfA}\operatorname{\sfS}|y\rangle }{\langle x_0|\operatorname{\sfS}\operatorname{\sfA}\operatorname{\sfS}|x_0\rangle}
=\sum_{m=\operatorname{max}\{k,l-j\}}^{\operatorname{min}\{l,l-j\}} |\mathcal{Y}_{|x_0\rangle}^m|
\frac{\langle x_0|\operatorname{\sfS}\operatorname{\sfA}\operatorname{\sfS}|y_0\rangle }{\langle x_0|\operatorname{\sfS}\operatorname{\sfA}\operatorname{\sfS}|x_0\rangle}.
\label{eq:alfa2}
\ee
where $y_0$ is arbitrary vector with $m$ $1$'s in upper row (see \eqref{eq:3-parts}).

Inserting the values of $|\mathcal{Y}_{|x_0\rangle}^m|$ and
$\langle x_0|\operatorname{\sfS}\operatorname{\sfA}\operatorname{\sfS}|y_0\rangle /\langle x_0|\operatorname{\sfS}\operatorname{\sfA}\operatorname{\sfS}|x_0\rangle$
obtained in L emmas \ref{l1} and \ref{lem:ASA}, respectively, we obtain
the formula
\be
\label{eq:for1}
\alpha_k(j)=\sum_{m=\operatorname{max}\{k,l-j\}}^{\operatorname{min}\{l,l-j\}}(-1)^{j-l+m}\frac{{n-j \choose n-l}}{{n-j \choose m}}{n-l \choose m-k}{l-j \choose m-k}{j \choose l-m}.
\ee
Now using transformation $m=l-j+k-r$ we obtain equation~\eqref{eq:eig-A-Young}. This ends the proof of Theorem \ref{thm:eig-A-Young}.
\end{proof}

Here we present the two lemmas used in the above proof.

\begin{lemma}
\label{l1}
The set  $\mathcal{Y}_{|x_0\rangle}^m$ of all vectors $y$ which have $m$ $1$'s
in the first row, and which satisfy $d(x_0,y)=2k$ has the number of elements given by
the following formula:
\begin{equation}
\label{card}
|\mathcal{Y}_{|x_0\rangle}^m|={ n-l \choose k } {l-j \choose m-k } { j \choose
l-m }.
\end{equation}
\end{lemma}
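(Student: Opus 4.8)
The plan is a direct counting argument. It is convenient to describe the first row of the tableau \eqref{eq:tab} as the position set $\{1,\ldots,n-j\}$ and the second row as $\{n-j+1,\ldots,n\}$, so that the quantity $wt(y_1y_3)$ appearing in the definition of $\mathcal{Y}_{|x_0\rangle}^m$ is simply the number of $1$'s of $y$ among the first $n-j$ positions. Since $j\le\min\{l,n-l\}$ gives $n-l\le n-j$, in the reference vector $x_0$ (whose first $n-l$ entries are $0$ and whose last $l$ entries are $1$) all $n-l$ zeros lie in the first row; the first row then also contains $l-j$ of the ones of $x_0$, while the second row, consisting of $j$ positions, is made up entirely of ones of $x_0$.

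Next I would re-express the constraint $d(x_0,y)=2k$. Because $x_0$ and $y$ have equal weight $l$, the positions where they differ split into equally many positions where $x_0$ has $0$ and $y$ has $1$ (ones ``gained'') and positions where $x_0$ has $1$ and $y$ has $0$ (ones ``lost''); hence $d(x_0,y)=2k$ holds exactly when $y$ has $k$ ones among the $n-l$ zero-positions of $x_0$ and $l-k$ ones among its $l$ one-positions. The $k$ gained ones necessarily lie in the first row, since the zero-block of $x_0$ is contained there. If $p$ of the $l-k$ kept ones lie in the first row, then $y$ has $k+p$ ones in the first row, so the requirement $wt(y_1y_3)=m$ forces $p=m-k$, which leaves $(l-k)-(m-k)=l-m$ kept ones in the second row.

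Finally, a vector $y\in\mathcal{Y}_{|x_0\rangle}^m$ is uniquely and freely determined by three independent choices: which $k$ of the $n-l$ zero-positions of $x_0$ it flips to $1$, which $m-k$ of the $l-j$ first-row one-positions of $x_0$ it keeps as $1$, and which $l-m$ of the $j$ second-row one-positions of $x_0$ it keeps as $1$; all remaining entries are then forced. Multiplying the counts of these choices gives $|\mathcal{Y}_{|x_0\rangle}^m|=\binom{n-l}{k}\binom{l-j}{m-k}\binom{j}{l-m}$, with the usual convention that a binomial coefficient vanishes when its parameters are out of range, so that $m$ is automatically restricted to the values for which all three factors are positive. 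The one step deserving care --- and the only plausible source of an off-by-one or sign slip --- is the reduction in the second paragraph: translating ``Hamming distance $2k$'' into ``$k$ ones gained and $k$ ones lost'', and invoking $j\le n-l$ to guarantee that the zeros of $x_0$ never reach into the second row. Everything afterwards is an elementary product of binomial choices.
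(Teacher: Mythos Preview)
Your argument is correct and follows essentially the same route as the paper's proof: both split the positions of $x_0$ into the zero-block (entirely in the first row), the first-row one-block of size $l-j$, and the second-row one-block of size $j$, then observe that $d(x_0,y)=2k$ forces exactly $k$ ones of $y$ into the zero-block, after which the first-row constraint pins down the distribution of the remaining $l-k$ ones and the three binomial factors follow. Your exposition is in fact cleaner than the paper's, which introduces auxiliary quantities $l_1,l_2,l_3,k_1,k_2$ and derives $l_1=k$ via $2k=k_1+k_2$, but the underlying counting is identical.
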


\begin{proof}

The vector $|x_0\>$ if inscribed into our Young diagram looks as follows:
\begin{equation}
\label{p1}
\begin{split}
\text{first row}\rightarrow&\text{\phantom{.}}0\cdots 0| \cdots 0 \ 1\cdots 1 \\
\text{second row}\,\rightarrow&\underbrace{1\cdots 1|}_j\\
&\qquad \quad \downarrow \ \operatorname{A}_k \\
&\text{\phantom{.}}|0\cdots  \cdots \cdots\ \cdot \overbrace{1 \ 1}^{l_1}|\cdots  \ 0\cdots 0 \ \overbrace{1\cdots 1|}^{l_3}\\
&\underbrace{|1\cdots 0\cdots 0 \ 1\cdots \ 1|}_j
\end{split}
\end{equation}

Let us denote by $m=l_1+l_3$  the number of ones in the first row of vector $|x_0\rangle$ and by $l_2$ the number of ones in the second row also in $|x_0\>$.  Hamming distance $2k$ is composed by two "subdistances" $k_1$ and $k_2$, so $2k=k_1+k_2$. Number $k_1$ corresponding to Hamming distance between the first row of vector $|x_0\rangle$ and first row of vector $|y\rangle$, $k_2$ corresponding to Hamming distance between first rows of our vectors. We also denote by $l_3$ the number of ones in the first row of vector $|y\>$  which overlap with ones in the first row of vector $|x_0\>$ and by $l_1$ the number of ones in the first row of vector $|y\>$ which overlap with zeros in the first row of vector $|x_0\>$. Thanks to this we can find
\begin{equation}
k_1=l-j+l_1-l_3,\qquad k_2=j-l_2,
\end{equation}
so Hamming distance $k$ is equal to
\begin{equation}
\label{h1}
2k=k_1+k_2=l+l_1-l_3-l_2=l+2l_1-\underbrace{(l_1+l_2+l_3)}_{l}=2l_1.
\end{equation}
Number of permutations in the second row preserving distance $k_2$ is equal to ${j\choose l_2}$. Number of permutations in a first row preserving distance $k_1$ is equal to ${n-l\choose l_1}{l-j\choose l_3}$. Finally using equations \eqref{h1}, $l=m+l_2$ and $l_3=m-k$ we obtain cardinality of the set $\mathcal{Y}_{|x_0\rangle}$:
\begin{equation}
|\mathcal{Y}_{|x_0\rangle}|={n-l\choose k}{l-j\choose m-k}{j\choose l-m}.
\end{equation}
\end{proof}

\begin{lemma}
\label{lem:ASA}
Let $y$ have $m$ $1$'s in upper row. Then
\be
\label{eq:asa}
\frac{\langle x_0|\operatorname{\sfS}\operatorname{\sfA}\operatorname{\sfS}|y\rangle }{\langle x_0|\operatorname{\sfS}\operatorname{\sfA}\operatorname{\sfS}|x_0\rangle}
=(-1)^{j-l+m}\frac{{n-j \choose l-j}}{{n-j \choose m}}
\ee
\end{lemma}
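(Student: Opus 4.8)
The plan is to evaluate the unnormalised matrix element $\langle x_0|\sfS\sfA\sfS|y\rangle$ directly from the explicit form of the symmetrizers attached to the tableau \eqref{eq:tab}. For that tableau $\sfS=\sfS^{(1)}\sfS^{(2)}$ is the product of the full symmetrizers on the first $n-j$ coordinates (first row) and the last $j$ coordinates (second row), while $\sfA=\prod_{i=1}^{j}(\I-\tau_i)$, where $\tau_i$ transposes the $i$-th coordinate of the first row with the $i$-th coordinate of the second row (the one-box columns contribute only the identity). First I would apply $\sfS$ to a standard basis vector: since each $\sfS^{(r)}$ merely symmetrises within a row, $\sfS|y\rangle=c_m\,|\Sigma_m\rangle$, where $m=\mathrm{wt}(y_1y_3)$ is the number of $1$'s in the first row, $c_m=m!\,(n-j-m)!\,(l-m)!\,(j-l+m)!$, and $|\Sigma_m\rangle$ is the unnormalised uniform superposition of all basis vectors with $m$ ones in the first row (hence $l-m$ in the second). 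This already shows that $\langle x_0|\sfS\sfA\sfS|y\rangle$ depends on $y$ only through $m$. Since $x_0$ itself has exactly $l-j$ ones in the first row, we get $\langle x_0|\sfS=c_{l-j}\langle\Sigma_{l-j}|$, and the ratio in the lemma collapses to $\dfrac{c_m}{c_{l-j}}\cdot\dfrac{\langle\Sigma_{l-j}|\sfA|\Sigma_m\rangle}{\langle\Sigma_{l-j}|\sfA|\Sigma_{l-j}\rangle}$.

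Next I would compute $\langle\Sigma_{l-j}|\sfA|\Sigma_m\rangle$ by expanding $\sfA=\sum_{T\subseteq\{1,\dots,j\}}(-1)^{|T|}\tau_T$ with $\tau_T=\prod_{i\in T}\tau_i$. Because $\tau_T$ permutes basis vectors, $\langle\Sigma_{l-j}|\tau_T|\Sigma_m\rangle$ is simply the number of $z\in\mathrm{supp}(\Sigma_m)$ with $\tau_T(z)\in\mathrm{supp}(\Sigma_{l-j})$. The support of $\Sigma_{l-j}$ consists of the vectors whose second row is all $1$'s (weight conservation then forces $l-j$ ones in the first row automatically, so no extra condition is imposed), and $\tau_T(z)$ has this property precisely when $z$ carries a $1$ in the first-row slot of every column in $T$ and a $1$ in the second-row slot of every column in $\{1,\dots,j\}\setminus T$. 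Counting the remaining free positions gives $\binom{n-j-t}{m-t}\binom{t}{l-m-j+t}$ with $t=|T|$, independent of which $T$ of size $t$ is chosen, so
\[
\langle\Sigma_{l-j}|\sfA|\Sigma_m\rangle=\sum_{t}(-1)^{t}\binom{j}{t}\binom{n-j-t}{m-t}\binom{t}{l-m-j+t}.
\]

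To evaluate this I would set $q=m-l+j$ (which is $\ge 0$ for all admissible $m$), rewrite $\binom{t}{l-m-j+t}=\binom{t}{q}$, use $\binom{j}{t}\binom{t}{q}=\binom{j}{q}\binom{j-q}{t-q}$ to pull $\binom{j}{q}$ outside, and then apply the standard identity $\sum_{s}(-1)^{s}\binom{a}{s}\binom{b-s}{c-s}=\binom{b-a}{c}$ with $a=j-q$, $b=n-j-q$, $c=m-q=l-j$. This yields $\langle\Sigma_{l-j}|\sfA|\Sigma_m\rangle=(-1)^{q}\binom{j}{q}\binom{n-2j}{l-j}$, and specialising to $m=l-j$ (so $q=0$) gives $\langle\Sigma_{l-j}|\sfA|\Sigma_{l-j}\rangle=\binom{n-2j}{l-j}$. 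Substituting back, writing $\binom{j}{q}=j!/(q!\,(j-q)!)$ and inserting the explicit $c_m$, $c_{l-j}$, the factorials cancel and leave $(-1)^{j-l+m}\binom{n-j}{l-j}/\binom{n-j}{m}$, which is the claim; a quick small-case check (e.g.\ $n=4$, $l=2$, $j=1$, $m=2$, giving ratio $-1$) confirms signs and normalisation.

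The step I expect to be the main obstacle is the bookkeeping in the middle paragraph: pinning down the column/row constraints exactly when expanding $\sfA$, getting the binomial factors and the admissible ranges of $t$ and $m$ right, and making explicit that weight conservation renders the first-row condition automatic so that no constraint is lost. By contrast, the closing binomial-sum evaluation is routine once the sum is in the displayed form, provided one checks that the vanishing ("phantom") terms of the binomial coefficients do not interfere with the hypergeometric/Vandermonde identity.
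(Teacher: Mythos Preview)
Your proof is correct and takes a genuinely different route from the paper's.

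The paper first invokes an auxiliary commutation identity (Lemma~\ref{le:aux}) to reduce $\sfS\sfA\sfS$ to $(j!)^2\,\sfS_{13}\sfA\sfS_{13}$, then uses $\sfA\propto\sfA^2$ to split the matrix element into a factor on the column part (carrying the sign $(-1)^{j-l+m}$) and a factor on the first-row part (a ratio of stabilizer sizes $f(y_{13})/f(x_{13})$), which it finally identifies with the binomial ratio.  Your argument bypasses the auxiliary lemma entirely: you apply $\sfS$ on both sides to collapse $y$ to the symmetric state $|\Sigma_m\rangle$ (making the $m$-dependence manifest at once), expand $\sfA=\sum_T(-1)^{|T|}\tau_T$ directly, do an explicit count, and close with a Vandermonde-type summation.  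Your route is more elementary and entirely self-contained; the paper's is more structural (it never needs to evaluate a binomial sum) and would perhaps adapt more readily to other tableaux, but at the cost of the extra lemma and some fairly compressed manipulations.  The ``main obstacle'' you flagged---the column/row bookkeeping and the observation that the second row being all $1$'s already forces the first-row weight---is handled correctly; the ranges of $t$ work out because the binomials $\binom{n-j-t}{m-t}$ and $\binom{t}{l-m-j+t}$ vanish outside the admissible window, so the Vandermonde identity applies without edge effects.
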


\begin{proof}
Consider fixed vector $|x_0\rangle$ and an arbitrary vector $|y\rangle$ with $l$ ones. Above-mentioned vectors can be decomposed into Young diagrams of shape $(n-j,j)$ like in lemma~\ref{l1}.  In every such diagram we isolated three parts like on  picture below:

\begin{equation}
\label{p12prim}
\begin{tabular}{l|c|c|}\cline{2-3}
{\small first row}$ \  \  \quad \rightarrow$&$ \qquad \psi_1 \qquad $ & $ \qquad \psi_3 \qquad $ \\
\cline{2-3}
{\small second row}$ \ \rightarrow$&$ \qquad \psi_2 \qquad $\\
\cline{2-2}
\end{tabular}
\end{equation}
then we can write
\begin{equation}
\label{t1}
|x_0\rangle=|0^{\otimes j}\rangle_1|1^{\otimes j}\rangle_2|x_3\rangle_3,\qquad |y\rangle=|y_1\rangle_1 |y_2\rangle_2 |y_3\rangle_3.
\end{equation}
Antisymmetric operator acts on part $\psi_1$ and $\psi_2$ of partition~\eqref{p12prim}, so $\operatorname{\sfA}=\operatorname{\sfA}_{12}$. Symmetric operator acts on parts $\psi_1,\psi_2,\psi_3$, so $\operatorname{\sfS}=\operatorname{\sfS}_{13}\operatorname{\sfS}_2$. We will use this shorthand notations.\\ As we prove in lemma~\ref{le:aux} in Appendix we have
\be
\sfS_{13}\sfS_2\sfA\sfS_2\sfS_{13}=\sfS_2^2\sfS_{13}\sfA\sfS_{13}=
\sfS_{13}\sfA\sfS_{13}\sfS_2^2,
\ee
hence 
\be
\sfS\sfA\sfS|x_0\>=(j!)^2\sfS_{13}\sfA\sfS_{13}|x_0\>, 
\label{eq:SAS-13}
\ee
since $\sfS_2^2|x_2\>_2=\sfS_2^2|1^{\otimes j}\>_2=j!j!$.

Our next task is explicit calculation of scalar products $\<x_0|\sfS_{13}\sfA\sfS_{13}|x_0\>$ and $\<x_0|\sfS_{13}\sfA\sfS_{13}|y\>$. We have 
\be
\label{eq:p1}
\begin{split}
j!\sfA \sfS_{13}|x_0\>=j!\sfA |x_2\>_2 \sfS_{13} |x_1\>_1|x_3\>_3=j!f(x_{13})\sfA |x_2\>_2 |x^S_{13}\>_{13},
\end{split}
\ee
where $f(x_{13})$ is the number of permutations which do not change  vector $|x_{13}\>_{13}$ and by superscript $S$ we denote symmetric states: $|x^S\>=\hat{\sfS}(|x\>)$, (see proof of lemma~\ref{lem:1}).
\be
\label{eq:p2}
\begin{split}
j!\sfA\sfS_{13}|y\>=j!\sfA|y_2\>_2\sfS_{13}|y_1\>_1|y_3\>_3=j!f(y_{13})\sfA|y_2\>_2|y_{13}^S\>_{13},
\end{split}
\ee
where $f(y_{13})$ is number of permutations which do not change vector $|y_{13}\>$. Finally scalar products are
\be
\begin{split}
\label{eq:p0}
(j!)^2\<x_0|\sfS_{13} \sfA \sfS_{13} |y\>&=C^2(j!)^2\<x_0|\sfS_{13} \sfA\sfA \sfS_{13} |y\>=C^2(j!)^2f(y_{13})\<x_2|_1\<\bar{x}_2|_2\sfA |\bar{y}_2\>_1|y_2\>_2\<x^S_3|x^S_3\>_3=\\
&=(-1)^{j-l+m}C^2(j!)^2f(y_{13})\<x_2|_1\<\bar{x}_2|_2\sfA |x_2\>_1|\bar{x}_2\>_2\<x^S_3|x^S_3\>_3,\\
(j!)^2\<x_0|\sfS_{13}\sfA\sfS_{13}|x_0\>&=C^2(j!)^2\<x_0|\sfS_{13}\sfA\sfA\sfS_{13}|x_0\>\<x^S_3|x^S_3\>_3=\\
&=C^2(j!)^2f^2(x_{13})\<x_2|_1\<\bar{x}_2|_2\sfA|x_2\>_1|\bar{x}_2\>_2\<x^S_3|x^S_3\>_3,
\end{split}
\ee
where $C$ is normalization factor and by bar we denote logic negation, e.g. $|x_0\>=|01\>$, then $|\bar{x}_0\>=|10\>$. Note that due to \eqref{eq:SAS-13} we have $\<x_0|\sfS \sfA \sfS|y\>=(j!)^2\<x_0|\sfS_{13} \sfA \sfS_{13} |y\>$ and $\langle x_0|\operatorname{\sfS}\operatorname{\sfA}\operatorname{\sfS}|x_0\rangle=(j!)^2\<x_0|\sfS_{13}\sfA\sfS_{13}|x_0\>$, so using equations~\eqref{eq:p0} we can write
\be
\label{eq:f1f2}
\begin{split}
\frac{\langle x_0|\operatorname{\sfS}\operatorname{\sfA}\operatorname{\sfS}|y\rangle }{\langle x_0|\operatorname{\sfS}\operatorname{\sfA}\operatorname{\sfS}|x_0\rangle}
=\frac{\<x_0|\sfS_{13}\sfA\sfS_{13}|y\>}{\<x_0|\sfS_{13}\sfA\sfS_{13}|x_0\>}=(-1)^{j-l+m}\frac{f(y_{13})}{f(x_{13})}.
\end{split}
\ee
The left-hand-side of~\eqref{eq:asa} is equal to the left-hand-side of~\eqref{eq:f1f2}, so our
last step of proof is to find constants $f(y_{13})$ and $f(x_{13})$. The  vector $x_{13}$ has  
$n-j$ entries, so we have $(n-j)!$ permutations, but only ${n-j \choose n-l}$ give us different effect, so $f(x_{13})=(n-j)!/{n-j \choose n-l}$. For the vector $y_{13}$ we have like before $(n-j)!$ permutations, but only ${n-j \choose m}$ give us a different effect. Using these arguments to formula~\eqref{eq:f1f2} we obtain statement of our lemma.
\end{proof}

\subsubsection{Proof of Theorem \ref{thm:eig-A-MM}}
\label{subsubsec:eig-A-MM}
We shall now show that the set of the basis vectors of the space $%
H_{l}^{(n)}$ may be endowed with a structure of a symmetric, commutative
association scheme. This fact gives a possibility to calculate the
eigenvalues of the operator $\rho _{l}^{(n)}$ using the results from the
theory of the algebraic association schemes.

\bigskip

Let us denote the set of binary basis vectors of the space $H_{1}^{(n)}$ by $%
B(H_{1}^{(n)})$. Thus $|B(H_{1}^{(n)})|=
{n \choose l}$ 
and the basis vectors in $B(H_{1}^{(n)})$ will be denoted $%
e_{i}\equiv e(i_{1},i_{2},...,i_{l})$ where the numbers $%
i_{1},i_{2},...,i_{l}\in \{1,2,...,,n\}$ are indices of $1^{\prime }s$ in
the basis vector $e_{i}=e(i_{1},i_{2},...,i_{l}).$ It means that in the set $%
\{1,2,...,,n\}-\{i_{1},i_{2},...,i_{l}\}$ there are indices of the $%
0^{\prime }s$ in $e(i_{1},i_{2},...,i_{l}).$

On the other hand, as it has been pointed out in Proposition~\ref{prop:BI2} the
element of the set $X^{J}$ (i.e the $l-$elements subsets of the set $V$)
may be denoted in a natural way by $x\{i_{1},i_{2},...,i_{l}\}\equiv
x\{i\}\in X^{J}$ where $i_{1},i_{2},...,i_{l}$ denote the elements from $%
V=\{1,2,...,n\}$ which are contained in the subset $x\{i_{1},i_{2},...,i_{l}%
\}\in X^{J}$ and $|X^{J}|={n \choose l}$. 
This gives us a natural bijection between the sets $B(H_{1}^{(n)})
$ and $X^{J}$%
\be
\varphi :X^{J}\rightarrow B(H_{1}^{(n)}),\quad \varphi
(x\{i_{1},i_{2},..,i_{l}\})=e(i_{1},i_{2},...,i_{l}).
\ee

From Theorem~\ref{thm:CCAS} we get that the pair $(B(H_{1}^{(n)}),
\{R_{i}^{H}\}_{i=0}^{l})$, where $R_{i}^{H}=\Phi (R_{i}^{J})$ $i=0,...,l
$ is a CAS with the same adjacency matrices as the Johnson CAS. The subsets $%
R_{i}^{H}$ of $B(H_{1}^{(n)})\times B(H_{1}^{(n)})$ are described in the
following

\begin{lemma}
\be
R_{k}^{H}=\Phi (R_{k}^{J})=\{(e_{i},e_{j})\in B(H_{1}^{(n)})\times
B(H_{1}^{(n)})\quad |\quad d(e_{i},e_{j})=2k\},\quad k=0,1,...,,l.
\ee
\end{lemma}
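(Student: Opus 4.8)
The plan is to show that the subsets $R_k^H$ obtained by transporting the Johnson relations $R_k^J$ through the bijection $\varphi$ coincide with the ``Hamming distance $2k$'' relations on the binary basis vectors. First I would unwind the definition: by Theorem~\ref{thm:CCAS}, $R_k^H=\Phi(R_k^J)=\{(e_i,e_j)\mid (\varphi^{-1}(e_i),\varphi^{-1}(e_j))\in R_k^J\}$, so a pair $(e_i,e_j)$ lies in $R_k^H$ precisely when the corresponding $l$-element subsets $x\{i\}=\varphi^{-1}(e_i)$ and $x\{j\}=\varphi^{-1}(e_j)$ of $V=\{1,\dots,n\}$ satisfy $|x\{i\}\cap x\{j\}|=l-k$, by the definition of $R_k^J$ in Proposition~\ref{prop:BI2}.

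Next I would translate the intersection condition into a Hamming distance statement. Under $\varphi$, the subset $x\{i_1,\dots,i_l\}$ is exactly the support (set of positions of the $1$'s) of the bit-string $e_i$. For two bit-strings $e_i,e_j$ of the same weight $l$, the Hamming distance counts the positions where exactly one of them has a $1$; writing $a=|x\{i\}\cap x\{j\}|$ for the number of common $1$-positions, each string has $l-a$ private $1$-positions, so $d(e_i,e_j)=2(l-a)$. Hence $|x\{i\}\cap x\{j\}|=l-k$ is equivalent to $d(e_i,e_j)=2k$. This is essentially the same even-distance bookkeeping already used in the proof of Lemma~\ref{lem:A-prop}(i), just applied in reverse, so it is routine. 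Substituting this equivalence into the description of $R_k^H$ above gives exactly the claimed formula
\[
R_k^H=\{(e_i,e_j)\in B(H_1^{(n)})\times B(H_1^{(n)})\mid d(e_i,e_j)=2k\},\qquad k=0,1,\dots,l.
\]

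There is no real obstacle here; the only thing to be slightly careful about is the indexing convention (the paper sometimes writes $B(H_1^{(n)})$ where $B(H_l^{(n)})$ is meant, and $\binom{n}{k}$ where $\binom{n}{l}$ is meant), and the fact that $d(e_i,e_j)$ ranges over even values $0,2,\dots,2\min\{l,n-l\}$ while $k$ nominally runs to $l$ — for $l\le n/2$ these agree, consistent with the standing hypothesis $l\le n/2$ in Proposition~\ref{prop:BI2}. Once the lemma is in hand, the payoff is immediate: since $R_k^H$ is the relation ``Hamming distance $2k$,'' its adjacency matrix is by Definition~\ref{def:BI2} precisely the matrix $A_k^{(l)}=\sum_{x,y:d(x,y)=2k}|x\rangle\langle y|$ of Lemma~\ref{lem:1}. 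Therefore the $A_k^{(l)}$ are the adjacency matrices of the Johnson scheme (up to the bijection $\varphi$), and by Proposition~\ref{prop:BI2} their common eigenvalues are the dual Hahn polynomials $\alpha_k(j)=E_k(j)$, which is exactly the assertion of Theorem~\ref{thm:eig-A-MM}.
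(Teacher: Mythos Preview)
Your proposal is correct and follows essentially the same approach as the paper: both arguments unwind the definition of $\Phi(R_k^J)$ to the condition $|x\cap y|=l-k$ on the supports, and then convert this into the Hamming-distance condition $d(e_i,e_j)=2k$ by counting the private $1$-positions of each string. Your version is in fact slightly cleaner, since you state the equivalence explicitly (the paper's proof writes out only the forward implication), and your closing remarks correctly anticipate how the lemma feeds into the proof of Theorem~\ref{thm:eig-A-MM}.
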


\begin{proof}
If $(e_{i},e_{j})\in R_{k}^{H}$ then for $x=\varphi ^{-1}(e_{i}),$ $%
y=\varphi ^{-1}(e_{j})\in X^{J}$ \ \ $(x,y)\in R_{k}^{J}\Leftrightarrow
|x\cap y|=l-k$,  and hence the $\ l-$element sets $x,y$ have $l-k$ common elements
i.e
\be
x=\{i_{1},i_{2},..,i_{k},z_{1},z_{2},..,z_{l-k}\},\quad
y=\{j_{1},j_{2},..,j_{k},z_{1},z_{2},..,z_{l-k}\}
\ee%
where%
\be
\{i_{1},i_{2},..,i_{k}\}\cap \{j_{1},j_{2},..,j_{k}\}=\varnothing
\ee%
and
\be
\varphi (x)=e(i_{1},i_{2},..,i_{k},z_{1},z_{2},..,z_{l-k}),\quad \varphi
(y)=e(j_{1},j_{2},..,j_{k},z_{1},z_{2},..,z_{l-k}).
\ee%
So the Hamming distance between the vectors $\varphi (x)=e_{i}$ and $\varphi
(y)=e_{j}$ in $B(H_{1}^{(n)})$ is equal to $2k.$
\end{proof}

From Definition~\ref{def:BI2} of the adjacency matrices it follows immediately that

\bigskip

\begin{corollary}
\be
\forall k=0,1,...,l\quad (A_{k}^{H})_{(e_{i},e_{j})}=%
\begin{array}{c}
1\quad if\quad d(e_{i},e_{j})=2k \\
0\quad if\quad d(e_{i},e_{j})\neq 2k%
\end{array}%
\ee
\end{corollary}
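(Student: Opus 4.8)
The plan is to read the statement off directly from Definition~\ref{def:BI2} once the explicit description of $R_k^H$ is in hand. By Definition~\ref{def:BI2}, the $(e_i,e_j)$ entry of the $k$-th adjacency matrix of the CAS $(B(H_1^{(n)}),\{R_i^H\}_{i=0}^l)$ equals $1$ precisely when $(e_i,e_j)\in R_k^H$ and equals $0$ otherwise. The Lemma immediately above identifies $R_k^H=\Phi(R_k^J)$ as exactly the set of pairs $(e_i,e_j)$ of binary basis vectors of $H_l^{(n)}$ with $d(e_i,e_j)=2k$. Substituting this characterization of $R_k^H$ into the definition of $(A_k^H)_{(e_i,e_j)}$ gives $(A_k^H)_{(e_i,e_j)}=1$ if $d(e_i,e_j)=2k$ and $0$ if $d(e_i,e_j)\neq 2k$, which is the claim, valid for every $k=0,1,\ldots,l$ (for $k=0$ this recovers $R_0^H$ being the diagonal, consistent with axiom (i) of Definition~\ref{def:BI1}).

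The only genuine content has therefore already been supplied by the preceding Lemma, whose proof is complete: it uses the bijection $\varphi:X^J\to B(H_1^{(n)})$ to translate the set-theoretic condition $|x\cap y|=l-k$ defining $R_k^J$ into a Hamming-distance condition on bit-strings. Concretely, writing $x=\{i_1,\ldots,i_k,z_1,\ldots,z_{l-k}\}$ and $y=\{j_1,\ldots,j_k,z_1,\ldots,z_{l-k}\}$ with $\{i_1,\ldots,i_k\}\cap\{j_1,\ldots,j_k\}=\varnothing$, the images $e_i=\varphi(x)$ and $e_j=\varphi(y)$ differ in exactly the $k$ positions $i_1,\ldots,i_k$ (where $e_i$ has a $1$ and $e_j$ a $0$) together with the $k$ positions $j_1,\ldots,j_k$ (where the roles are reversed), so $d(e_i,e_j)=2k$; the converse is the same computation read backwards. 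Thus no further calculation is needed for the Corollary itself.

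Since the corollary is a direct specialization of a definition, there is no real obstacle to overcome here; the point of stating it is to record, in matrix form, the dictionary between the operators $A_k^{(l)}$ acting on $\cH_l^{(n)}$ and the adjacency matrices $A_k^H$ of the Johnson scheme. Together with Theorem~\ref{thm:CCAS}, which guarantees that $(B(H_1^{(n)}),\{R_i^H\}_{i=0}^l)$ has the same adjacency matrices as the Johnson CAS, this identification is exactly what will let us import the known eigenvalue formula of Proposition~\ref{prop:BI2} and conclude Theorem~\ref{thm:eig-A-MM}.
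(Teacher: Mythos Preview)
Your proof is correct and matches the paper's approach exactly: the paper simply states that the corollary follows immediately from Definition~\ref{def:BI2} (together with the preceding Lemma identifying $R_k^H$ with pairs at Hamming distance $2k$), and that is precisely what you have written out. Your additional recap of the Lemma's argument and the forward pointer to Theorem~\ref{thm:eig-A-MM} are accurate but not needed for the corollary itself.
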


\begin{proof}
Now the proof of Theorem \ref{thm:eig-A-MM} follows directly from
\be
\rho_{l}^{(n)}=\frac{1}{2^{n}}\sum_{k=0}^{l}\mathcal{P}^{2k}A_{k}^{H}=\frac{1}{%
2^{n}}\sum_{k=0}^{l}\mathcal{P}^{2k}A_{k}^{J}.
\ee%
and from Proposition~\ref{prop:BI2}.
\end{proof}

\bigskip

\subsection{Examples}

In this section we present a few most interesting properties of matrix $\rho_l^{(n)}$ and their eigenvalues, especially spectral radius $\lambda_0$.

Using the identity
\begin{equation}
\sum_{r=0}^{k}(-1)^{k-r} 	{k\choose r}{m+r \choose r}{m\choose k}
\end{equation}%
one can prove

\begin{proposition}
\label{prop:jzero}
For the case $j=0$ the formula from Theorem \ref{thm:eig-A-MM}
gives%
\begin{equation}
\lambda _{l}^{n}(0)=\frac{1}{2^{n}}\sum_{k=0}^{l}p^{2k}
{k\choose l}{n-l \choose k}
=\frac{1}{2^{n}}\sum_{k=0}^{l}p^{2(l-k)}
{l\choose k}{n-l\choose l-k}
\end{equation}%
\end{proposition}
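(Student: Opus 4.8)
The plan is to specialize the closed form for $\alpha_k(j)$ from Theorem~\ref{thm:eig-A-MM} to $j=0$, simplify it to $\binom{l}{k}\binom{n-l}{k}$ using a single binomial rearrangement together with the quoted alternating-sum identity, and then substitute into the eigenvalue formula of Theorem~\ref{thm:waw}.

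First I would set $j=0$ in $\alpha_k(j)=\sum_{r=0}^{k}(-1)^{k-r}\binom{l-r}{k-r}\binom{l-j}{r}\binom{n-l-j+r}{r}$, obtaining $\alpha_k(0)=\sum_{r=0}^k(-1)^{k-r}\binom{l-r}{k-r}\binom{l}{r}\binom{n-l+r}{r}$. The key step is to rewrite the $r$-dependent product of the first two factors via the subset-of-a-subset identity $\binom{l}{r}\binom{l-r}{k-r}=\binom{l}{k}\binom{k}{r}$ (both sides equal $l!/(r!\,(k-r)!\,(l-k)!)$); this extracts the $r$-independent factor $\binom{l}{k}$ from the sum, leaving $\alpha_k(0)=\binom{l}{k}\sum_{r=0}^k(-1)^{k-r}\binom{k}{r}\binom{n-l+r}{r}$. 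Then I would apply the auxiliary identity displayed just above the proposition, $\sum_{r=0}^k(-1)^{k-r}\binom{k}{r}\binom{m+r}{r}=\binom{m}{k}$, with $m=n-l$, which yields $\alpha_k(0)=\binom{l}{k}\binom{n-l}{k}$. If a self-contained derivation of that identity is wanted, it follows quickly from $\binom{m+r}{r}=(-1)^r\binom{-m-1}{r}$, Vandermonde's convolution $\sum_r\binom{k}{r}\binom{-m-1}{r}=\binom{k-m-1}{k}$, and upper negation.

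Finally, inserting $\alpha_k(0)=\binom{l}{k}\binom{n-l}{k}$ into $\lambda_l^n(0)=\tfrac{1}{2^n}\sum_{k=0}^l p^{2k}\alpha_k(0)$ from Theorem~\ref{thm:waw} gives the first displayed formula; this reproduces the spectral radius found independently in Lemma~\ref{lem:specRadius}, which serves as a consistency check. The second displayed formula then follows from the change of summation index $k\mapsto l-k$ together with $\binom{l}{l-k}=\binom{l}{k}$. I do not expect a genuine obstacle here: the whole argument reduces to one binomial rearrangement, one reindexing, and the classical alternating-sum identity that is supplied in the statement. The only point to watch is the bookkeeping of summation limits — the nominal ranges $0\le r\le k$ and $0\le k\le l$ are automatically respected because $\binom{l-r}{k-r}$, $\binom{k}{r}$ and $\binom{l}{k}$ vanish outside them.
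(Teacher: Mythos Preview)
Your proposal is correct and follows exactly the route the paper indicates: the paper merely states the identity $\sum_{r=0}^{k}(-1)^{k-r}\binom{k}{r}\binom{m+r}{r}=\binom{m}{k}$ and says ``one can prove'' the proposition, and you have supplied precisely the missing steps (the subset-of-a-subset rearrangement $\binom{l}{r}\binom{l-r}{k-r}=\binom{l}{k}\binom{k}{r}$, application of the identity with $m=n-l$, and the reindexing $k\mapsto l-k$). Note that the displayed $\binom{k}{l}$ in the statement is evidently a typo for $\binom{l}{k}$, which is what your computation correctly produces and what matches Lemma~\ref{lem:specRadius}.
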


Another particular case is the following

\begin{proposition}
\label{prop:ljeden}
For the case $H_{1}^{n}$ i.e. when $l=1$ Theorem 2 gives the following
two eigenvalues%
\begin{equation}
\lambda _{1}^{n}(0)=\frac{1}{2^{n}}(1+(n-1)p^{2}),\quad \lambda
_{1}^{n}(1)=\frac{1}{2^{n}}(1-p^{2}).
\end{equation}%
\end{proposition}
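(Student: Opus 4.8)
The plan is to specialize the general eigenvalue formula to $l=1$; no new idea is needed. By Theorem~\ref{thm:waw} the eigenvalues of $\rho_1^{(n)}$ are $\lambda_1^n(j)=\frac{1}{2^n}\sum_{k=0}^{1}p^{2k}\alpha_k(j)$ with $j$ ranging over $\{0,\dots,\min\{l,n-l\}\}=\{0,1\}$ (assuming $n\ge 2$), so there are exactly two distinct eigenvalues, and it suffices to evaluate the four numbers $\alpha_0(0),\alpha_0(1),\alpha_1(0),\alpha_1(1)$ using the closed form of Theorem~\ref{thm:eig-A-MM}, namely $\alpha_k(j)=\sum_{r=0}^{k}(-1)^{k-r}\binom{l-r}{k-r}\binom{l-j}{r}\binom{n-l-j+r}{r}$.

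First I would note that $\alpha_0(j)=1$ for all $j$: the sum has only the $r=0$ term $\binom{1}{0}\binom{1-j}{0}\binom{n-1-j}{0}=1$, which is also evident since $A_0^{(1)}$ is the identity matrix. Then I would substitute $k=1,\ l=1$ into $\alpha_1(j)=\sum_{r=0}^{1}(-1)^{1-r}\binom{1-r}{1-r}\binom{1-j}{r}\binom{n-1-j+r}{r}$. For $j=0$ the $r=0$ term equals $-1$ and the $r=1$ term equals $\binom{0}{0}\binom{1}{1}\binom{n}{1}=n$, so $\alpha_1(0)=n-1$; for $j=1$ the $r=0$ term equals $-1$ while the $r=1$ term vanishes because $\binom{1-j}{1}=\binom{0}{1}=0$, so $\alpha_1(1)=-1$. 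Plugging these into $\lambda_1^n(j)=\frac{1}{2^n}(\alpha_0(j)+p^2\alpha_1(j))$ gives $\lambda_1^n(0)=\frac{1}{2^n}(1+(n-1)p^2)$ and $\lambda_1^n(1)=\frac{1}{2^n}(1-p^2)$, and one may additionally read off the multiplicities $f_0=1$ and $f_1=n-1$ directly from Theorem~\ref{thm:waw}.

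As an alternative (and in fact cleaner) derivation, I would observe that for $l=1$ the space $\cH_1^{(n)}$ is spanned by the $n$ weight-one strings, any two distinct such strings are at Hamming distance $2$, hence $A_0^{(1)}=I$ and $A_1^{(1)}=J-I$ where $J$ is the $n\times n$ all-ones matrix. By Lemma~\ref{lem:1} this yields $\rho_1^{(n)}=\frac{1}{2^n}\bigl((1-p^2)I+p^2 J\bigr)$, and since $J$ has eigenvalue $n$ (on the all-ones vector, multiplicity $1$) and eigenvalue $0$ (multiplicity $n-1$), the claimed spectrum follows at once. This also matches Lemma~\ref{lem:specRadius}, which for $l=1$ gives $\lambda_0=\frac{1}{2^n}(1+(n-1)p^2)$.

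There is essentially no obstacle here; the only points requiring a little care are the binomial convention $\binom{0}{1}=0$ used in evaluating $\alpha_1(1)$, and the tacit assumption $n\ge 2$ (for $n=1$ the space $\cH_1^{(1)}$ is one-dimensional and only $\lambda_1^1(0)=\tfrac12$ survives).
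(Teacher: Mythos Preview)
Your proof is correct and matches the paper's treatment. The paper states this proposition as a direct specialization of the general eigenvalue formula (it cites Theorem~\ref{thm:eig-A-Young}, whereas you use the equivalent Theorem~\ref{thm:eig-A-MM}, which makes no difference), and then, immediately after the next proposition, verifies the $l=1$ case ``directly'' by writing $\rho_1^{(n)}$ in circulant form and reading off the two eigenvalues --- precisely your alternative $A_1^{(1)}=J-I$ argument.
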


\bigskip And one more particular case

\begin{proposition}
\label{prop:jjeden}
If $j=l$ then%
\begin{equation}
p_{k}(j)=(-1)^{k}
{k \choose l}
\quad \Rightarrow \quad \lambda _{l}^{n}(l)=\frac{1}{2^{n}}(1-%
p^{2})^{l}
\end{equation}%
\end{proposition}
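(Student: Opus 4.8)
The plan is to read off the proposition as an immediate specialization of the eigenvalue formula of Theorem~\ref{thm:eig-A-MM} (equivalently Theorem~\ref{thm:eig-A-Young}), valid when $l\leq n/2$ so that $j=l$ is an admissible label, followed by one application of the binomial theorem. In the closed form $\alpha_k(j)=\sum_{r=0}^{k}(-1)^{k-r}\binom{l-r}{k-r}\binom{l-j}{r}\binom{n-l-j+r}{r}$ of Theorem~\ref{thm:eig-A-MM}, I would set $j=l$: the factor $\binom{l-j}{r}=\binom{0}{r}$ vanishes for every $r\geq 1$, so only the term $r=0$ survives and $\alpha_k(l)=(-1)^{k}\binom{l}{k}\binom{0}{0}\binom{n-l}{0}=(-1)^{k}\binom{l}{k}$. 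The same value drops out of the hypergeometric representation $E_k(u)=(-1)^{k}\binom{l}{k}\,{}_{3}F_{2}(-k,-l+u,n-l-u+1;-l,1;1)$, because at $u=l$ the upper parameter $-l+u$ equals $0$, truncating the series to its constant term $1$; and from Theorem~\ref{thm:eig-A-Young} the summation range over $r$ degenerates to $r=0$ when $j=l$ and gives the same answer. (This also pins down the intended reading of the displayed formula, where $\binom{k}{l}$ should be $\binom{l}{k}$.)

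With $\alpha_k(l)$ in hand, Theorem~\ref{thm:waw} gives $\lambda^{n}_{l}(l)=\frac{1}{2^{n}}\sum_{k=0}^{l}p^{2k}\alpha_k(l)=\frac{1}{2^{n}}\sum_{k=0}^{l}\binom{l}{k}(-p^{2})^{k}=\frac{1}{2^{n}}(1-p^{2})^{l}$, the last step being the binomial theorem. Taking $l=1$ recovers $\lambda_1^n(1)=\frac{1}{2^{n}}(1-p^{2})$ of Proposition~\ref{prop:ljeden}, which serves as a consistency check.

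\textbf{Main obstacle.} There is essentially none; the only thing to be careful about is the degenerate binomial sum, i.e.\ that the conventions $\binom{0}{r}=\delta_{r,0}$ and, in the Young form, the collapsed range of $r$, legitimately reduce each sum to a single term. If one wants an argument not relying on Theorems~\ref{thm:eig-A-Young}--\ref{thm:eig-A-MM}, the alternative is to display the eigenvector directly. Take $|v\rangle=|\psi^-\rangle^{\otimes l}\otimes|0\rangle^{\otimes(n-2l)}$ with $|\psi^-\rangle=\frac{1}{\sqrt{2}}(|01\rangle-|10\rangle)$ placed on $l$ disjoint pairs of qubits. Then $|v\rangle\in\cH_l^{(n)}$, hence $P_l|v\rangle=|v\rangle$; since $\rho=p|+\rangle\langle+|+(1-p)\frac{I}{2}$ is diagonal in $\{|+\rangle,|-\rangle\}$ with eigenvalues $\frac{1\pm p}{2}$, each singlet pair contributes $(\rho\otimes\rho)|\psi^-\rangle=\frac{1-p^{2}}{4}|\psi^-\rangle$; and on the remaining $n-2l$ qubits $P_l$ kills every term of $(\rho|0\rangle)^{\otimes(n-2l)}=(\frac{1}{2}|0\rangle+\frac{p}{2}|1\rangle)^{\otimes(n-2l)}$ except $(\frac{1}{2})^{n-2l}|0\rangle^{\otimes(n-2l)}$, because $|\psi^-\rangle^{\otimes l}$ already carries all $l$ ones. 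This yields $\rho_l^{(n)}|v\rangle=\frac{(1-p^{2})^{l}}{4^{l}}\cdot\frac{1}{2^{n-2l}}|v\rangle=\frac{(1-p^{2})^{l}}{2^{n}}|v\rangle$. In this route the one point needing attention is to certify that $|v\rangle$ lies in $B_l$ and not in a mixture of several $B_j$: since the spaces $B_j$ carry pairwise distinct eigenvalues of $\rho_l^{(n)}$, any eigenvector must sit inside a single $B_j$, and $j=l$ is forced because $|v\rangle$ spans, within the weight-$l$ sector, the extreme (most antisymmetric) weight line of the $GL(2)$-factor $\hcal^{U}_{(n-l,l)}$, which by the remark following Lemma~\ref{lem:decomp} is precisely $B_l$. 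The formula-based computation above avoids this step altogether.
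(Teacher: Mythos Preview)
Your primary argument is correct and is exactly the specialization the paper intends: the proposition is stated in the paper without a written proof, as an immediate consequence of the eigenvalue formula in Theorem~\ref{thm:eig-A-MM}, and the paper then only checks the case $l=1$ directly by writing $\rho_1^{(n)}$ as a circulant matrix. Your derivation makes the implied step explicit (the collapse to $r=0$ via $\binom{0}{r}=\delta_{r,0}$, then the binomial theorem), and also correctly flags the typo $\binom{k}{l}\to\binom{l}{k}$.

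Your alternative eigenvector construction with $|v\rangle=|\psi^-\rangle^{\otimes l}\otimes|0\rangle^{\otimes(n-2l)}$ is a genuinely different route that the paper does not take. It is self-contained and bypasses both Theorems~\ref{thm:eig-A-Young} and~\ref{thm:eig-A-MM}, at the cost of needing to locate $|v\rangle$ in the correct irreducible block $B_l$. Your remark that distinctness of the $\lambda_l^n(j)$ would force an eigenvector into a single $B_j$ is the weak point you already identify: distinctness holds for generic $p$ but is not established in the paper. A cleaner justification is that $|v\rangle$ lies in the image of a Young symmetrizer for the tableau with columns on the $l$ singlet pairs, hence in $\hcal^U_{(n-l,l)}\otimes\hcal^S_{(n-l,l)}$, and since it also lies in $\cH_l^{(n)}$ the remark after Lemma~\ref{lem:decomp} places it in $B_l$. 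Either way, the formula-based argument you lead with is the one matching the paper and needs no such step.
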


To check our general formulas, let us obtain this latter result directly.
The matrix  $\rho _{l}^{(n)}$ has in this case the
following, so called circulant form%
\begin{equation}
\rho_{1}^{(n)}=\left(
\begin{array}{ccccc}
1 & p^2 & . & . & p^2 \\
p^2 & 1 & . & . & p^2 \\
. & . & . & . & p^2 \\
. & . & . & . & . \\
p^2 & p^2 & . & . & 1%
\end{array}%
\right) .
\end{equation}%
A direct calculation shows that the matrix $\rho_{1}^{(n)}$ has only two
distinct eigenvalues
\begin{equation}
\lambda _{0}=\frac{1}{2^{n}}(1+(n-1)p^2),\quad \lambda _{1}=%
\frac{1}{2^{n}}(1-p^2).
\end{equation}%
The first one is the spectral radius of $M_{1}^{(n)}$ with the algebraic
multiplicity $1$ and is the eigenvalue of $\rho _{l}^{(n)}$ on the subspace $%
B_{0}$ whereas the second one has multiplicity $n-1$ and is the eigenvalue
of $\rho _{l}^{(n)}$ on the subspace $B_{1}$, where in this case $%
H_{1}^{(n)}=B_{0}\oplus B_{1}.$

\section{Results}

Let us now calculate rates of the protocol for a state of Eq. \ref{MA:1} in the case when the number of copies of the state on which Alice and Bob perform the first measurement is $N=16$. In Figs. \ref{fig:1} and \ref{fig:2} we present the rates of the protocol for different values of $q$ and $x$, and for $\alpha=\frac{1}{2}$. For a given $x$ the rate is symmetric around $q=\frac{1}{2}$. It decreases from maximal value for $q=0$ to minimal value $0$ for $q=\frac{1}{2}$. For a given $q$ the rate increases from minimal value $0$ for $x=0$ to maximal value for $x=1$. Note that for $q=\frac{1}{2}$ or $x=0$ the state is separable. In Fig. \ref{fig:3} we additionaly present the rate of the protocol for different values of $\alpha$, and for $q=0.2$ and $x=0.8$. It is symmetric around $\alpha=\frac{1}{2}$. The rate increases from minimal value $0$ for $\alpha=0$, i.e., when the state is separable, to maximal value for $\alpha=\frac{1}{2}$, i.e., when the state is a mixture of two maximally entangled states and a product state.

It is also instructive to calculate rates of the protocol for a state of Eq. \ref{MA:1} for different numbers of copies of the state on which Alice and Bob perform the first measurement. In Figs. \ref{fig:4} and \ref{fig:5}  we present such rates of the protocol for  different values of $q$ and $x$, and $\alpha=\frac{1}{2}$. One can see that in general when one increases the number of copies of the state on which Alice and Bob perform the first measurement one increases the rate of the protocol. This increase of the rate is particularly observable for large $x$.

\begin{figure}
\includegraphics[width=9truecm]  {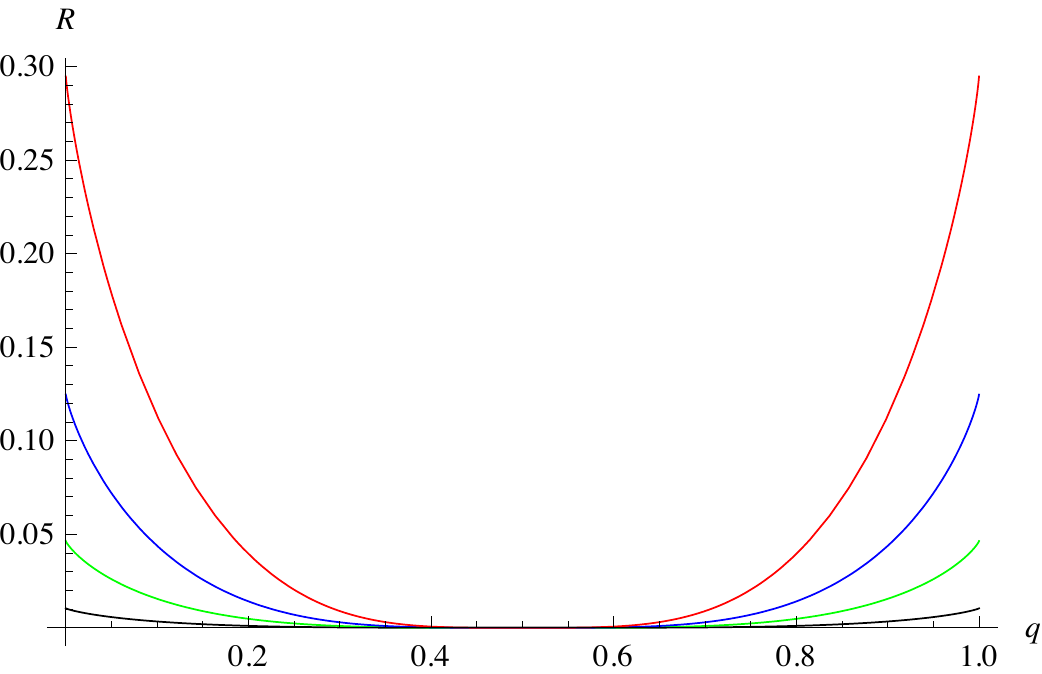}
\caption{\label{fig:1}
The rate of the protocol for different values of $x$ and $q$. From top to bottom:
x=0.8 (red), x=0.6 (blue), x=0.4 (green),  x=0.2 (black).  In all cases $N=16$ and $\alpha=\frac{1}{2}$ (color online).}
\end{figure}

\begin{figure}
\includegraphics[width=9truecm]  {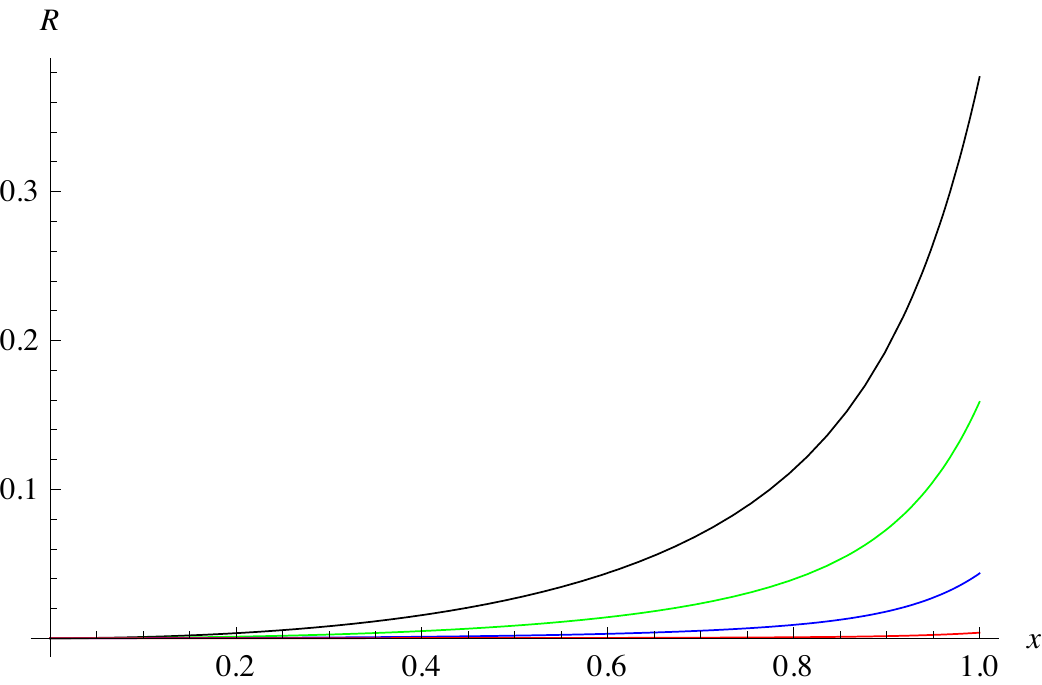}
\caption{\label{fig:2}
The rate of the protocol for different values of $x$ and $q$. From top to bottom:
q=0.1 (black), q=0.2 (green), q=0.3 (blue), q=0.4 (red).  In all cases $N=16$ and $\alpha=\frac{1}{2}$ (color online).}
\end{figure}

\begin{figure}
\includegraphics[width=9truecm]  {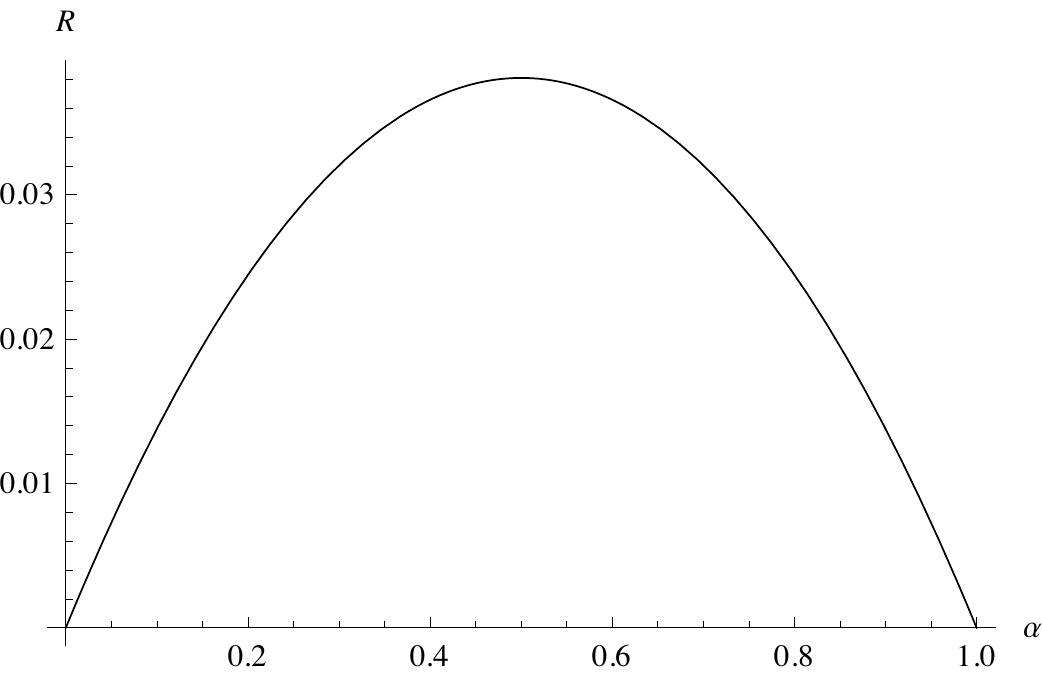}
\caption{\label{fig:3} 
The rate of the protocol for different values of $\alpha$.
N=16,  q=0.2,  x=0.8.}
\end{figure}

\begin{figure}
\includegraphics[width=9truecm]  {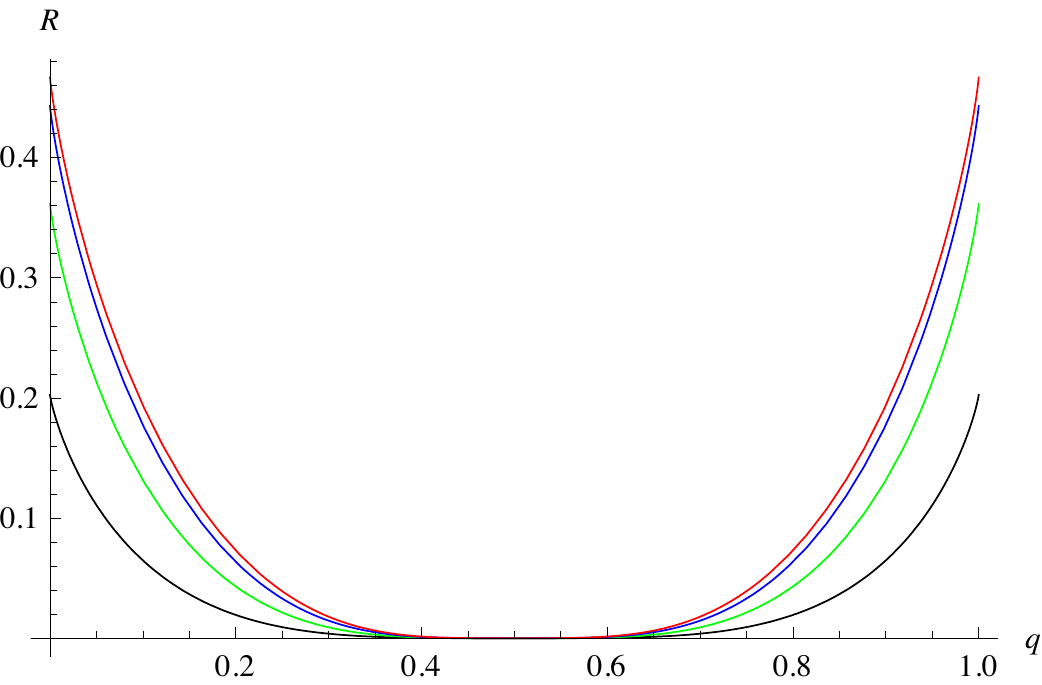}
\caption{\label{fig:4} 
The rate of the protocol for different values of $N$ and $q$. From top to bottom: N=16 (red), N=8  (blue), N=4 (green), N=2 (black). 
In all cases $x=0.9$ and $\alpha=\frac{1}{2}$ (color online).
}
\end{figure}

\begin{figure}
\includegraphics[width=9truecm]  {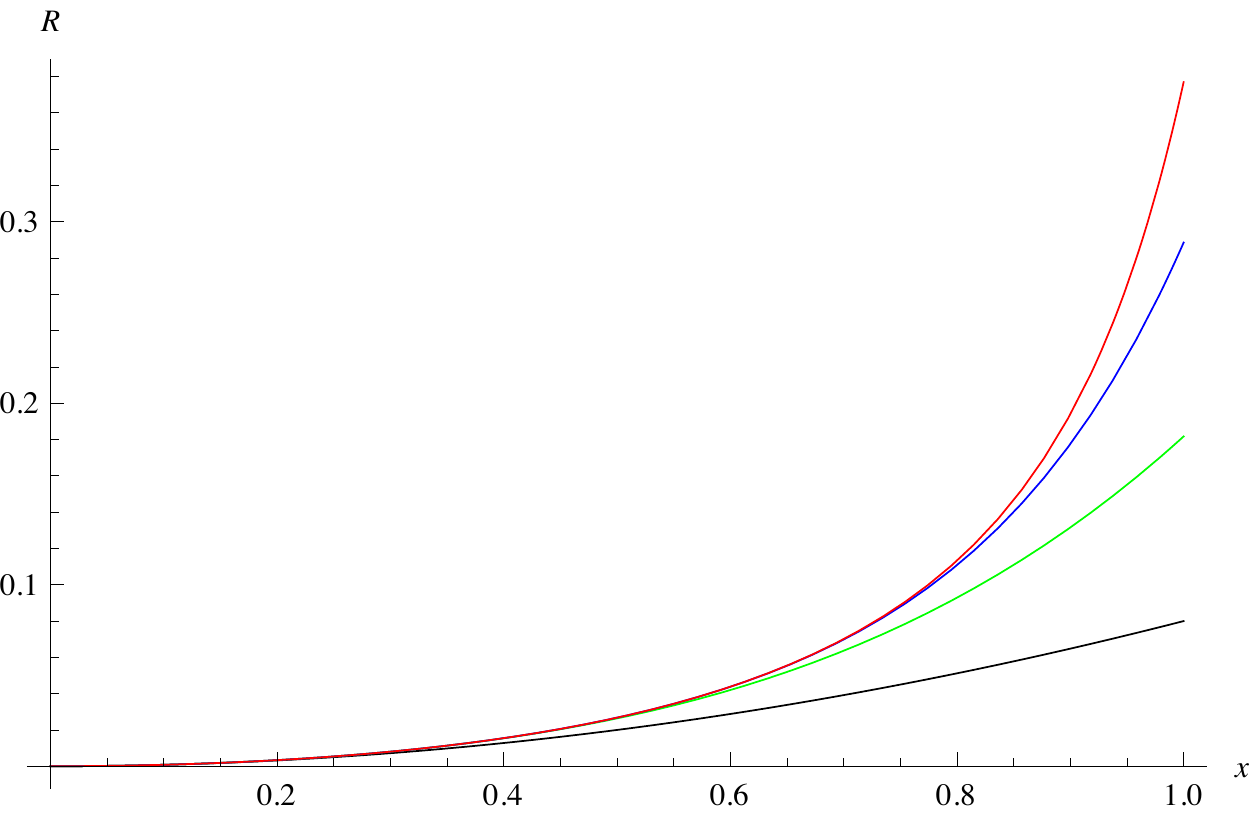}
\caption{\label{fig:5} The rate of the protocol for different values of $N$ and $x$. From top to bottom: N=16 (red), N=8  (blue), N=4 (green), N=2 (black). 
In all cases $q=0.1$ and $\alpha=\frac{1}{2}$ (color online).}
\end{figure}

\section{Appendix}
\subsubsection{The proof of decomposition of $\hcal_l^{(n})$ into irreps
by use of characters}

\begin{proposition}
Let $\chi ^{l}$ be a character of the representation $H_{l}^{(n)}$ and $%
c_{i} $ be a class $\ $\ of $S_{n}$ with cycle structure $%
(i_{1},i_{2},...,i_{n})$ (i.e. $c_{i}\equiv (i_{1},i_{2},...,i_{n})$), then%
\begin{equation}
\chi ^{l}(c_{i})=\sum_{q_{1},q_{2},..q_{n}}^{i_{1},i_{2},..i_{n}}\left(
\begin{array}{c}
i_{1} \\
q_{1}%
\end{array}%
\right) ...\left(
\begin{array}{c}
i_{n} \\
q_{n}%
\end{array}%
\right) ,
\end{equation}%
where $q_{1},q_{2},..q_{n}$ are solutions of the equation%
\begin{equation}
\sum_{k=1}^{n}kq_{k}=l
\end{equation}%
and $q_{k}\in \{0,1,...,i_{k}\}.$ \newline
\newline
\end{proposition}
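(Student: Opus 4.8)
The plan is to use the standard fact that the character of a permutation representation equals the number of fixed basis vectors. First I would recall that a basis of $H_{l}^{(n)}$ is given by the binary strings $x\in\{0,1\}^{n}$ of Hamming weight $l$, and that $V_{\pi}$ acts on this basis by permuting positions, exactly as in \eqref{eq:swaps}. Consequently, for any $\pi$ lying in the class $c_{i}$,
\[
\chi^{l}(c_{i})=\tr V_{\pi}=\bigl|\{x:\ wt(x)=l,\ V_{\pi}|x\>=|x\>\}\bigr|,
\]
since a basis vector $|x\>$ contributes $1$ to the trace precisely when $V_{\pi}|x\>=|x\>$ and $0$ otherwise.

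Next I would describe the fixed vectors explicitly. One has $V_{\pi}|x\>=|x\>$ if and only if $x$ is constant on every cycle of $\pi$; equivalently, such an $x$ is obtained by choosing, independently for each cycle $C$ of $\pi$, a single bit value to be placed in all positions of $C$. If $\pi$ has cycle type $(i_{1},\ldots,i_{n})$, i.e.\ it has $i_{k}$ cycles of length $k$, then assigning the bit $1$ to a length-$k$ cycle contributes exactly $k$ to $wt(x)$, while assigning $0$ contributes nothing.

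Finally comes the counting step: selecting, for each $k$, which $q_{k}$ of the $i_{k}$ length-$k$ cycles receive the value $1$ produces a fixed string whose weight is $\sum_{k=1}^{n}kq_{k}$, and the number of such selections is $\prod_{k=1}^{n}\binom{i_{k}}{q_{k}}$. Demanding that the weight equal $l$ imposes $\sum_{k=1}^{n}kq_{k}=l$ with $q_{k}\in\{0,1,\ldots,i_{k}\}$; summing $\prod_{k}\binom{i_{k}}{q_{k}}$ over all admissible tuples $(q_{1},\ldots,q_{n})$ yields the asserted formula. I do not expect a genuine obstacle here, as every step is a direct translation of ``counting fixed points of a permutation acting on bit strings''; the only care required is bookkeeping, namely checking that the pair consisting of the weight equation and the range constraint on the $q_{k}$ enumerates each weight-$l$ fixed basis vector exactly once. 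As a sanity check one may compare the resulting expression, on a few small classes, with $\chi^{l}=\sum_{j=0}^{\min\{l,n-l\}}\chi^{(n-j,j)}$ coming from Lemma~\ref{lem:decomp}.
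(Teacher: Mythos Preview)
Your proof is correct and follows essentially the same approach as the paper: both use that the character of a permutation representation counts fixed basis vectors, observe that a weight-$l$ binary string is fixed by $\pi$ precisely when it is constant on each cycle of $\pi$, and then count such strings by choosing, for each $k$, which $q_k$ of the $i_k$ length-$k$ cycles carry the value $1$ subject to $\sum_k kq_k=l$. Your phrasing ``$x$ is constant on every cycle of $\pi$'' is in fact a slightly crisper formulation of the paper's description of the invariant vectors.
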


\begin{proof}
Since the representation of the group $S_{n}$ on $H_{l}^{(n)}$ is a
permutation representation, the character of $\sigma \in S_{n}$ in this
representation is a number basis vectors of $H_{l}^{(n)}$ which are fixed by
$\sigma .$ So we are looking for basis vectors of $H_{l}^{(n)}$ such that
their structure, i.e. positions of $1^{\prime }s$ and $0^{\prime }s,$ makes
them invariant under the action of $\sigma \in S_{n}.$\newline
Let $\sigma \in S_{n}$ is such that $\sigma \in c_{i}\equiv
(i_{1},i_{2},...,i_{n}),$ that is $\sigma $ contains $i_{k}$ cycles of
length $k$, $k=1,2,...,,n.$\newline
For each $k,$ among all $i_{k}$ cycles of length $k$ in $c_{i}$, we choose $%
q_{k}$ cycles such that the numbers $q_{k}$ satisfy%
\begin{equation}
\sum_{k=1}^{n}kq_{k}=l.
\end{equation}%
Solutions of this equation, the numbers $q_{k},$ gives the information how
many cycles we may chose from each $i_{k}$ cycles of length $k$, in order to
get a set $L$ (which is a subset of $\{1,2,...,,n\}$) containing $l$
elements which are taken from $q_{k}$ cycles in each $i_{k}$ (for each $k)$.
It means that the set $L$ contains elements of $\ q_{k}$ cycles of length $%
k, $ for each $k.$The essential is now that the basis vector $H_{l}^{(n)}$
is invariant under action of $\sigma $ only if its $l$ $\ 1^{\prime }s$ have
indices from the set $L$ because in this situation all $l$ $\ 1^{\prime }s$
are permuted among them (and consequently the same for $0^{\prime }s)$ and
the basis vector is invariant under the action of $\sigma \in S_{n}$. The
cycle structure of this permutation of $1^{\prime }s$ is determined by
numbers $q_{k}.$\newline
Once we have the solution of the above equation i.e. the numbers $q_{k},$
then we may choose, for each $k$ separately , in $\left(
\begin{array}{c}
i_{k} \\
q_{k}%
\end{array}%
\right) $ ways the $q_{k}$ cycles of length $k$ whose elements determine the
set $L$.\newline
\newline
\end{proof}

\begin{remark}
Although this formula for the character of the representation $H_{l}^{(n)}$
of $S_{n}$ is not entirely analytic (because we do not know explicitely
solutions of the equation for the numbers $q_{k})$ it will be, together
with next Proposition, very useful in studying of representation structure
of $H_{l}^{(n)}.$\newline
\newline
\newline
\end{remark}

\begin{definition}
Let $B_{j}$ be an irreducible representation of $S_{n}$ corresponding to the
binary partition $\{n-j,j\}$ where $j\leq \frac{1}{2}n.$ Denote by $\chi
^{B_{j}}$ its character. $B_{0}$ is a trivial representation.
\end{definition}

Then we have \newline
\newline

\begin{proposition}
In the notations of the previous Proposition we have the following formula
for irreducible character $\chi ^{B_{j}}$%
\begin{equation}
\chi ^{B_{j}}(c_{i})=\sum_{q_{1},q_{2},..q_{n}}^{i_{1},i_{2},..i_{n}}\left(
\begin{array}{c}
i_{1} \\
q_{1}%
\end{array}%
\right) ...\left(
\begin{array}{c}
i_{n} \\
q_{n}%
\end{array}%
\right) -\sum_{q_{1}^{\prime },q_{2}^{\prime },..q_{n}^{\prime
}}^{i_{1},i_{2},..i_{n}}\left(
\begin{array}{c}
i_{1} \\
q_{1}^{\prime }%
\end{array}%
\right) ...\left(
\begin{array}{c}
i_{n} \\
q_{n}^{\prime }%
\end{array}%
\right)
\end{equation}%
where $q_{1},q_{2},..q_{n}$ and $q_{1}^{\prime },q_{2}^{\prime
},..q_{n}^{\prime }$ are solutions of the equations
\begin{equation}
\sum_{k=1}^{n}kq_{k}=j,\quad \sum_{k=1}^{n}kq_{k}^{\prime }=j-1
\end{equation}%
and $q_{k},q_{k}^{\prime }\in \{0,1,...,i_{k}\}.$\newline
The dimension of the representation $B_{j}$ is
\begin{equation}
\dim B_{j}=
{n \choose j}- {n\choose j-1},\quad j\geq 1;\quad \dim B_{0}=1
\end{equation}
\end{proposition}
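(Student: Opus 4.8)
The plan is to reduce everything to the single identity $\chi^{B_j}=\chi^{j}-\chi^{j-1}$, where I write $\chi^{j}$ for the character of the permutation representation $H_j^{(n)}$ computed in the previous Proposition. Granting this, the displayed formula follows by substituting that Proposition's expressions for $\chi^{j}$ and $\chi^{j-1}$, and the dimension formula follows by evaluating at the identity class $c_i=(n,0,\ldots,0)$: there the constraint $\sum_k kq_k=j$ with $q_k\le i_k$ forces $q_1=j$ and $q_{k\ge 2}=0$, so $\chi^{j}(e)=\binom{n}{j}$ and hence $\dim B_j=\binom{n}{j}-\binom{n}{j-1}$ (and $\dim B_0=\binom{n}{0}=1$). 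Since $H_l^{(n)}\cong H_{n-l}^{(n)}$ by interchanging $0\leftrightarrow 1$, there is no loss in assuming $j\le n/2$ throughout.

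First I would compute the inner products $\langle\chi^{l},\chi^{m}\rangle_{S_n}$ via Burnside's lemma: this number equals the number of $S_n$-orbits on ordered pairs $(A,B)$ of subsets of $\{1,\ldots,n\}$ with $|A|=l$ and $|B|=m$ (because $\chi^{l}(g)\chi^{m}(g)$ counts pairs fixed by $g$). Such an orbit is completely determined by the single integer $t=|A\cap B|$, which runs over $\max\{0,\,l+m-n\}\le t\le \min\{l,m\}$; when $l,m\le n/2$ this range is $0\le t\le\min\{l,m\}$, so $\langle\chi^{l},\chi^{m}\rangle=\min\{l,m\}+1$.

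Next, set $\chi^{B_0}:=\chi^{0}$, the trivial character, and $\psi_j:=\chi^{j}-\chi^{j-1}$ for $1\le j\le n/2$. A short computation from the previous step gives $\langle\psi_j,\psi_{j'}\rangle=\delta_{j,j'}$ (for $j=j'$ one gets $(j+1)-2j+j=1$, and for $j\ne j'$ the four terms cancel). Being a virtual character of norm one, each $\psi_j$ is $\pm$ an irreducible character; since $\psi_j(e)=\binom{n}{j}-\binom{n}{j-1}>0$ for $j\le n/2$, it is a genuine irreducible character of that dimension, and the $\psi_j$ are pairwise distinct because these dimensions strictly increase in $j$. Telescoping, $\sum_{j=0}^{l}\psi_j=\chi^{l}$, which exhibits $H_l^{(n)}\cong\bigoplus_{j=0}^{\min\{l,n-l\}}(\text{irrep with character }\psi_j)$; this is exactly the character-theoretic proof of Lemma~\ref{lem:decomp} announced before the Appendix.

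It remains to match $\psi_j$ with the irrep $B_j$ attached to the two-row diagram $(n-j,j)$. The hook-length formula gives $\dim S^{(n-j,j)}=\binom{n}{j}-\binom{n}{j-1}=f_j$, the quantity already used in the paper, and the Specht module $S^{(n-j,j)}$ embeds into the permutation module $M^{(n-j,j)}\cong H_j^{(n)}$; since the irreducible constituents of $H_j^{(n)}$ are precisely $\psi_0,\ldots,\psi_j$ and they have distinct dimensions, $S^{(n-j,j)}$ must be the one of dimension $f_j$, namely $\psi_j$. Hence $\chi^{B_j}=\psi_j=\chi^{j}-\chi^{j-1}$, giving the stated formula after substitution. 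The two genuinely delicate points are the ones just flagged: ensuring $\psi_j$ is a true character and not the negative of one (settled by positivity of $\psi_j(e)$), and identifying it with the correct partition rather than some other irrep of equal dimension (settled by the embedding of the Specht module into the permutation module); everything else is the orbit count and bookkeeping.
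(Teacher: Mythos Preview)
Your approach is correct in outline and genuinely different from the paper's. The paper applies the Frobenius character formula directly: for the two-row partition $(n-j,j)$ one expands $(x_1-x_2)\prod_{k}(x_1^k+x_2^k)^{i_k}$ and reads off the coefficient of $x_1^{\,n-j+1}x_2^{\,j}$, which immediately produces the two constrained sums in the statement. You instead establish the identity $\chi^{B_j}=\chi^{j}-\chi^{j-1}$ by computing the inner products $\langle\chi^{l},\chi^{m}\rangle=\min\{l,m\}+1$ via orbit counting, checking that $\psi_j:=\chi^{j}-\chi^{j-1}$ is an irreducible character, and then substituting the previous Proposition. Your route is more elementary (no Frobenius formula needed) and has the pleasant side effect of proving the decomposition $H_l^{(n)}=\bigoplus_{j\le l}B_j$ along the way, which the paper obtains only afterwards as a corollary; the paper's route is shorter and uses a single standard tool.

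There is, however, one genuine gap in your identification step. You claim the $\psi_j$ have pairwise distinct dimensions and use this to pin $S^{(n-j,j)}$ to $\psi_j$. This is false: the numbers $f_j=\binom{n}{j}-\binom{n}{j-1}$ need not be distinct (for $n=6$ one has $f_1=f_3=5$), so in $H_3^{(6)}$ two irreducible constituents share the dimension of $S^{(3,3)}$ and the dimension argument cannot separate them. (Your earlier remark that the dimensions ``strictly increase in $j$'' is also false, e.g.\ $n=4$ gives $1,3,2$; but that was harmless since orthogonality already gave distinctness.) The repair is easy and stays within your framework: argue by induction on $j$. For $j=0$ both $\psi_0$ and $S^{(n)}$ are trivial. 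If $\psi_i\cong S^{(n-i,i)}$ for all $i<j$, then since $S^{(n-j,j)}$ embeds in $H_j^{(n)}$ and Specht modules for distinct partitions are non-isomorphic, $S^{(n-j,j)}$ must be the one remaining constituent $\psi_j$. Equivalently, invoke Young's rule (the Kostka number $K_{(n-j,j),(n-i,i)}$ vanishes for $i<j$) to see that $S^{(n-j,j)}$ does not occur in $H_{j-1}^{(n)}$, hence is the new summand $\psi_j=\chi^{j}-\chi^{j-1}$. With this fix your proof is complete.
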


\begin{proof}
We will calculate the character of the irreducible representation $B_{j}$
using the Frobenius formula for irreducible characters. In case of the
binary partition $\{n-j,j\}$ Frobenius formula takes the form~\cite{Fulton}%
\begin{equation}
\chi ^{B_{j}}(c_{i})\equiv \chi
^{\{n-j,j%
\}}(c_{i})=[(x_{1}-x_{2})(x_{1}+x_{2})^{i_{1}}(x_{1}^{2}+x_{2}^{2})^{i_{2}}...(x_{1}^{n}+x_{2}^{n})^{i_{n}}]_{(n-j+1,j)}
\end{equation}%
where in the parenthesis on RHS there is a polynomial in two variables $%
P(x_{1},x_{2})$ and the subscript $(n-j+1,j)$ means that the value of the
character $\chi ^{B_{j}}$ on the class $c_{i}\equiv (i_{1},i_{2},...,i_{n})$
is equal the coefficient of $x_{1}^{n-j+1}x_{2}^{j}$ in $P(x_{1},x_{2}).$
\newline
From
\begin{equation}
(x_{1}^{m}+x_{2}^{m})^{i_{m}}=\sum_{q_{m}=0}^{i_{m}}\left(
\begin{array}{c}
i_{m} \\
q_{m}%
\end{array}%
\right) x_{1}^{m(i_{m}-q_{m})}x_{2}^{mq_{m}}
\end{equation}%
we get%
\begin{equation}
(x_{1}+x_{2})^{i_{1}}(x_{1}^{2}+x_{2}^{2})^{i_{2}}...(x_{1}^{n}+x_{2}^{n})^{i_{n}}=\sum_{q_{1},q_{2},..q_{n}}^{i_{1},i_{2},..i_{n}}\left(
\begin{array}{c}
i_{1} \\
q_{1}%
\end{array}%
\right) ...\left(
\begin{array}{c}
i_{n} \\
q_{n}%
\end{array}%
\right) x_{1}^{(n-\sum_{k=1}^{n}kq_{k})}x_{2}^{(\sum_{k=1}^{n}kq_{k})}
\end{equation}%
multiplying both sides of this equation by $(x_{1}-x_{2})$ we get%
\begin{equation*}
(x_{1}-x_{2})(x_{1}+x_{2})^{i_{1}}(x_{1}^{2}+x_{2}^{2})^{i_{2}}...(x_{1}^{n}+x_{2}^{n})^{i_{n}}=
\end{equation*}
\begin{equation}
=\sum_{q_{1},q_{2},..q_{n}}^{i_{1},i_{2},..i_{n}}\left(
\begin{array}{c}
i_{1} \\
q_{1}%
\end{array}%
\right) ..\left(
\begin{array}{c}
i_{n} \\
q_{n}%
\end{array}%
\right) x_{1}^{(n+1-\sum_{k=1}^{n}kq_{k})}x_{2}^{\sum_{k=1}^{n}kq_{k})}-
\end{equation}%
\[
-\sum_{q_{1}^{\prime },q_{2}^{\prime },..q_{n}^{\prime
}}^{i_{1},i_{2},..i_{n}}\left(
\begin{array}{c}
i_{1} \\
q_{1}^{\prime }%
\end{array}%
\right) ..\left(
\begin{array}{c}
i_{n} \\
q_{n}^{\prime }%
\end{array}%
\right) x_{1}^{(n-\sum_{k=1}^{n}kq_{k}^{\prime
})}x_{2}^{(\sum_{k=1}^{n}kq_{k}^{\prime }+1)}
\]%
In order to determine the coefficient of $x_{1}^{n-j+1}x_{2}^{j}$ in RHS of
this equation we have to impose the following conditions on the powers of $%
x_{1}$ and $x_{2}$ in each sum on RHS independently
\begin{equation}
n-j+1=n+1-\sum_{k=1}^{n}kq_{k},\quad j=\sum_{k=1}^{n}kq_{k}
\end{equation}%
\begin{equation}
n-j+1=n-\sum_{k=1}^{n}kq_{k}^{\prime },\quad j=\sum_{k=1}^{n}kq_{k}^{\prime
}+1
\end{equation}%
Each pair of these equations is in fact one equation, so finally we get
following equations for numbers $q_{1},q_{2},..q_{n}$ and $q_{1}^{\prime
},q_{2}^{\prime },..q_{n}^{\prime }$
\begin{equation}
j=\sum_{k=1}^{n}kq_{k},\quad j-1=\sum_{k=1}^{n}kq_{k}^{\prime }.
\end{equation}%
which determine the coefficient of $x_{1}^{n-j+1}x_{2}^{j}$ in $%
P(x_{1},x_{2}).$
\end{proof}

\begin{remark}
Similarly as in case of character $\chi ^{l}$ of the representation $%
H_{l}^{(n)}$ it is not easy to calculate the values of the character $\chi
^{B_{j}}$ in general case, however for small values of $j$ this formula may
be useful. In fact we have
\end{remark}

\bigskip

\begin{example}
\begin{equation}
\chi ^{B_{1}}(c_{i})=i_{1}-1,\quad \chi ^{B_{2}}(c_{i})=\frac{1}{2}%
i_{1}(i_{1}-3)+i_{2},\quad \chi ^{B_{3}}(c_{i})=\frac{1}{6}%
i_{1}(i_{1}-1)(i_{1}-5)+i_{2}(i_{1}-1)+i_{3}.
\end{equation}
\end{example}

As a corollary from the above two Propositions we get a theorem describing
the structure of the representation $H_{l}^{(n)}.$ \newline
\newline

\begin{theorem}
We have the following decomposition of the representation $H_{l}^{(n)}$
\begin{equation}
H_{l}^{(n)}=\oplus _{j=0}^{l}B_{j}.
\end{equation}%
\newline
\newline
\end{theorem}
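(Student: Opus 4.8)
The plan is to compare characters, using the two Propositions just established. For a conjugacy class $c_i\equiv(i_1,\ldots,i_n)$ of $S_n$, write
\[
N_j(c_i):=\sum_{q_1,q_2,\ldots,q_n}^{i_1,i_2,\ldots,i_n}\binom{i_1}{q_1}\cdots\binom{i_n}{q_n},
\]
where the sum runs over all $q_k\in\{0,1,\ldots,i_k\}$ with $\sum_{k=1}^n k q_k=j$, with the convention that $N_j(c_i)=0$ when there is no such solution and $N_0(c_i)=1$ (the all-zero solution). In this notation the first Proposition reads $\chi^l(c_i)=N_l(c_i)$, and the second Proposition reads $\chi^{B_j}(c_i)=N_j(c_i)-N_{j-1}(c_i)$ for $j\ge 1$, together with $\chi^{B_0}\equiv 1=N_0$.

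First I would sum the irreducible characters $\chi^{B_j}$ over $j=0,\ldots,l$. The sum telescopes:
\[
\sum_{j=0}^{l}\chi^{B_j}(c_i)=N_0(c_i)+\sum_{j=1}^{l}\bigl(N_j(c_i)-N_{j-1}(c_i)\bigr)=N_l(c_i)=\chi^l(c_i).
\]
Since this holds on every conjugacy class of $S_n$, the class functions $\chi^l$ and $\sum_{j=0}^{l}\chi^{B_j}$ coincide. One may assume $l\le n/2$ (otherwise replace $l$ by $n-l$, using the isomorphism $H_l^{(n)}\cong H_{n-l}^{(n)}$ obtained by exchanging the roles of $0$'s and $1$'s); then every partition $\{n-j,j\}$ with $0\le j\le l$ is admissible, so the $\chi^{B_j}$, $j=0,\ldots,l$, are pairwise distinct irreducible characters. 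Equality of characters then forces $H_l^{(n)}=\bigoplus_{j=0}^{l}B_j$ with multiplicity exactly one for each summand. As a consistency check one recovers the dimension identity $\binom{n}{l}=\dim H_l^{(n)}=\sum_{j=0}^{l}\dim B_j=\sum_{j=0}^{l}\bigl(\binom{n}{j}-\binom{n}{j-1}\bigr)$, again by telescoping.

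There is essentially no hard step once the two Propositions are available: the only point requiring a little care is the bookkeeping of the summation ranges — checking that $N_0=1$ matches the trivial representation $B_0$, and that the telescoping is not spoiled at the endpoints $j=0$ and $j=l$. The same cancellation, read for $j>l$, shows that no irreps $B_j$ with $j>l$ occur, which is consistent with (and in fact subsumed by) the character equality already proved.
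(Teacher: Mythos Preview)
Your proof is correct and follows essentially the same approach as the paper: both arguments use the two preceding Propositions to express $\chi^l$ and $\chi^{B_j}$ in terms of the sums $N_j(c_i)$, observe that the sum $\sum_{j=0}^{l}\chi^{B_j}$ telescopes to $\chi^l$, and then invoke linear independence of irreducible characters. Your version is more explicit about the telescoping, the endpoint conventions, and the assumption $l\le n/2$, but the underlying argument is identical.
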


\begin{proof}
From the formulae for the characters of the representations $H_{l}^{(n)}$
and $B_{j}$ derived in previous Propositions it follows directly that
\begin{equation}
\chi ^{l}=\sum_{j=0}^{l}\chi ^{B_{j}},
\end{equation}%
and irreducible characters form a basis in the space of complex class
functions on $S_{n}$ so this decomposition of $\chi ^{l}$ is unique and it
implies the thesis of the theorem.
\end{proof}

\begin{corollary}
From this theorem it follows that for example that%
\begin{equation}
(%
\mathbb{C}
^{2})^{\otimes n}=\oplus _{j=0}^{n}(n-2j+1)B_{j},\quad
H_{l}^{(n)}=H_{l-1}^{(n)}\oplus B_{l}
\end{equation}%
and that each subspace $B_{j}$ in $H_{l}^{(n)}$ is an eigenvalue space of $%
\rho _{l}^{(n)}$. \newline
\end{corollary}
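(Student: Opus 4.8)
The plan is to derive all three assertions of the corollary directly from the decomposition theorem $H_l^{(n)}=\bigoplus_j B_j$ (read, as in Lemma~\ref{lem:decomp}, with the index running up to $\min\{l,n-l\}$), together with Lemma~\ref{lem:1} and the spectral form~\eqref{eq:A-spectral}.

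First I would prove the global identity for $(\C^2)^{\ot n}$. Since permutations preserve the number of $1$'s, the weight decomposition $(\C^2)^{\ot n}=\bigoplus_{l=0}^{n}H_l^{(n)}$ is a decomposition into $S_n$-subrepresentations. Applying the decomposition theorem to each $H_l^{(n)}$ and collecting the copies of a fixed $B_j$, that irrep occurs once in $H_l^{(n)}$ precisely when $j\le l$ and $j\le n-l$, i.e. for $l\in\{j,j+1,\dots,n-j\}$; there are $n-2j+1$ such values of $l$. Hence $(\C^2)^{\ot n}=\bigoplus_{j=0}^{\lfloor n/2\rfloor}(n-2j+1)B_j$, in agreement with the Schur--Weyl multiplicity $\dim\hcal^U_{(n-j,j)}=n-2j+1$ for $U(2)$.

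Next I would read off the recursion: for $1\le l\le n/2$ one has $\min\{l,n-l\}=l$ and $\min\{l-1,n-l+1\}=l-1$, so the decomposition theorem gives $H_l^{(n)}=\bigl(\bigoplus_{j=0}^{l-1}B_j\bigr)\oplus B_l=H_{l-1}^{(n)}\oplus B_l$; for $l>n/2$ the statement is transported by the bit-flip isomorphism $H_l^{(n)}\cong H_{n-l}^{(n)}$, which intertwines the two $S_n$-actions. Finally, for the eigenspace claim: $\rho_l^{(n)}=P_l\rho^{\ot n}P_l$ is permutationally invariant, hence commutes with the $S_n$-action on $H_l^{(n)}$; since the decomposition $H_l^{(n)}=\bigoplus_j B_j$ is multiplicity free, Schur's lemma forces $\rho_l^{(n)}$ to act as a scalar on each $B_j$, so each $B_j$ is an eigenspace. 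Concretely, combining Lemma~\ref{lem:1} with~\eqref{eq:A-spectral} gives $\rho_l^{(n)}=\frac{1}{2^n}\sum_k p^{2k}A_k^{(l)}=\sum_j\lambda^n_l(j)\,P_j^{(l)}$ with $\lambda^n_l(j)=\frac{1}{2^n}\sum_k p^{2k}\alpha_k(j)$, exhibiting $B_j$ as the eigenspace with eigenvalue $\lambda^n_l(j)$.

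The only real subtlety --- and the step I would be most careful about --- is the bookkeeping of the index range: the theorem as displayed sums $j$ up to $l$, whereas $B_j$ is defined only for $j\le n/2$, so one must either restate the decomposition with $\min\{l,n-l\}$ or invoke the bit-flip symmetry before the multiplicity count $n-2j+1$ and the recursion come out right (in particular the ``$j=0$ to $n$'' range in the first displayed formula is effectively a ``$j=0$ to $\lfloor n/2\rfloor$'' range). Beyond that, every ingredient is already available, so I would not expect any genuine difficulty.
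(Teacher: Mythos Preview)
Your proposal is correct and is exactly the intended argument: the paper states this corollary without proof, as an immediate consequence of the decomposition theorem $H_l^{(n)}=\bigoplus_j B_j$, and you have supplied the natural bookkeeping (counting which $H_l^{(n)}$ contain a given $B_j$, subtracting the two decompositions for the recursion, and invoking Schur's lemma via multiplicity-freeness for the eigenspace claim). Your caveat about the index range---that the displayed upper limits $l$ and $n$ must be read as $\min\{l,n-l\}$ and $\lfloor n/2\rfloor$, or handled via the bit-flip symmetry---is well taken and is indeed the only point requiring care.
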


\subsubsection{Proof of the auxiliary lemmas}
\begin{lemma}
\label{le:aux}
We have following property
\be
\sfS_{13}\sfS_2\sfA\sfS_2\sfS_{13}=\sfS_2^2\sfS_{13}\sfA\sfS_{13}=
\sfS_{13}\sfA\sfS_{13}\sfS_2^2.
\ee
\end{lemma}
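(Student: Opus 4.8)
The plan is to collapse the triple equality to a single commutation relation and then establish that relation by a permutation-symmetry argument. Throughout I work with the tableau \eqref{eq:tab}, writing ``part~$1$'' for the first-row cells $\{1,\dots,j\}$ that sit above the second row, ``part~$3$'' for the remaining first-row cells $\{j+1,\dots,n-j\}$, and ``part~$2$'' for the second row $\{n-j+1,\dots,n\}$. Then $\sfS_{13}=\sum_{g\in S_{\{1,\dots,n-j\}}}V_g$ and $\sfS_2=\sum_{\rho\in S_{\{n-j+1,\dots,n\}}}V_\rho$ are the two row symmetrizers, while the column antisymmetrizer factors as $\sfA=\prod_{i=1}^{j}(\id-V_{t_i})$ with $t_i=(i,\,n-j+i)$ the transposition pairing the two cells of the $i$-th column (the columns $j+1,\dots,n-j$ are singletons and contribute the identity; the $t_i$ have disjoint supports, so the product is well defined and equals $\sum_{T\subseteq\{1,\dots,j\}}(-1)^{|T|}V_{t_T}$ with $t_T=\prod_{i\in T}t_i$). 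I will use two elementary facts: (a) $[\sfS_{13},\sfS_2]=0$, since the two symmetrizers act on disjoint sets of tensor factors; and (b) the absorption rule $\sfS_{13}V_\pi=V_\pi\sfS_{13}=\sfS_{13}$ for every $\pi\in S_{\{1,\dots,n-j\}}$, which gives $\sfS_{13}\,X\,\sfS_{13}=\sfS_{13}\,(V_\pi X V_\pi^{-1})\,\sfS_{13}$ for any operator $X$ --- conjugation by a first-row permutation is invisible between two copies of $\sfS_{13}$.

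First I would rewrite the left-hand side: sliding the outer $\sfS_2$'s past the $\sfS_{13}$'s by (a) gives $\sfS_{13}\sfS_2\sfA\sfS_2\sfS_{13}=\sfS_2\,M\,\sfS_2$, where $M:=\sfS_{13}\sfA\sfS_{13}$. Hence the whole lemma reduces to the single claim $[\sfS_2,M]=0$: granting it, $\sfS_2 M\sfS_2=M\sfS_2^{2}=\sfS_2^{2}M$, and since $M\sfS_2^{2}=\sfS_{13}\sfA\sfS_{13}\sfS_2^{2}$ and $\sfS_2^{2}M=\sfS_2^{2}\sfS_{13}\sfA\sfS_{13}$, this is precisely the asserted chain of equalities.

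It remains to prove that $\sfS_2$ commutes with $M=\sfS_{13}\sfA\sfS_{13}$, and this is the step I expect to require the most care. I would prove the stronger statement $V_\rho M V_\rho^{-1}=M$ for every $\rho\in S_{\{n-j+1,\dots,n\}}$ and then sum over $\rho$. Expanding $M=\sum_{T}(-1)^{|T|}\,\sfS_{13}\,V_{t_T}\,\sfS_{13}$ and using that $V_\rho$ commutes with $\sfS_{13}$ by (a), conjugation of a summand equals $\sfS_{13}\,(V_\rho V_{t_T}V_\rho^{-1})\,\sfS_{13}$, and $V_\rho V_{t_T}V_\rho^{-1}$ is the operator of the permutation $\prod_{i\in T}(i,\,n-j+\bar\rho(i))$, where $\bar\rho\in S_j$ is the permutation induced by $\rho$ on the column indices. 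Now I invoke (b): choosing $\pi\in S_{\{1,\dots,n-j\}}$ that restricts to $i\mapsto\bar\rho(i)$ on $T$ --- possible because $j\le n/2$, so $T$ and $\bar\rho(T)$ both lie inside $\{1,\dots,n-j\}$ --- and noting that $\pi$ fixes the second-row labels $n-j+\bar\rho(i)$, I get $\sfS_{13}\,(V_\rho V_{t_T}V_\rho^{-1})\,\sfS_{13}=\sfS_{13}\,V_{t_{\bar\rho(T)}}\,\sfS_{13}$. Summing over $T$ and reindexing by the size-preserving bijection $T\mapsto\bar\rho(T)$ yields $V_\rho M V_\rho^{-1}=M$, which proves $[\sfS_2,M]=0$ and hence the lemma. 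The only delicate point is the bookkeeping of supports --- verifying that $\rho$ moves only second-row cells and $\pi$ only first-row cells, so that (a) and (b) can be applied independently --- but no actual computation is involved.
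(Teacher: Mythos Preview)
Your argument is correct. The reduction to $[\sfS_2,M]=0$ with $M=\sfS_{13}\sfA\sfS_{13}$ is clean, and your proof of that commutation is sound; one minor simplification is that you can take the same $\pi$ for every $T$, namely $\pi=\bar\rho$ itself (extended by the identity on $\{j+1,\dots,n-j\}$), which avoids the $T$-dependent choice.

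The paper proceeds differently. Instead of expanding $\sfA$ and reindexing over subsets, it introduces the ``diagonal'' operator $\sfX=\sum_{\pi\in S_j}V_\pi^{(1)}\otimes V_\pi^{(2)}$ acting simultaneously on parts~$1$ and~$2$, and observes three one-line facts: $\sfS_2\sfS_{13}=\sfX\sfS_{13}$ and $\sfS_{13}\sfX=\sfS_{13}\sfS_2$ (absorption of $V_\pi^{(1)}$ into $\sfS_{13}$), together with $[\sfX,\sfA]=0$ (because the diagonal action permutes the column transpositions $t_i\mapsto t_{\pi(i)}$). Chaining these gives the lemma in one displayed line. Your combinatorial step --- conjugating $t_T$ by $\rho$ on part~$2$ and then by a matching permutation on part~$1$ --- is exactly the content of $[\sfX,\sfA]=0$ unpacked termwise, so the two proofs rest on the same underlying symmetry; the paper's packaging via $\sfX$ is more concise, while your route is more explicit about where each support condition is used.
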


\begin{proof}
We will prove first equality (the second follows analogously).\\ Let $\sfX=\sum_{\pi}V_{\pi}^{(1)}\otimes V_{\pi}^{(2)}$, then we have
\be
\sfS_2\sfS_{13}=\sfX\sfS_{13}=\sfS_{13}\sfX.
\ee
Which follow from the fact $V_{\pi}^{(2)}\sfS_2=\sfS_2$ and $\sfS_2\sfS_{13}=\sfS_{13}\sfS_{2}$. Moreover $\sfX\sfA=\sfA\sfX$. Putting these properties together, we get
\be
\sfS_{13}\sfS_2\sfA\sfS_2\sfS_{13}=\sfS_{13}\sfS_2\sfA\sfX\sfS_{13}=
\sfS_2\sfS_{13}\sfX\sfA\sfS_{13}=\sfS_2^2\sfS_{13}\sfA\sfS_{13}.
\ee
\end{proof}

\begin{lemma}
\label{lem:dist}
For a basis vector $e_{i}$ in $H_{l}^{(n)}$ the number of basis vectors
whose Hamming distance to $e_{i}$ is equal $2k$ is equal to
\be
\left(
\begin{array}{c}
n-l \\
k%
\end{array}%
\right) \left(
\begin{array}{c}
l \\
k%
\end{array}%
\right) ,\quad k=0,1,...,l
\ee%
and it does not depend on the basis vector $e_{i}$.
\end{lemma}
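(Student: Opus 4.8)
The plan is to prove this by a direct combinatorial count. Fix a basis vector $e_i$ of $H_l^{(n)}$, which we identify with a binary string of length $n$ having exactly $l$ ones and $n-l$ zeros. First I would observe that any other basis vector $e_j \in H_l^{(n)}$ also has exactly $l$ ones, so the set of positions in which $e_i$ and $e_j$ disagree decomposes into two disjoint blocks: the positions carrying a $1$ in $e_i$ and a $0$ in $e_j$, and the positions carrying a $0$ in $e_i$ and a $1$ in $e_j$. Since the total weight is preserved at $l$, these two blocks must have equal cardinality, say $k$; hence $d(e_i,e_j)=2k$, which is exactly the parity statement of Lemma~\ref{lem:A-prop}(i). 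Conversely, any choice of such a pair of blocks produces a legitimate $e_j \in H_l^{(n)}$ at Hamming distance $2k$ from $e_i$.

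Next I would count, for fixed $k$, the number of such $e_j$: one selects which $k$ of the $l$ one-positions of $e_i$ are flipped to $0$, and independently which $k$ of the $n-l$ zero-positions are flipped to $1$. This gives $\binom{l}{k}$ choices for the first selection and $\binom{n-l}{k}$ for the second, so the total is $\binom{n-l}{k}\binom{l}{k}$. The constraint $k \le l$ (and also the implicit $k \le n-l$) is built in automatically, since the relevant binomial coefficients vanish otherwise, consistent with Lemma~\ref{lem:A-prop}(i).

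Finally, independence of the choice of $e_i$ is immediate: the quantity $\binom{n-l}{k}\binom{l}{k}$ depends only on the numbers $l$ and $n-l$ of ones and zeros, which are the same for every basis vector of $H_l^{(n)}$, and not on the particular placement of the ones within the string. There is no genuine obstacle in this proof; the only step deserving an explicit line is the remark that a weight-preserving difference forces equally many $1\to 0$ and $0\to 1$ flips, which is precisely the argument already used for Lemma~\ref{lem:A-prop}(i).
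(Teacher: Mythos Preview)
Your proof is correct and follows essentially the same approach as the paper: both argue that a basis vector at Hamming distance $2k$ from $e_i$ is obtained by choosing $k$ of the $l$ one-positions of $e_i$ to flip to $0$ and $k$ of the $n-l$ zero-positions to flip to $1$, giving $\binom{l}{k}\binom{n-l}{k}$ choices independent of $e_i$. The paper phrases the first choice as selecting the $l-k$ common ones rather than the $k$ flipped ones, but since $\binom{l}{l-k}=\binom{l}{k}$ this is the same count.
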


\begin{proof}
A basis vector, is at the Hamming distance $2k$ to the vector $e_{i}$ only
if it has $l-k$ $1^{\prime }s$ in common with $e_{i}$ (it means that these $%
l-k$ \ $1^{\prime }s$ have the same position in both vectors) while its
remaining $k$ \ $1^{\prime }s$ \ are in the positions where in $e_{i}$ are $%
0^{\prime }s$. Such common $l-k$ $1^{\prime }s$ of may be chosen in 
${l \choose l-k}$ ways, whereas its remaining $k$ \ $1^{\prime }s$ may be chosen in 
${n-l \choose k}$ ways where $n-l$ is the number of $0^{\prime }s$ and these choices
are independent.
\end{proof}

\section{Acknowledgment}
M. H. would like to thank Aram Harrow for discussion. 
M. S. is supported by the International PhD Project "Physics of future quantum-based information technologies": grant MPD/2009-3/4 from Foundation for Polish Science.
A. G., M. H. and M.S. are supported by the Polish Ministry of Science 
and Higher Education grant N N202 231937. M.H. is also supported by EC IP Q-ESSENCE. 
Part of this work was done in National Quantum Information Centre of Gda\'nsk.


\begin{thebibliography}{20}
\expandafter\ifx\csname natexlab\endcsname\relax\def\natexlab#1{#1}\fi
\expandafter\ifx\csname bibnamefont\endcsname\relax
  \def\bibnamefont#1{#1}\fi
\expandafter\ifx\csname bibfnamefont\endcsname\relax
  \def\bibfnamefont#1{#1}\fi
\expandafter\ifx\csname citenamefont\endcsname\relax
  \def\citenamefont#1{#1}\fi
\expandafter\ifx\csname url\endcsname\relax
  \def\url#1{\texttt{#1}}\fi
\expandafter\ifx\csname urlprefix\endcsname\relax\def\urlprefix{URL }\fi
\providecommand{\bibinfo}[2]{#2}
\providecommand{\eprint}[2][]{\url{#2}}

\bibitem[{\citenamefont{Bennett et~al.}(1993)\citenamefont{Bennett, Brassard,
  Crepeau, Jozsa, Peres, and Wootters}}]{Bennett5}
\bibinfo{author}{\bibfnamefont{C.~H.} \bibnamefont{Bennett}},
  \bibinfo{author}{\bibfnamefont{G.}~\bibnamefont{Brassard}},
  \bibinfo{author}{\bibfnamefont{C.}~\bibnamefont{Crepeau}},
  \bibinfo{author}{\bibfnamefont{R.}~\bibnamefont{Jozsa}},
  \bibinfo{author}{\bibfnamefont{A.}~\bibnamefont{Peres}}, \bibnamefont{and}
  \bibinfo{author}{\bibfnamefont{W.~K.} \bibnamefont{Wootters}},
  \bibinfo{journal}{Phys. Rev. Lett.} \textbf{\bibinfo{volume}{70}},
  \bibinfo{pages}{1895} (\bibinfo{year}{1993}).

\bibitem[{\citenamefont{Bennett and Wiesner}(1992)}]{Bennett6}
\bibinfo{author}{\bibfnamefont{C.~H.} \bibnamefont{Bennett}} \bibnamefont{and}
  \bibinfo{author}{\bibfnamefont{S.~J.} \bibnamefont{Wiesner}},
  \bibinfo{journal}{Phys. Rev. Lett.} \textbf{\bibinfo{volume}{69}},
  \bibinfo{pages}{2881} (\bibinfo{year}{1992}).

\bibitem[{\citenamefont{Ekert}(1991)}]{Ekert1}
\bibinfo{author}{\bibfnamefont{A.~K.} \bibnamefont{Ekert}},
  \bibinfo{journal}{Phys. Rev. Lett.} \textbf{\bibinfo{volume}{67}},
  \bibinfo{pages}{661} (\bibinfo{year}{1991}).

\bibitem[{\citenamefont{Bennett
  et~al.}(1996{\natexlab{a}})\citenamefont{Bennett, Brassard, Popescu,
  Schumacher, Smolin, and Wootters}}]{Bennett1}
\bibinfo{author}{\bibfnamefont{C.~H.} \bibnamefont{Bennett}},
  \bibinfo{author}{\bibfnamefont{G.}~\bibnamefont{Brassard}},
  \bibinfo{author}{\bibfnamefont{S.}~\bibnamefont{Popescu}},
  \bibinfo{author}{\bibfnamefont{B.}~\bibnamefont{Schumacher}},
  \bibinfo{author}{\bibfnamefont{J.~A.} \bibnamefont{Smolin}},
  \bibnamefont{and} \bibinfo{author}{\bibfnamefont{W.~K.}
  \bibnamefont{Wootters}}, \bibinfo{journal}{Phys. Rev. Lett.}
  \textbf{\bibinfo{volume}{76}}, \bibinfo{pages}{722}
  (\bibinfo{year}{1996}{\natexlab{a}}).

\bibitem[{\citenamefont{Bennett
  et~al.}(1996{\natexlab{b}})\citenamefont{Bennett, DiVincenzo, Smolin, and
  Wootters}}]{Bennett3}
\bibinfo{author}{\bibfnamefont{C.~H.} \bibnamefont{Bennett}},
  \bibinfo{author}{\bibfnamefont{D.~P.} \bibnamefont{DiVincenzo}},
  \bibinfo{author}{\bibfnamefont{J.~A.} \bibnamefont{Smolin}},
  \bibnamefont{and} \bibinfo{author}{\bibfnamefont{W.~K.}
  \bibnamefont{Wootters}}, \bibinfo{journal}{Phys. Rev. A}
  \textbf{\bibinfo{volume}{54}}, \bibinfo{pages}{3824}
  (\bibinfo{year}{1996}{\natexlab{b}}).

\bibitem[{\citenamefont{Deutsch et~al.}(1996)\citenamefont{Deutsch, Ekert,
  Jozsa, Macchiavello, Popescu, and Sanpera}}]{PhysRevLett.77.2818}
\bibinfo{author}{\bibfnamefont{D.}~\bibnamefont{Deutsch}},
  \bibinfo{author}{\bibfnamefont{A.}~\bibnamefont{Ekert}},
  \bibinfo{author}{\bibfnamefont{R.}~\bibnamefont{Jozsa}},
  \bibinfo{author}{\bibfnamefont{C.}~\bibnamefont{Macchiavello}},
  \bibinfo{author}{\bibfnamefont{S.}~\bibnamefont{Popescu}}, \bibnamefont{and}
  \bibinfo{author}{\bibfnamefont{A.}~\bibnamefont{Sanpera}},
  \bibinfo{journal}{Phys. Rev. Lett.} \textbf{\bibinfo{volume}{77}},
  \bibinfo{pages}{2818} (\bibinfo{year}{1996}).

\bibitem[{\citenamefont{D{\"u}r and Briegel}(2007)}]{Dur}
\bibinfo{author}{\bibfnamefont{W.}~\bibnamefont{D{\"u}r}} \bibnamefont{and}
  \bibinfo{author}{\bibfnamefont{H.~J.} \bibnamefont{Briegel}},
  \bibinfo{journal}{Rep. Prog. Phys.} \textbf{\bibinfo{volume}{70}},
  \bibinfo{pages}{1381} (\bibinfo{year}{2007}).

\bibitem[{\citenamefont{Horodecki et~al.}(1998)\citenamefont{Horodecki,
  Horodecki, and Horodecki}}]{Horodecki7}
\bibinfo{author}{\bibfnamefont{M.}~\bibnamefont{Horodecki}},
  \bibinfo{author}{\bibfnamefont{P.}~\bibnamefont{Horodecki}},
  \bibnamefont{and}
  \bibinfo{author}{\bibfnamefont{R.}~\bibnamefont{Horodecki}},
  \bibinfo{journal}{Phys. Rev. Lett.} \textbf{\bibinfo{volume}{80}},
  \bibinfo{pages}{5239} (\bibinfo{year}{1998}).

\bibitem[{\citenamefont{Rains}(1997)}]{Rains1}
\bibinfo{author}{\bibfnamefont{E.~M.} \bibnamefont{Rains}},
  \bibinfo{journal}{arXiv:quant-ph/9707002}  (\bibinfo{year}{1997}).

\bibitem[{\citenamefont{Eisert et~al.}(2000)\citenamefont{Eisert, Felbinger,
  Papadopoulos, Plenio, and Wilkens}}]{Eisert1}
\bibinfo{author}{\bibfnamefont{J.}~\bibnamefont{Eisert}},
  \bibinfo{author}{\bibfnamefont{T.}~\bibnamefont{Felbinger}},
  \bibinfo{author}{\bibfnamefont{P.}~\bibnamefont{Papadopoulos}},
  \bibinfo{author}{\bibfnamefont{M.~B.} \bibnamefont{Plenio}},
  \bibnamefont{and} \bibinfo{author}{\bibfnamefont{M.}~\bibnamefont{Wilkens}},
  \bibinfo{journal}{Phys. Rev. Lett.} \textbf{\bibinfo{volume}{84}},
  \bibinfo{pages}{1611} (\bibinfo{year}{2000}).

\bibitem[{\citenamefont{Chen and Yang}(2002)}]{Chen1}
\bibinfo{author}{\bibfnamefont{Y.-X.} \bibnamefont{Chen}} \bibnamefont{and}
  \bibinfo{author}{\bibfnamefont{D.}~\bibnamefont{Yang}},
  \bibinfo{journal}{arXiv:quant-ph/0204004v3}  (\bibinfo{year}{2002}).

\bibitem[{\citenamefont{Hamieh and Zaraket}(2003)}]{Hamieh1}
\bibinfo{author}{\bibfnamefont{S.}~\bibnamefont{Hamieh}} \bibnamefont{and}
  \bibinfo{author}{\bibfnamefont{H.}~\bibnamefont{Zaraket}},
  \bibinfo{journal}{J. Phys. A: Math. Gen.} \textbf{\bibinfo{volume}{36}},
  \bibinfo{pages}{L387} (\bibinfo{year}{2003}).

\bibitem[{\citenamefont{Hiroshima and Hayashi}(2004)}]{Hiroshima1}
\bibinfo{author}{\bibfnamefont{T.}~\bibnamefont{Hiroshima}} \bibnamefont{and}
  \bibinfo{author}{\bibfnamefont{M.}~\bibnamefont{Hayashi}},
  \bibinfo{journal}{Phys. Rev. A} \textbf{\bibinfo{volume}{70}},
  \bibinfo{pages}{030302} (\bibinfo{year}{2004}).

\bibitem[{\citenamefont{Czechlewski et~al.}(2009)\citenamefont{Czechlewski,
  Grudka, Ishizaka, and W\'ojcik}}]{PhysRevA.80.014303}
\bibinfo{author}{\bibfnamefont{M.}~\bibnamefont{Czechlewski}},
  \bibinfo{author}{\bibfnamefont{A.}~\bibnamefont{Grudka}},
  \bibinfo{author}{\bibfnamefont{S.}~\bibnamefont{Ishizaka}}, \bibnamefont{and}
  \bibinfo{author}{\bibfnamefont{A.}~\bibnamefont{W\'ojcik}},
  \bibinfo{journal}{Phys. Rev. A} \textbf{\bibinfo{volume}{80}},
  \bibinfo{pages}{014303} (\bibinfo{year}{2009}).

\bibitem[{\citenamefont{Bannai and Ito}(1984)}]{Bannai}
\bibinfo{author}{\bibfnamefont{E.}~\bibnamefont{Bannai}} \bibnamefont{and}
  \bibinfo{author}{\bibfnamefont{T.}~\bibnamefont{Ito}},
  \emph{\bibinfo{title}{Algebraic combinatorics I}}
  (\bibinfo{publisher}{Benjamin/Cummings Publishing Company},
  \bibinfo{year}{1984}).

\bibitem[{\citenamefont{Devetak and Winter}(2004)}]{Devetak1}
\bibinfo{author}{\bibfnamefont{I.}~\bibnamefont{Devetak}} \bibnamefont{and}
  \bibinfo{author}{\bibfnamefont{A.}~\bibnamefont{Winter}},
  \bibinfo{journal}{Phys. Rev. Lett.} \textbf{\bibinfo{volume}{93}},
  \bibinfo{pages}{080501} (\bibinfo{year}{2004}).

\bibitem[{\citenamefont{Devetak and Winter}(2005)}]{Devetak2}
\bibinfo{author}{\bibfnamefont{I.}~\bibnamefont{Devetak}} \bibnamefont{and}
  \bibinfo{author}{\bibfnamefont{A.}~\bibnamefont{Winter}},
  \bibinfo{journal}{Proc. R. Soc. Lond. A} \textbf{\bibinfo{volume}{461}},
  \bibinfo{pages}{207} (\bibinfo{year}{2005}).

\bibitem[{\citenamefont{Kostrykin}(2008)}]{Kostrikin}
\bibinfo{author}{\bibfnamefont{A.~I.} \bibnamefont{Kostrykin}},
  \emph{\bibinfo{title}{Wst\c{e}p do algebry}} (\bibinfo{publisher}{Wydawnictwo
  Naukowe PWN, Warszawa}, \bibinfo{year}{2008}).

\bibitem[{\citenamefont{Audenaert}()}]{Audenaert2006-notes}
\bibinfo{author}{\bibfnamefont{K.~M.~R.} \bibnamefont{Audenaert}},
  \emph{\bibinfo{title}{A digest on representation theory of the symmetric
  group}},
  \urlprefix\url{http://www.personal.rhul.ac.uk/usah/080/QITNotes_files/Irreps_v06.pdf}.

\bibitem[{\citenamefont{Fulton and Harris}(1991)}]{Fulton}
\bibinfo{author}{\bibfnamefont{W.}~\bibnamefont{Fulton}} \bibnamefont{and}
  \bibinfo{author}{\bibfnamefont{J.}~\bibnamefont{Harris}},
  \emph{\bibinfo{title}{Representation Theory - A First Course}}
  (\bibinfo{publisher}{Springer-Verlag, New York}, \bibinfo{year}{1991}).

\end{thebibliography}
\end{document}